\documentclass{article}

\usepackage[preprint]{neurips_2026}

\usepackage[utf8]{inputenc} % allow utf-8 input
\usepackage[T1]{fontenc}    % use 8-bit T1 fonts
\usepackage{url}            % simple URL typesetting
\usepackage{booktabs}       % professional-quality tables
\usepackage{amsfonts}       % blackboard math symbols
\usepackage{nicefrac}       % compact symbols for 1/2, etc.
\usepackage{microtype}      % microtypography
\usepackage{xcolor}         % colors

\usepackage[pagebackref,breaklinks,colorlinks,linkcolor={blue},citecolor={blue}]{hyperref}

\usepackage{algorithmic}
\usepackage[ruled, vlined, linesnumbered]{algorithm2e}
\SetKw{KwRet}{return}

\usepackage{amsmath}

 \usepackage{amsthm}

\usepackage{graphicx}

\usepackage{multirow}
\usepackage{subcaption}
\usepackage{float}
\usepackage{enumitem}

\theoremstyle{plain}
\newtheorem{theorem}{Theorem}[section]

\theoremstyle{definition}

\theoremstyle{remark}

\title{EpiFormer: Learning Antigen-Antibody Interactions for Epitope Prediction via Geometric Deep Learning}

\author{
  Mansoor Ahmed~\textsuperscript{1,2\thanks{Address correspondence to: \quad \texttt{mahmed76@student.gsu.edu}, \quad \texttt{mpatterson30@gsu.edu} }}  ,
  Huirong Chai~\textsuperscript{1},
  Haoxin Wang\textsuperscript{1},
  Hemanth Venkateswara\textsuperscript{1}, \\
  \hfill  \textbf{Murray Patterson}\textsuperscript{1\tiny{*}} \hfill  \null \vspace{5pt} \\
  \hfill \textsuperscript{1}Georgia State University, Atlanta, GA, USA \hfill \null \\
  \hfill \textsuperscript{2}Georgia Institute of Technology, Atlanta, GA, USA \hfill \null \\
}

\begin{document}

\maketitle

\begin{abstract}

Antibodies neutralize foreign antigens by binding to specific surface regions called epitopes. Computational epitope prediction is critical for understanding immune recognition and guiding antibody engineering. However, existing methods face three fundamental challenges: antibody-aware models encode each chain independently and combine them only at a late stage, failing to capture co-dependent structural features that define binding interfaces, whereas severe class imbalance and scarcity of known antibody-antigen complexes render standard training objectives ineffective. We propose \emph{EpiFormer}, a general encoder-decoder framework that addresses these challenges jointly. Our key design principle is interleaved cross-attention within GNN encoding layers, enabling bidirectional antigen-antibody information flow throughout representation learning rather than only at the output. This early-fusion principle is backbone-agnostic, providing consistent gains across GNN architectures from simple GCNs to equivariant models. We further show that sparsity-aware objectives are effective when paired with early-fusion architectures for the epitope prediction task. \emph{EpiFormer} improves over the previous best method by over 40\% in F1 score on standard benchmarks, demonstrating generalizability and cross-dataset transferability. Notably, EpiFormer discovers known biological principles as emergent behaviors of end-to-end training, where the learned cross-attention gates favor antigen-to-antibody information flow, consistent with the asymmetric roles of the two chains at the binding interface, and the model's preference for geometric over evolutionary features aligns with the established finding that epitope residues are not evolutionarily conserved. The source code is available at: \url{https://github.com/mansoor181/epiformer.git}

\end{abstract}

\section{Introduction}

Antibodies are Y-shaped proteins that recognize and neutralize foreign substances by binding to specific surface regions on antigens called epitopes. Accurate identification of epitopes is critical for therapeutic antibody design, vaccine development, and understanding immune recognition~\citep{norman2020computational,joubbi2024antibody}. While traditional experimental approaches for epitope mapping are time-consuming and expensive, computational methods offer the potential for rapid screening of candidate therapeutics~\citep{hummer2022advances}. 

Computational epitope prediction methods can be categorized as \emph{antibody-agnostic} or \emph{antibody-aware}. Antibody-agnostic methods, including DiscoTope3~\citep{hoie2024discotope}, EpiGraph~\citep{choi2024b}, and GraphBepi~\citep{zeng2023graphbepi}, predict binding sites as intrinsic properties of the antigen without considering the specific antibody. However, epitopes are fundamentally antibody-specific, as different antibodies targeting the same antigen bind to different surface regions~\citep{liu2024asep}. Antibody-aware methods such as PECAN~\citep{pittala2020learning}, MIPE~\citep{wang2024improving}, and WALLE~\citep{liu2024asep} address this by predictions on the antibody structure.

However, antibody-aware epitope prediction faces three fundamental challenges. First, existing methods encode antigen and antibody independently and combine them only at a late stage (\emph{late fusion}), failing to capture co-dependent structural features that define binding interfaces. Antibody CDR loops adapt their conformation to the epitope surface, and epitope residues similarly adjust to accommodate the paratope. Encoding each chain without knowledge of its counterpart misses these interaction-specific geometric signatures. Second, epitope residues constitute fewer than 5\% of antigen surface residues, creating severe class imbalance that renders standard objectives ineffective. Third, the scarcity of known antibody-antigen complexes limits the data available for training.

We propose \emph{EpiFormer}, a general encoder-decoder framework that addresses these challenges jointly. Our key design principle is \emph{interleaved bidirectional cross-attention} within GNN encoding layers, enabling antigen-antibody information flow throughout representation learning rather than only at the output. This early-fusion principle is backbone-agnostic, providing consistent gains across GNN architectures from simple GCNs to equivariant models. We further develop sparsity-aware training objectives, including Dice loss, per-graph count regularization, bipartite edge prediction, and distance supervision, that are effective when paired with early-fusion architectures. Within this framework, we propose \emph{EGNN-R}, a multi-relational E(3)-equivariant GNN where each edge relation type has its own learned transformation, providing additional gains.

Our main contributions are:

\begin{enumerate}
    \item A \textbf{backbone-agnostic early-fusion framework} with interleaved bidirectional cross-attention within GNN encoding layers,  allowing the antigen and antibody representations to inform each other at every layer rather than only in the decoder.

    \item An \textbf{architecture-loss co-design principle}, where sparsity-aware objectives are developed to effectively address class imbalance and data scarcity for the epitope prediction task. 

    \item We show that compact geometric features outperform high-dimensional protein language model embeddings for epitope prediction, and that the learned cross-attention gates expose a consistent antigen-antibody asymmetry that aligns with the distinct roles of the two chains.
\end{enumerate}

\section{Related Work}\label{sc:related-work}

\textbf{Antibody-agnostic methods} predict epitopes without considering which antibody is binding. Early epitope prediction methods relied on sequence features such as hydrophilicity and accessibility~\citep{jespersen2017bepipred}. Structure-based approaches, such as DiscoTope3~\citep{hoie2024discotope}, EpiGraph~\citep{choi2024b}, and GraphBepi~\citep{zeng2023graphbepi},  improved upon these by incorporating 3D information. 

\textbf{Antibody-aware methods} condition predictions on the binding counterpart. PECAN~\citep{pittala2020learning} encodes antibody and antigen with separate graph convolutions and combines them via bilinear attention. MIPE~\citep{wang2024improving} uses multi-modal contrastive learning to align sequence and structure representations, applying multi-head attention for late-stage fusion. WALLE~\citep{liu2024asep} formulates epitope prediction as bipartite link prediction between antibody and antigen residue graphs. EpiScan~\citep{wang2024episcan} incorporates CDR masking into sequence embeddings. All of these methods share a \emph{late fusion} architecture in which antigen and antibody are encoded independently before cross-chain attention is applied.

% \paragraph{Geometric Deep Learning for Proteins.}
\textbf{E(3)-equivariant neural networks} preserve geometric properties under rotations and translations, making them well-suited for molecular modeling. EGNN~\citep{satorras2021egnn} achieves equivariance by updating coordinates along displacement vectors scaled by learned invariant functions. Equiformer~\citep{liao2022equiformer} and EquiformerV2~\citep{liao2023equiformerv2} extend this with SE(3)/SO(2) equivariant attention using spherical harmonics. GearNet~\citep{zhang2022gearnet} introduced multi-relational protein graphs with seven edge types encoding sequential and spatial relationships, but uses a shared message function across relations. Surface-based methods, including MaSIF~\citep{gainza2020deciphering} and AtomSurf~\citep{mallet2023atomsurf}, operate on molecular surface representations with geodesic or spectral convolutions.

% \paragraph{Protein-Ligand and Protein-Protein Methods.}
Several methods from adjacent domains share architectural similarities with \emph{EpiFormer}. \textbf{CheapNet}~\citep{limcheapnet} predicts protein-ligand binding affinity using geometry-informed graph networks (GIGN) for each entity, followed by cross-attention. This is similar to our use of separate encoders with cross-attention, but CheapNet applies cross-attention only at the final layer (late fusion), whereas \emph{EpiFormer} interleaves it at every encoder block. \textbf{EquiPocket}~\citep{zhang2023equipocket} uses E(3)-equivariant message passing for ligand binding site prediction but is antibody-agnostic and operates at atom-level rather than the residue-level. \textbf{DiffDock}~\citep{corso2022diffdock} employs SE(3)-equivariant score networks for molecular docking, demonstrating the value of equivariant architectures for protein-ligand interactions. \textbf{Boltz-1/2}~\citep{wohlwend2025boltz1,passaro2025boltz2} use Pairformer architectures with triangle attention for structure prediction; Boltz-2 adds cross-attention between binding counterparts, but applies it to pairwise representations rather than node embeddings.

\textbf{EpiFormer} differs from these methods by addressing their limitations for epitope prediction. Like CheapNet, we use separate encoders with cross-attention, but apply attention at every layer rather than only at the output. Like GearNet, we use multi-relational graphs, but learn separate transformations per relation type. Similar to EGNN and EquiPocket, we maintain E(3)-equivariance, but extend it to the multi-relational setting. Recently, ATProt~\citep{gao2024towards} applies cross-attention between antibody and antigen chains, but only once after all encoding layers are complete (late fusion). DiffDock-PP~\citep{ketata2023diffdockpp} uses SE(3)-equivariant tensor product convolutions for rigid-body docking but does not perform residue-level binding site prediction. To our knowledge, no prior method interleaves cross-attention within structural GNN encoding layers for protein-protein interaction prediction. While prior antibody-aware methods (PECAN, MIPE, ATProt) encode chains independently before late fusion, \emph{EpiFormer} enables cross-chain information flow throughout the encoding process.

\section{Methods}\label{sc:methods}

\subsection{Preliminaries }

\paragraph{Graph construction.}
We build two independent residue graphs $\mathcal{G}_{\text{ag}} = (\mathcal{V}_{\text{ag}}, \mathcal{E}_{\text{ag}}, \mathcal{R})$ and $\mathcal{G}_{\text{ab}} = (\mathcal{V}_{\text{ab}}, \mathcal{E}_{\text{ab}}, \mathcal{R})$ from the unbound antigen and antibody structures.
Vertex $v_i \in \mathcal{V}$ represents residue $i$, centered on C$_\alpha$ at coordinate $\mathbf{x}_i\in\mathbb{R}^3$ ($|\mathcal{V}_{\mathrm{ag}}| = n$, $|\mathcal{V}_{\mathrm{ab}}| = m$). Each node carries a geometric feature vector $\mathbf{h}_i\in\mathbb{R}^{d_h}$ (encoding residue type, backbone geometry, physicochemical properties) and a coordinate matrix $\mathbf{X}_i\in\mathbb{R}^{3\times 4}$ for four backbone atoms $\xi=\{\text{N}, \text{C}_{\alpha}, \text{C}_{\beta}, \text{O}\}$. Each edge $e_{i,j}$ carries a feature vector $\mathbf{f}_{i,j}\in\mathbb{R}^{d_f}$ (distances, angles) and a relation tuple $\mathbf{r}_{i,j}\subseteq\mathcal{R}$.
The relation set $\mathcal{R} = \{\rho_1, \rho_2, \rho_3, \rho_4\}$ captures sequential bonds ($\rho_1$), short-range coupling ($\rho_2$), $K$-nearest neighbors ($\rho_3$), and medium-range contacts within 8\,\text{\AA}~($\rho_4$). We denote a residue graph as $\mathcal{G} = (\mathbf{H}, \mathbf{X}, \mathbf{F}, \mathbf{R})$; see Appendix~\ref{appendix:graph-construction} for details.

\paragraph{Problem Formulation.}
We formulate epitope prediction as binary node classification. A residue \( v \in \mathcal{V}_{\text{ag}} \) is labeled as an epitope if it is within 4.5\AA\ of any residue in \( \mathcal{V}_{\text{ab}} \), and non-epitope otherwise. The classifier $f: v_{\text{ag}} \rightarrow \{0, 1\}$ predicts $\hat{y}_{{\text{ag}}} = f(v_{\text{ag}}; \mathcal{G}_{\text{ag}}, \mathcal{G}_{\text{ab}})$.

% \mansoor{ The Bipartite graph link prediction seems not to be a task in the paper, but it is formulated in the 3.1. }

% \underline{\emph{Bipartite graph link prediction:}} This task predicts the bipartite adjacency matrix ${\hat{\mathcal{E}}}_{\text{bg}}$ between antibody and antigen in the bipartite graph $\mathcal{G}_{\text{bg}} = (\mathcal{V}_{\text{ag}} \cup \mathcal{V}_{\text{ab}}, \mathcal{E}_{\text{bg}})$, where 
% $\mathcal{V}_{\text{ag}}$ and $\mathcal{V}_{\text{ab}}$ are disjoint vertex sets, and 
% $\mathcal{E}_{\text{bg}} \subseteq \mathcal{V}_{\text{ag}} \times \mathcal{V}_{\text{ab}} \in \{{0,1\}}^{n\times m}$ denotes inter-molecular contacts between antigen and antibody. 
% An edge \( e_{\text{bg}} \in \mathcal{E}_{\text{bg}} \) is a contact (labeled as $1$) if the corresponding residues ($v_{\text{ag}},v_{\text{ab}}$) are within 4.5\AA\ of each other and $0$ otherwise. 
% The edge classifier \( g: e_{\text{bg}} \rightarrow \{0, 1\} \) is defined as:
% \begin{equation}\label{eq:edge-pred}
% {\hat{\mathcal{E}}}_{\text{bg}} \quad = \quad g(e_{\text{bg}}; \mathcal{G}_{\text{bg}}) \quad = \quad
% \begin{cases}
% 1 & \text{if } e_{\text{bg}} \text{ is a contact}, \\
% 0 & \text{otherwise}.
% \end{cases}
% \end{equation}

\paragraph{Equivariance.}
Since molecular properties are unchanged under rigid body transformations, geometric GNNs incorporate E(3)-equivariance as an inductive bias~\citep{jiao2023energy}. For coordinates $\mathbf{X} \in \mathbb{R}^{3 \times m}$ and scalar features $\mathbf{h} \in \mathbb{R}^{d}$, an E(3)-equivariant function satisfies $f(g \cdot \mathbf{X}, \mathbf{h}) = g \cdot f(\mathbf{X}, \mathbf{h})$ for all $g \in \mathrm{E}(3)$, where group actions are translations ($g \cdot \mathbf{X} = \mathbf{X} + \mathbf{b}$) or rotations/reflections ($g \cdot \mathbf{X} = \mathbf{O} \mathbf{X}$, $\mathbf{O} \in \mathrm{O}(3)$). E(3)-invariant functions instead satisfy $f(g \cdot \mathbf{X}, \mathbf{h}) = f(\mathbf{X}, \mathbf{h})$.

\subsection{EpiFormer}

In this section, we present the architecture of \emph{EpiFormer}, an encoder-decoder framework for antibody-antigen binding-site prediction.
The model receives two disjoint multi-relational residue graphs, $\mathcal{G}_{\text{ag}}$ and $\mathcal{G}_{\text{ab}}$, processes them with independent E(3)-equivariant encoders that produce residue-level embeddings, and passes these embeddings to a cross-attention decoder that reconstructs the bipartite adjacency matrix $\hat{\mathcal{E}}_{\text{bg}} \in \{{0,1\}}^{n\times m}$.
The overall workflow is in Figure~\ref{fig:epiformer} and the algorithm in Appendix~\ref{alg:epiformer-pseudocode}.

\begin{figure*}[t]
    \centering
    \includegraphics[width=1\textwidth]{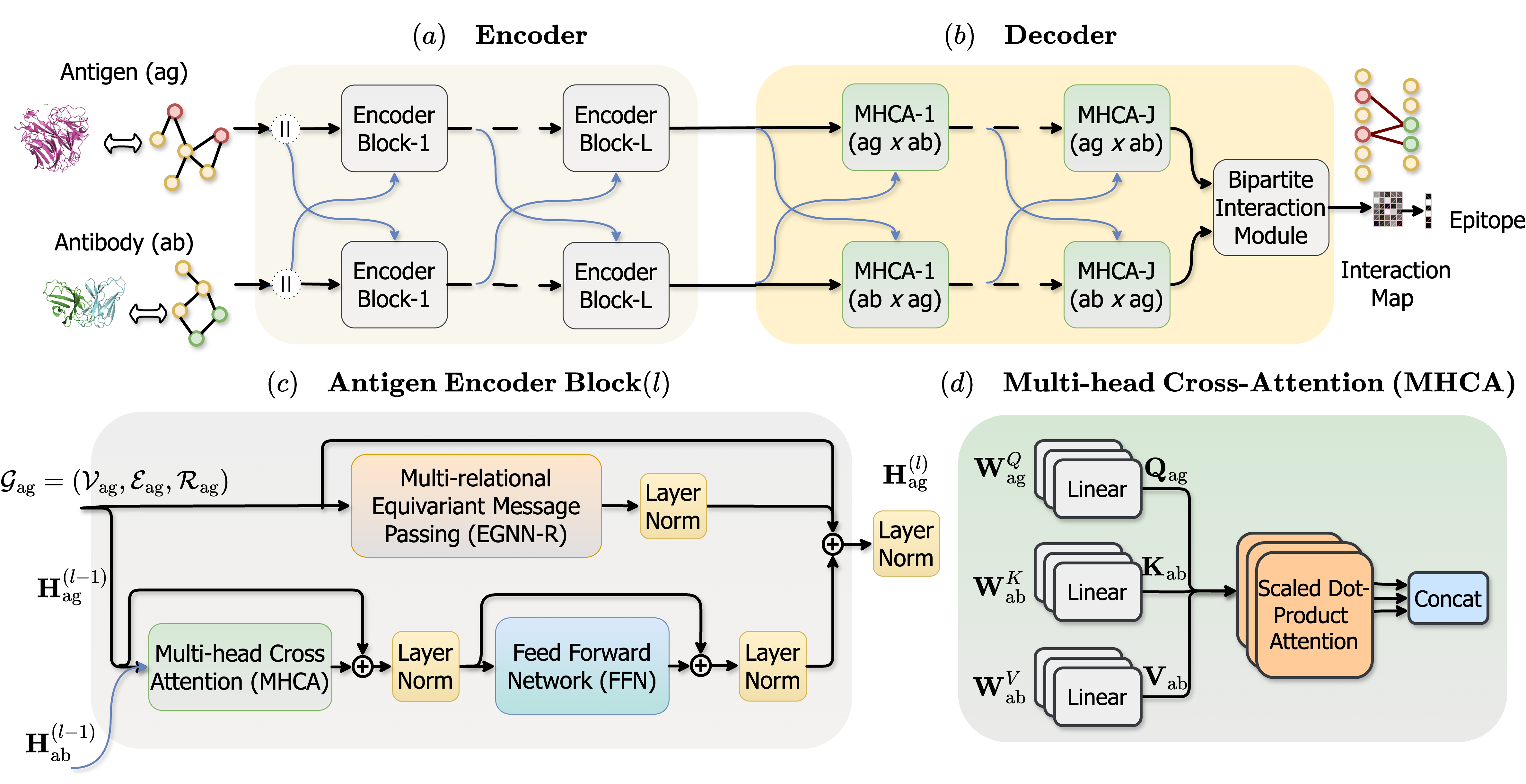}

    \caption{Overview of \emph{EpiFormer}. (a) Parallel EGNN-R encoders with interleaved cross-attention. (b) Bidirectional cross-attention decoder producing the interaction map. (c) Encoder block schematic (``$\oplus$" = addition). (d) Multi-head cross-attention between antigen and antibody residues.}
    \label{fig:epiformer}
\end{figure*}

% \textcolor{purple}{H: Rephrase. Some parts are not clear} The overall schematic of \emph{EpiFormer} for antibody-aware epitope prediction. The model takes an antigen multi-relational graph $\mathcal{G}_{\text{ag}}=(\mathcal{V}_{\text{ag}}, \mathcal{E}_{\text{ag}}, \mathcal{R})$ and an antibody multi-relational graph $\mathcal{G}_{\text{ab}}=(\mathcal{V}_{\text{ab}}, \mathcal{E}_{\text{ab}}, \mathcal{R})$, passes them two independent encoder modules \textbf{(a)}, each with its own weights $\Theta_{\text{ag}}$ and $\Theta_{\text{ab}}$. The input antigen and antibody attend to each other using cross attention in the encoder modules and finally pass through a bi-directional multi-head cross attention decoder \textbf{(b)} $\Theta_{\text{attn}}$ to produce the interaction map as output, represented as a bipartite graph. MHCA in the decoder represents multi-head cross attention, while FFN represents a feed-forward network. \textbf{(c)} The schematic of the architecture of an \emph{EpiFormer} encoder block. In this figure, ``$\oplus$" denotes addition. \textbf{(d)} The schematic of the multi-head cross attention (MHCA) between the antigen and antibody residues (ab \textbf{x} ag or vice versa). The queries and keys are swapped when one attends to the other and are analogous to each other.

% \textcolor{red}{mansoor: remove parameter sets from caption -- modify figure to have input multi-relational graph -- also denote incoming G for layer l-1 and output as l in the superscript -- title of (c) as encoder block (L) -- (also maybe a figure for EGNN-R if we have time)}

\textbf{Encoder.}\label{sec:encoder}
\emph{EpiFormer} contains two parallel encoders with no shared parameters, one dedicated to the antigen chain and the other to the antibody chain (Figure~\ref{fig:epiformer}a). Each node encodes Cartesian coordinates $\mathbf{x}_i\in\mathbb{R}^3$ and geometric descriptors $\mathbf{h}^{\text{geo}}_i \in \mathbb{R}^{d_{\text{geo}}}$, which are projected to the working width $d_h$:
\begin{equation}\label{eq:node-init}
\mathbf{h}_i^{0} = \mathbf{W}_{\text{geo}}\, \mathbf{h}^{\text{geo}}_i \in \mathbb{R}^{d_h}.
\end{equation}
The vector $\mathbf{h}_i^{0}$ serves as the initial node state for the first \emph{EpiFormer} encoder block. The schematic of an \emph{EpiFormer} block is shown in Figure~\ref{fig:epiformer}~\textbf{(c)}. Let $\mathbf{H}_{\text{ag}}^{\ell} \in \mathbb{R}^{n \times d_h}$ and $\mathbf{H}_{\text{ab}}^{\ell} \in \mathbb{R}^{m \times d_h}$ be the current embeddings, which are passed in parallel to their EGNN-R and MHCA layers. 
% The pseudocode of the encoder is presented in Algorithm~\ref{alg:epiformer_encoder}.

% \textcolor{purple}{H: Label all equations}:

\underline{\emph{\textbf{EGNN-R layer}}:}\label{sec:egnnr} We develop a relation-aware variant of EGNN~\citep{satorras2021egnn} to propagate structural and geometric information within each chain. Unlike shared-weight approaches, EGNN-R uses per-relation message functions because sequential edges (covalent constraints) and spatial edges (non-covalent contacts) represent fundamentally different physics. Let $\mathbf{h}_i^{\ell}\in\mathbb{R}^{d_h}$, $\mathbf{x}_i^{\ell}\in\mathbb{R}^3$ be the feature and coordinate of residue $i$ at layer $\ell$, $d_{ij}=\lVert\mathbf{x}_i^{\ell}-\mathbf{x}_j^{\ell}\rVert_2^2$ the squared distance, and $\boldsymbol{\delta}_{ij}=\mathbf{x}_i^{\ell}-\mathbf{x}_j^{\ell}$ the displacement:
% \murray{ could put mij (3) and sij(4) on same line to save space}

% \begin{align}
%     m_{ij}^{\rho}      &= \phi_m^{\rho} \bigl(\mathbf{h}_i^{\ell}, \mathbf{h}_j^{\ell}, \gamma(d_{ij}), \mathbf{f}_{ij}\bigr),  \qquad
%    & \mathbf{h}_i^{(\ell+1)} = \mathbf{h}_i^{\ell} + \phi_h \Bigl(\mathbf{h}_i^{\ell}, \sum_{j\in\mathcal{N}(i)}\;\sum_{\rho\in\mathbf{r}_{ij}} m_{ij}^{\rho}\Bigr), \label{eq:node_update} \\
%     s_{ij}^{\rho}      &= \phi_x^{\rho} \bigl(m_{ij}^{\rho}\bigr),  \qquad
%     & \mathbf{x}_i^{(\ell+1)} = \mathbf{x}_i^{\ell} + \sum_{j\in\mathcal{N}(i)}\;\sum_{\rho\in\mathbf{r}_{ij}}
%                 \frac{\boldsymbol{\delta}_{ij}}{\sqrt{d_{ij}+\varepsilon}}\;s_{ij}^{\rho}. \label{eq:coord_final}
% \end{align}

\begin{align}
    m_{ij}^{\rho}      &= \phi_m^{\rho} \bigl(\mathbf{h}_i^{\ell}, \mathbf{h}_j^{\ell}, \gamma(d_{ij}), \mathbf{f}_{ij}\bigr), \qquad
    s_{ij}^{\rho}      = \phi_x^{\rho} \bigl(m_{ij}^{\rho}\bigr), \label{eq:msg} \\
    \mathbf{h}_i^{(\ell+1)} &= \mathbf{h}_i^{\ell} + \phi_h \Bigl(\mathbf{h}_i^{\ell}, {\textstyle\sum_{j\in\mathcal{N}(i)}\sum_{\rho\in\mathbf{r}_{ij}}} m_{ij}^{\rho}\Bigr), \quad
     \mathbf{x}_i^{(\ell+1)} = \mathbf{x}_i^{\ell} + {\textstyle\sum_{j\in\mathcal{N}(i)}\sum_{\rho\in\mathbf{r}_{ij}}}
                \frac{\boldsymbol{\delta}_{ij}}{\sqrt{d_{ij}+\varepsilon}}\;s_{ij}^{\rho}. \label{eq:updates}
\end{align}

% \murray{is $\varepsilon$ just an arbitrarily small value to avoid divide-by-zero error?  If so, no need to mention, but if not, maybe mention it} 
Here, $\gamma(\cdot)$ is a 16-term radial basis function, $\mathbf{f}_{ij}$ the edge attribute vector, and each $\phi_{\{m,x\}}^{\rho}$ a two-layer MLP with relation-specific parameters ($\varepsilon=10^{-8}$). We have four message MLPs $\phi_m^{\rho}: \mathbb{R}^{2d_h + d_f + 16} \to \mathbb{R}^{d_q}$, four coordinate MLPs $\phi_x^{\rho}: \mathbb{R}^{d_q} \to \mathbb{R}^3$, and a shared node update MLP $\phi_h: \mathbb{R}^{d_h + d_q} \to \mathbb{R}^{d_h}$. With residual connections and layer normalization:
\begin{align}
\mathbf{H}_{\mathrm{ag}}^{\mathrm{intra}}=\{\,W_{\mathrm{ag}}^{\ell}\mathbf{h}_i^{\ell}\mid v_i\in\mathcal{V}_{\mathrm{ag}}\}, \qquad
\mathbf{H}_{\mathrm{ab}}^{\mathrm{intra}}=\{\,W_{\mathrm{ab}}^{\ell}\mathbf{h}_j^{\ell}\mid v_j\in\mathcal{V}_{\mathrm{ab}}\},
\end{align}

% \textcolor{red}{mansoor: the framework is E(3)‑equivariant, yet only the EGNN‑R layers update coordinates; cross‑attention consumes invariant features. The outputs (contact map, node probabilities) should be invariant, not equivariant. EGNN is equivariant is important because we are using coordinates, the origin changes but the model still work, state that we tested it out empirically, and the result reflects the equivariant}

where $W^{\ell}$ are per-layer trainable parameters. The layer is $\mathrm{E}(3)$-equivariant by construction since only the displacement $\boldsymbol{\delta}_{ij}$ enters the coordinate update, while cross-attention operates on invariant features (proof in Appendix~\ref{thm:egnn_equivariance}).

\underline{\emph{\textbf{MHCA layer with feed-forward network}}:} In parallel to geometric message passing, each encoder block applies bidirectional multi-head cross-attention (MHCA)~\citep{vaswani2017attention} to enable inter-chain communication. %The output from the MHCA layer $\widetilde{\mathbf{H}}$ passes through layer normalization, a two-layer feed-forward (FFN) network, and a normalization layer with a residual connection to the MHCA output. %
The MHCA mechanism shown in Figure~\ref{fig:epiformer} \textbf{(d)} produces cross-chain context representations $\widetilde{\mathbf{H}}_{\mathrm{ag}}$ and $\widetilde{\mathbf{H}}_{\mathrm{ab}}$.
A learnable scalar gate $\alpha$ balances intra-chain geometry with cross-chain context:
\begin{align}
    \mathbf{H}_{c}^{(\ell+1)} = \mathbf{H}_{c}^{\ell} + \mathbf{H}_{c}^{\mathrm{intra}} + \alpha_{c}\,\operatorname{FFN}(\widetilde{\mathbf{H}}_{c}), \quad c\in\{\mathrm{ag},\mathrm{ab}\},
\end{align}
where $\alpha_{\mathrm{ag}},\alpha_{\mathrm{ab}}\in\mathbb{R}^+$ are learnable parameters, $\widetilde{\mathbf{H}} = \text{MHCA}(\mathbf{H})$, and FFN is a two-layer Feed Forward Network. The MHCA is detailed in Appendix~\ref{sec:mhca}.
%This repeated stack of parallel EGNN-R and MHCA layers produces residue representations after $L$ layers that simultaneously capture fine-grained geometric detail, rich sequence semantics, and long-range cross-chain interaction patterns, thereby providing the decoder with a comprehensive description of the antibody-antigen complex.%

% \mansoor{The model architecture contains the Equivarant network for monomers and uses cross attention for interaction between monomers. Many papers share a similar idea, such as DiffDock, DynamicBind, and NeuralMD. What do the authors think are the biggest advantages of the proposed model to extract the complex feature compared to other similar architectures?}

\paragraph{Decoder.}\label{sec:decoder}

The decoder refines the residue embeddings produced by the encoder and performs bipartite interaction prediction. It consists of $J$ identical layers, each containing bidirectional MHCA with FFN and residual connections, followed by a bipartite interaction head. The final embeddings $\mathbf{H}^{J}_{\mathrm{ag}}$, $\mathbf{H}^{J}_{\mathrm{ab}}$ are projected into queries and keys of width $d_k$ in both directions:
\begin{align}
    \mathbf{S}_{\mathrm{ag}\to\mathrm{ab}} = \frac{(\mathbf{H}^{J}_{\mathrm{ag}}\mathbf{W}^{\text{out}}_{Q})(\mathbf{H}^{J}_{\mathrm{ab}}\mathbf{W}^{\text{out}}_{K})^{\top}}{\sqrt{d_k}}, \qquad
    \mathbf{S}_{\mathrm{ab}\to\mathrm{ag}} = \frac{(\mathbf{H}^{J}_{\mathrm{ab}}\mathbf{W}^{\prime\,\text{out}}_{Q})(\mathbf{H}^{J}_{\mathrm{ag}}\mathbf{W}^{\prime\,\text{out}}_{K})^{\top}}{\sqrt{d_k}}.
\end{align}
The two score maps are fused via a learnable mixing vector $\mathbf{w}\in\mathbb{R}^{2}$ and bias $b\in\mathbb{R}$ to produce logits $\mathbf{Z}=\mathbf{w}^{\top}[\,\mathbf{S}_{\mathrm{ag}\to\mathrm{ab}}\ (\mathbf{S}_{\mathrm{ab}\to\mathrm{ag}})^{\top}\,]+b$, and the interaction probabilities are $\hat{\mathcal{E}}_{\text{bg}}=\sigma(\mathbf{Z})\in\mathbb{R}^{n\times m}$.
% The per-residue epitope probabilities are then read from $\hat{\mathcal{E}}_{\text{bg}}$ by pooling the top-$k$ probabilities from each column of the bipartite adjacency matrix, and thresholding these probabilities produces binary labels for the downstream epitope prediction task. Here, $k$ represents the average number of bipartite edges for the binding residues derived as an empirical prior and is fixed during evaluation.

% and is set to 2 as a prior computed from the dataset, representing that, on average, an antigen binding residue has two edges with its corresponding antibody binding residue. \textcolor{teal}{Huirong: mention k=2 is fixed during evaluation and is derived from prior calculation and is an empirical prior.}

% \textcolor{red}{todo: also add the distance heads to the decoder block}

\subsection{Joint objective}

% \mansoor{ For the losses, the core task is actually Node Classification Loss, but it puts the Edge Prediction Loss, Dice Loss for Graph Segmentation, and Per-Graph Sparsity Regularization together, which seems to the model needs to address those problems. Maybe it is better to merge them into the Auxiliary losses. }

% \mansoor{Some formulas and superscripts, and subscripts are too cumbersome; please consider making them succinct.}

\emph{EpiFormer} is trained with a joint objective that combines the primary epitope node classification loss with auxiliary terms for bipartite edge reconstruction and inter-chain geometric classification.
The overall training objective is a weighted sum of these components:
\begin{align}
\mathcal{L} =  
\lambda_{\mathrm{node}}\,\mathcal{L}_{\mathrm{node}} + 
\lambda_{\mathrm{edge}}\,\mathcal{L}_{\mathrm{edge}} + \lambda_{\mathrm{geo}}\,\mathcal{L}_{\mathrm{geo}}.
\end{align}

\paragraph{Node Classification Loss ($\mathcal{L}_{\mathrm{node}}$).}
The node classification loss supervises epitope nodes only and combines three complementary objectives to handle class imbalance and enforce structural priors:
\begin{align}
\mathcal{L}_{\mathrm{node}} = \beta_{\mathrm{BCE}}\,\mathcal{L}_{\mathrm{BCE}}^{\mathrm{epi}} + \beta_{\mathrm{Dice}}\,\mathcal{L}_{\mathrm{Dice}}^{\mathrm{epi}} + \beta_{\mathrm{sparsity}}\,\mathcal{L}_{\mathrm{sparsity}}^{\mathrm{epi}},
\end{align}
where $\beta_{\{\cdot\}}$ weight the different terms.
The probability that node $v_{\text{ag}}$ is an epitope
is derived from the bipartite interaction matrix by aggregating across the antibody dimension:
\begin{align}
(\hat{y}_{\text{ag}})_{i} = \frac{1}{m}\sum_{j=1}^{m}(\hat{\mathcal{E}}_{\text{bg}})_{ij},
\end{align}
where $m = |\mathcal{V}_{\text{ab}}|$ is the number of antibody residues.

\underline{\emph{\textbf{Class-Reweighted Binary Cross-Entropy}:}}
The primary classification loss applies positive class reweighting ( $\pi_{\mathrm{epi}} > 1$) to address the severe class imbalance in epitope prediction:
\begin{align}
    \mathcal{L}_{\mathrm{BCE}}^{\mathrm{epi}} = -\,\frac{1}{n}\sum_{i=1}^{n}   \bigl[\pi_{\mathrm{epi}}\,(y_{\text{ag}})_i\log (\hat{y}_{\text{ag}})_i + (1-(y_{\text{ag}})_i)\log (1-(\hat{y}_{\text{ag}})_i)\bigr],
\end{align}
%where $({y}_{\text{ab}})_{i} \in \{0,1\}$ are ground-truth epitope labels, $(\hat{y}_{\text{ab}})_{i} \in [0,1]$ are predicted probabilities, and $\pi_{\mathrm{epi}} > 1$ is a positive class weight that compensates for the rarity of epitope residues (typically 5-15\% of antigen residues).
%
\underline{\emph{\textbf{Dice Loss:}}}
The Dice loss treats epitope prediction as a segmentation problem~\citep{sudre2017dice}, directly measuring the overlap between predicted and true epitope regions:
\begin{align}
\mathcal{L}_{\mathrm{Dice}}^{\mathrm{epi}} = 1 - \frac{2\sum_{i} (\hat{y}_{\text{ag}})_i\,(y_{\text{ag}})_i + \epsilon}{\sum_{i} (\hat{y}_{\text{ag}})_i + \sum_{i} (y_{\text{ag}})_i + \epsilon},
\end{align}
where $\epsilon > 0$ is a smoothing constant.

\underline{\emph{\textbf{Sparsity Regularization:}}}
The sparsity term enforces cardinality matching between predicted and true epitope counts: $\mathcal{L}_{\mathrm{sparsity}}^{\mathrm{epi}} = \| \hat{y}_{\text{ag}} - y_{\text{ag}} \|_1$.

% \textcolor{purple}{H: Need to elaborate on these terms - not clear.}

% density-plots figure moved to Appendix
% \textcolor{red}{mansoor: need to provide epitope label definition }

\paragraph{Edge Prediction Loss ($\mathcal{L}_{\mathrm{edge}}$).}

This loss applies positive-class-reweighted binary cross-entropy over all antigen-antibody residue pairs:
\begin{align}
    \mathcal{L}_{\mathrm{edge}} = -\,\frac{1}{nm}\sum_{i,j} \bigl[\pi_{\mathrm{edge}}\,E_{ij}\log \hat{E}_{ij} + (1-E_{ij})\log(1-\hat{E}_{ij})\bigr],
\end{align}
where $E_{ij}=(\mathcal{E}_{\text{bg}})_{ij}\in\{0,1\}$ are ground-truth contacts (1 if residues $v_{\text{ag}},v_{\text{ab}}$ are within 4.5\AA), $\hat{E}_{ij}=(\hat{\mathcal{E}}_{\text{bg}})_{ij}$ are predicted probabilities, and $\pi_{\mathrm{edge}}$ compensates for the extreme sparsity of positives.

\paragraph{Auxiliary Distance Classification Loss ($\mathcal{L}_{\mathrm{geo}}$).}
The auxiliary geometric term classifies inter-chain distances into discrete bins ($\{0\text{--}4, 4\text{--}8, 8\text{--}16, 16\text{--}32\}\,\text{\AA}$), helping the model learn distance-aware representations. The loss applies class-balanced cross-entropy over near-contact antigen-antibody residue pairs (details in Appendix~\ref{appendix:geo_loss}).

\section{Experiments}\label{sc:experiments}

\subsection{Settings}

\paragraph{Dataset.}
We use the AsEP dataset~\citep{liu2024asep}, the largest benchmark of antibody-antigen complexes for epitope prediction, retaining 1,721 complexes after preprocessing (details in Appendix~\ref{appendix:preprocessing}). Each residue is represented by a geometric feature vector encoding amino-acid type, secondary structure, solvent accessibility, and local spatial geometry (Appendix~\ref{appendix:graph-construction}).

% \paragraph{Splits.} 
We adopt two pre-defined splitting strategies from AsEP~\citep{liu2024asep}: the \textbf{epitope-ratio split} stratifies complexes by the fraction of epitope residues, balancing task difficulty across splits; the \textbf{epitope-group split} clusters complexes by epitope identity, completely excluding test epitopes from training to evaluate generalization to novel binding sites. 

% Both use an 80/10/10 allocation (1,381 train / 170 val / 170 test).

% \FloatBarrier
\begin{table}[h!]
\centering

\small
\caption{Epitope-ratio split results (mean $\pm$ std over 3 seeds). Epitope-group results are in Table~\ref{tab:baseline_epigroup}.}
\label{tab:baseline_comparison}
\scriptsize
\begin{tabular}{lcccccc}
\toprule
\textbf{Method} & \textbf{AUC}$\uparrow$ & \textbf{AUPRC}$\uparrow$ & \textbf{F1}$\uparrow$ & \textbf{MCC}$\uparrow$ & \textbf{Prec.}$\uparrow$ & \textbf{Rec.}$\uparrow$ \\
\midrule
\multicolumn{7}{l}{\textit{\textbf{Epitope and paratope prediction}}} \\
EpiGraph~\citep{choi2024b} & .790{\scriptsize$\pm$.004} & .222{\scriptsize$\pm$.011} & .064{\scriptsize$\pm$.032} & .089{\scriptsize$\pm$.035} & .339{\scriptsize$\pm$.052} & .035{\scriptsize$\pm$.018} \\
EpiScan~\citep{wang2024episcan} & .669{\scriptsize$\pm$.024} & .135{\scriptsize$\pm$.019} & .203{\scriptsize$\pm$.052} & .124{\scriptsize$\pm$.043} & .148{\scriptsize$\pm$.072} & .327{\scriptsize$\pm$.041} \\
MIPE~\citep{wang2024improving} & .827{\scriptsize$\pm$.035} & \underline{.409{\scriptsize$\pm$.024}} & \underline{.337{\scriptsize$\pm$.071}} & \underline{.356{\scriptsize$\pm$.061}} & \textbf{.637{\scriptsize$\pm$.146}} & .229{\scriptsize$\pm$.047} \\
WALLE~\citep{liu2024asep} & .808{\scriptsize$\pm$.005} & .206{\scriptsize$\pm$.011} & .203{\scriptsize$\pm$.002} & .208{\scriptsize$\pm$.003} & .114{\scriptsize$\pm$.002} & \textbf{.926{\scriptsize$\pm$.006}} \\
DiscoTope3~\citep{hoie2024discotope} & .821{\scriptsize$\pm$.003} & .278{\scriptsize$\pm$.010} & .251{\scriptsize$\pm$.002} & .247{\scriptsize$\pm$.003} & .149{\scriptsize$\pm$.001} & .797{\scriptsize$\pm$.006} \\
GraphBepi~\citep{zeng2023graphbepi} & .783{\scriptsize$\pm$.021} & .232{\scriptsize$\pm$.014} & .079{\scriptsize$\pm$.005} & .118{\scriptsize$\pm$.024} & \underline{.430{\scriptsize$\pm$.031}} & .044{\scriptsize$\pm$.076} \\
PECAN~\citep{pittala2020learning} & .740{\scriptsize$\pm$.022} & .156{\scriptsize$\pm$.019} & .229{\scriptsize$\pm$.041} & .182{\scriptsize$\pm$.027} & .150{\scriptsize$\pm$.052} & .488{\scriptsize$\pm$.039} \\
\midrule
\multicolumn{7}{l}{\textit{\textbf{Protein binding-site prediction}}} \\
EquiPocket~\citep{zhang2023equipocket} & .761{\scriptsize$\pm$.017} & .203{\scriptsize$\pm$.027} & .202{\scriptsize$\pm$.010} & .183{\scriptsize$\pm$.033} & .116{\scriptsize$\pm$.023} & .799{\scriptsize$\pm$.029} \\
AtomSurf~\citep{mallet2023atomsurf} & .606{\scriptsize$\pm$.022} & .094{\scriptsize$\pm$.048} & .139{\scriptsize$\pm$.004} & .062{\scriptsize$\pm$.007} & .076{\scriptsize$\pm$.042} & .817{\scriptsize$\pm$.072} \\
ESMBind~\citep{schreiber2023esmbind} & .768{\scriptsize$\pm$.037} & .192{\scriptsize$\pm$.006} & .208{\scriptsize$\pm$.042} & .192{\scriptsize$\pm$.018} & .120{\scriptsize$\pm$.052} & .798{\scriptsize$\pm$.026} \\
\midrule
\multicolumn{7}{l}{\textit{\textbf{Protein interaction and docking}}} \\
ATProt~\citep{gao2024towards} & .798{\scriptsize$\pm$.001} & .236{\scriptsize$\pm$.003} & .252{\scriptsize$\pm$.006} & .237{\scriptsize$\pm$.002} & .152{\scriptsize$\pm$.005} & .736{\scriptsize$\pm$.027} \\
DiffDock-PP~\citep{ketata2023diffdockpp} & .735{\scriptsize$\pm$.037} & .148{\scriptsize$\pm$.024} & .192{\scriptsize$\pm$.019} & .169{\scriptsize$\pm$.032} & .109{\scriptsize$\pm$.012} & .812{\scriptsize$\pm$.036} \\
DiffDock~\citep{corso2022diffdock} & .677{\scriptsize$\pm$.011} & .123{\scriptsize$\pm$.035} & .154{\scriptsize$\pm$.016} & .089{\scriptsize$\pm$.005} & .143{\scriptsize$\pm$.027} & .168{\scriptsize$\pm$.004} \\
\midrule
\multicolumn{7}{l}{\textit{\textbf{Structure and affinity prediction}}} \\
Boltz-1~\citep{wohlwend2025boltz1} & .771{\scriptsize$\pm$.021} & .242{\scriptsize$\pm$.032} & .257{\scriptsize$\pm$.018} & .223{\scriptsize$\pm$.003} & .155{\scriptsize$\pm$.064} & .750{\scriptsize$\pm$.017} \\
Boltz-2~\citep{passaro2025boltz2} & .798{\scriptsize$\pm$.014} & .316{\scriptsize$\pm$.038} & .255{\scriptsize$\pm$.016} & .232{\scriptsize$\pm$.019} & .152{\scriptsize$\pm$.011} & .815{\scriptsize$\pm$.007} \\
AlphaFold3~\citep{abramson2024alphafold3} & 0.562 & 0.077 & 0.153 & 0.071 & 0.097 & 0.361 \\
CheapNet~\citep{limcheapnet} & .719{\scriptsize$\pm$.022} & .194{\scriptsize$\pm$.029} & .184{\scriptsize$\pm$.009} & .149{\scriptsize$\pm$.017} & .105{\scriptsize$\pm$.006} & .765{\scriptsize$\pm$.017} \\
GearBind~\citep{cai2024gearbind} & .799{\scriptsize$\pm$.012} & .215{\scriptsize$\pm$.031} & .244{\scriptsize$\pm$.027} & .236{\scriptsize$\pm$.002} & .144{\scriptsize$\pm$.040} & .787{\scriptsize$\pm$.037} \\
\midrule
\multicolumn{7}{l}{\textit{\textbf{Molecular property prediction}}} \\
EquiformerV2~\citep{liao2023equiformerv2} & \underline{.827{\scriptsize$\pm$.011}} & .264{\scriptsize$\pm$.031} & .260{\scriptsize$\pm$.016} & .262{\scriptsize$\pm$.022} & .154{\scriptsize$\pm$.013} & \underline{.823{\scriptsize$\pm$.022}} \\
\midrule
\textbf{\emph{EpiFormer} (ours)} & \textbf{.924\scriptsize{$\pm$.003}} & \textbf{.493\scriptsize{$\pm$.012}} & \textbf{.482\scriptsize{$\pm$.011}} & \textbf{.464\scriptsize{$\pm$.009}} & \underline{.363\scriptsize{$\pm$.014}} & .720\scriptsize{$\pm$.012} \\
\bottomrule
\end{tabular}
\end{table}

\textbf{Baseline Methods.} We compare against 20 methods spanning five categories. The first group is dedicated epitope and paratope predictors (EpiGraph, EpiScan, MIPE, WALLE, DiscoTope3, GraphBepi, PECAN). Because labeled antibody-antigen complexes are scarce, we also adapt models from related structural-biology tasks: general protein binding-site prediction (EquiPocket, AtomSurf, ESMBind), protein interaction and docking (ATProt, DiffDock-PP, DiffDock), structure and affinity prediction (Boltz-1, Boltz-2, AlphaFold3, CheapNet, GearBind), and molecular property prediction (EquiformerV2). All baselines were retrained on AsEP with their published configurations; training details are in Appendix~\ref{appendix:baselines}.

\subsection{Main Results}

\emph{EpiFormer} leads on all primary metrics across both evaluation splits. On the epitope-ratio split (Table~\ref{tab:baseline_comparison}), it achieves 0.924 AUC and 0.482 F1, surpassing the next-best method MIPE (0.337 F1) by a wide margin. On the harder epitope-group split (Table~\ref{tab:baseline_epigroup}), \emph{EpiFormer} reaches 0.826 AUC and 0.305 F1, maintaining a clear margin over all baselines.

Several patterns emerge from these results. Antigen-only methods such as EpiGraph, GraphBepi, and DiscoTope3 reach reasonable AUC but low F1, because they cannot condition on the specific antibody and therefore predict generic binding-prone surface rather than the true epitope. Among antibody-aware methods, MIPE is the strongest baseline. Its edge comes largely from protein language model embeddings (ESM and AbLang), though its high variance ($\pm$0.071 F1) points to sensitivity to initialization. Methods from adjacent domains, such as Boltz-1/2, GearBind, and EquiformerV2, are competitive on AUC but have no mechanism for the severe class imbalance of epitope prediction, which keeps their F1 low. The docking model DiffDock falls behind much simpler baselines, and ATProt, which fuses the two chains only after encoding, trails \emph{EpiFormer} by a wide margin.

These differences are clearest in the balance between precision and recall. Several baselines reach high recall only by labeling most residues as positive; WALLE, for instance, recovers almost every epitope residue (0.926 recall) at the cost of very low precision (0.114). The Dice loss and sparsity regularizer in \emph{EpiFormer} instead produce balanced predictions (0.363 precision, 0.720 recall), and this balance is what drives its higher F1 and MCC.

\subsection{Analysis}

\textbf{Backbone-Agnostic Framework.}
The gains come more from the interleaved cross-attention framework rather than from the choice of GNN backbone. Replacing EGNN-R with a plain GCN~\citep{kipf2016semi}, while keeping the decoder, cross-attention, and losses fixed, still reaches 0.447 F1 and already surpasses all 19 baselines (Table~\ref{tab:gnn_backbone}). GAT~\citep{velivckovic2017gat} and RGCN~\cite{schlichtkrull2018rgcn} behave similarly, and every backbone we tried falls within a narrow range. The encoder choice does add a smaller but consistent increment on top, with equivariance contributing $+$0.018 F1 (EGNN over GAT), multi-relational encoding $+$0.012, and learnable coordinate updates $+$0.017.

\begin{table}[h!]
\centering
\scriptsize
\caption{GNN backbone ablation on the epitope-ratio split. All variants share the same cross-attention decoder and losses; only the encoder differs. Eq.=equivariant message passing, MR=multi-relational edges, Coord=learned coordinate updates. }
\label{tab:gnn_backbone}
\begin{tabular}{lccccccc}
\toprule
\textbf{GNN} & \textbf{Eq.} & \textbf{MR} & \textbf{Coord} & \textbf{AUC} & \textbf{AUPRC} & \textbf{F1} & \textbf{MCC} \\
\midrule
GCN & No & No & No & .905 & .397 & .447 & .424 \\
GIN & No & No & No & .890 & .349 & .408 & .369 \\
GAT & No & No & No & .908 & .399 & .452 & .424 \\
RGCN & No & Yes & No & .910 & .405 & .454 & .422 \\
EGNN & Yes & No & Yes & .912 & .421 & .470 & .455 \\
EGNN-R (frozen) & Part. & Yes & No & .908 & .412 & .465 & .441 \\
\textbf{EGNN-R (ours)} & \textbf{Yes} & \textbf{Yes} & \textbf{Yes} & \textbf{.924} & \textbf{.493} & \textbf{.482} & \textbf{.464} \\
\bottomrule
\end{tabular}
\end{table}
\vspace{-5pt}

\textbf{Architecture-Loss Co-Design.}
We note that the sparsity-aware losses help only when the architecture already fuses the two chains early. We added \emph{EpiFormer}'s auxiliary losses (Dice, count regularizer, and edge prediction) to six baselines and tuned the loss weights for each by grid search (Table~\ref{tab:loss_fairness}). The losses raise F1 for three of the six, WALLE ($+$0.129), EquiformerV2 ($+$0.041), and DiscoTope3 ($+$0.031), and lower it for the remaining three, most sharply for CheapNet ($-$0.104). The same losses are more beneficial inside \emph{EpiFormer}, where they lift F1 from 0.336 with plain BCE to 0.482, which indicates that the architecture is the primary driver and the losses act as a complement to it.

\begin{table}[h!]
\centering
\scriptsize
\caption{Architecture and loss co-design: baselines augmented with \emph{EpiFormer}'s auxiliary losses. Tier~1 (antigen-only) receives Dice and count regularization; Tier~2 (antibody-aware) additionally receives edge prediction; Tier~3 is \emph{EpiFormer} itself, which uses the full joint objective. }
\label{tab:loss_fairness}
\begin{tabular}{llcccc}
\toprule
\textbf{Method} & \textbf{Tier} & \textbf{Orig.\ F1} & \textbf{F1} & \textbf{MCC} & \textbf{$\Delta$ F1} \\
\midrule
DiscoTope3 & 1 & 0.251 & 0.282 & 0.263 & $+0.031$ \\
EquiformerV2 & 1 & 0.260 & \textbf{0.301} & \textbf{0.293} & $+0.041$ \\
\midrule
WALLE & 2 & 0.203 & \textbf{0.332} & 0.280 & $\mathbf{+0.129}$ \\
PECAN & 2 & 0.229 & 0.218 & 0.157 & $-0.011$ \\
CheapNet & 2 & 0.184 & 0.080 & 0.154 & $-0.104$ \\
MIPE & 2 & 0.337 & 0.261 & 0.230 & $-0.076$ \\
\midrule
\textbf{EpiFormer} & \textbf{3} & 0.336 & \textbf{0.482} & \textbf{0.464} & $\mathbf{+0.146}$ \\
\bottomrule
\end{tabular}
\end{table}
\vspace{-5pt}

\textbf{Geometric Features Outperform PLMs.}
To probe our feature choice, we swapped the geometric features for protein language model (PLM) embeddings in \emph{EpiFormer} and, conversely, removed PLM embeddings from four baselines that rely on them. \emph{EpiFormer} is slightly worse with PLM features than with geometric ones (0.889 vs.\ 0.924 AUC), whereas the baselines drop sharply once their PLM features are removed, by 0.133 AUC for DiscoTope3, 0.122 for WALLE, and 0.088 for MIPE. This contrast suggests that geometric features are sufficient for \emph{EpiFormer}, while these baselines lean heavily on the PLM signal. A plausible reason is that epitope residues are not evolutionarily conserved~\citep{ponomarenko2007conformational}, so the conservation signal that PLMs encode is of limited use here. Epitope prediction is better described as a relational task defined by spatial proximity to a specific antibody~\citep{sela2015antibody}, which the geometric features capture directly. The full analysis is in Appendix~\ref{appendix:plm_analysis}.

\textbf{Cross-Dataset Generalization.}
To test out-of-distribution robustness, we evaluated \emph{EpiFormer} on three external benchmarks with no overlap with AsEP, SAbDab~\citep{dunbar2014sabdab}, CoV-AbDab~\citep{raybould2021cov}, and ANABAG~\citep{grandguillaume2025anabag} (Table~\ref{tab:crossdataset_main}). We compare against MIPE, which is the second-best method and the strongest antibody-aware baseline. Under zero-shot transfer, \emph{EpiFormer} achieves a mean AUC of 0.786, outperforming MIPE (0.688) despite substantial distribution shift (Figure~\ref{fig:distshift}). Leave-one-dataset-out (LODO) fine-tuning further improves the AUC to 0.890, with particularly strong gains on CoV-AbDab. Full results are in Appendix~\ref{appendix:crossdataset}.

\begin{table}[h!]
\centering
\small
\caption{Cross-dataset generalization (AUC) on three external benchmarks, SAbDab, CoV-AbDab, and ANABAG~\citep{grandguillaume2025anabag}. Zero-shot: AsEP-trained only. LODO (leave-one-dataset-out): fine-tuned on two external sets, tested on the third. MIPE is the second-best AsEP method and serves as the antibody-aware reference; per-dataset best in \textbf{bold}.}
\label{tab:crossdataset_main}
\begin{tabular}{lcccc}
\toprule
& \multicolumn{2}{c}{\textbf{Zero-shot AUC}} & \multicolumn{2}{c}{\textbf{LODO AUC}} \\
\cmidrule(lr){2-3} \cmidrule(lr){4-5}
\textbf{Dataset} & \textbf{EpiFormer} & \textbf{MIPE} & \textbf{EpiFormer} & \textbf{MIPE} \\
\midrule
SAbDab & \textbf{0.782} & 0.614 & \textbf{0.858} & 0.746 \\
CoV-AbDab & \textbf{0.819} & 0.730 & \textbf{0.890} & 0.813 \\
ANABAG & \textbf{0.758} & 0.719 & \textbf{0.837} & 0.772 \\
\bottomrule
\end{tabular}
\end{table}

\subsection{Ablations}

We isolate the contribution of each component; full details are provided in Appendix~\ref{appendix:ablations}.

\textbf{Joint Objective.} The joint loss is essential. With BCE alone, \emph{EpiFormer} achieves only 0.336 F1. Adding Dice and edge prediction yields marginal gains, whereas the sparsity regularizer is a more impactful term, improving F1 by $+$0.132 (Table~\ref{tab:loss_ablation}). This indicates that threshold degeneration, where the model predicts all residues as positive, is a primary failure mode that the count regularizer directly corrects. We also note that adding InfoNCE on top degrades performance ($-$0.029 F1) due to conflicting gradients on boundary residues~\citep{ji2024regcl}.

\textbf{Interleaved Cross-Attention.} Removing the encoder MHCA entirely ($\alpha{=}0.0$) drops F1 by 0.042 and MCC by 0.051 (Table~\ref{tab:mhca}). Even when $\alpha$ is initialized at zero, which forces the model to discover cross-attention utility from scratch, it learns non-zero gate values (0.014 to 0.033 for the antibody stream), confirming that cross-modal information provides a useful learning signal. See Appendix~\ref{appendix:mhca_ablation} for the full ablation.

\begin{figure}[h!]
    \centering
    \includegraphics[width=1\linewidth]{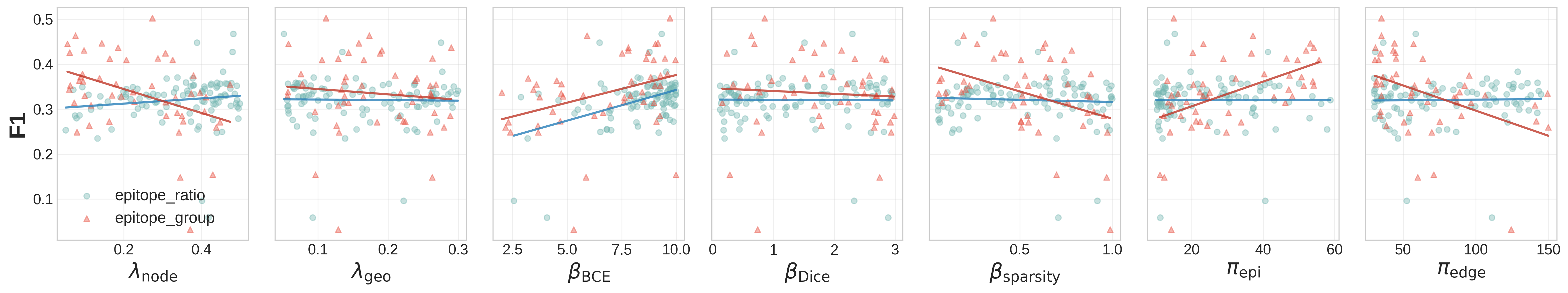}
    \caption{F1 sensitivity to seven loss hyperparameters across both splits. Most weights show flat trends, indicating robustness.}
    \label{fig:hparam-sensitivity}
\end{figure}

\textbf{Hyperparameter Sensitivity.}\label{appendix:hparam_sensitivity}
Figure~\ref{fig:hparam-sensitivity} confirms that F1 is largely insensitive to the exact loss weights, with most hyperparameters showing flat trends across both splits.

\subsection{Qualitative Analysis}

\textbf{Interaction Map Visualization.}
Figure~\ref{fig:attn+antigen-size-shift}(a) shows the interaction maps for a representative complex (\textsc{8df5\_3P}). The ground-truth contact map has a sparse binding interface concentrated in a narrow antigen region, and the distance map confirms that these contacts sit at close spatial proximity between epitope and paratope residues. The interaction matrix predicted by \emph{EpiFormer} closely follows this pattern, with high scores aligned to the true epitope band and non-binding regions suppressed. 

\begin{figure}[h!]
    \centering
    \includegraphics[width=0.65\linewidth]{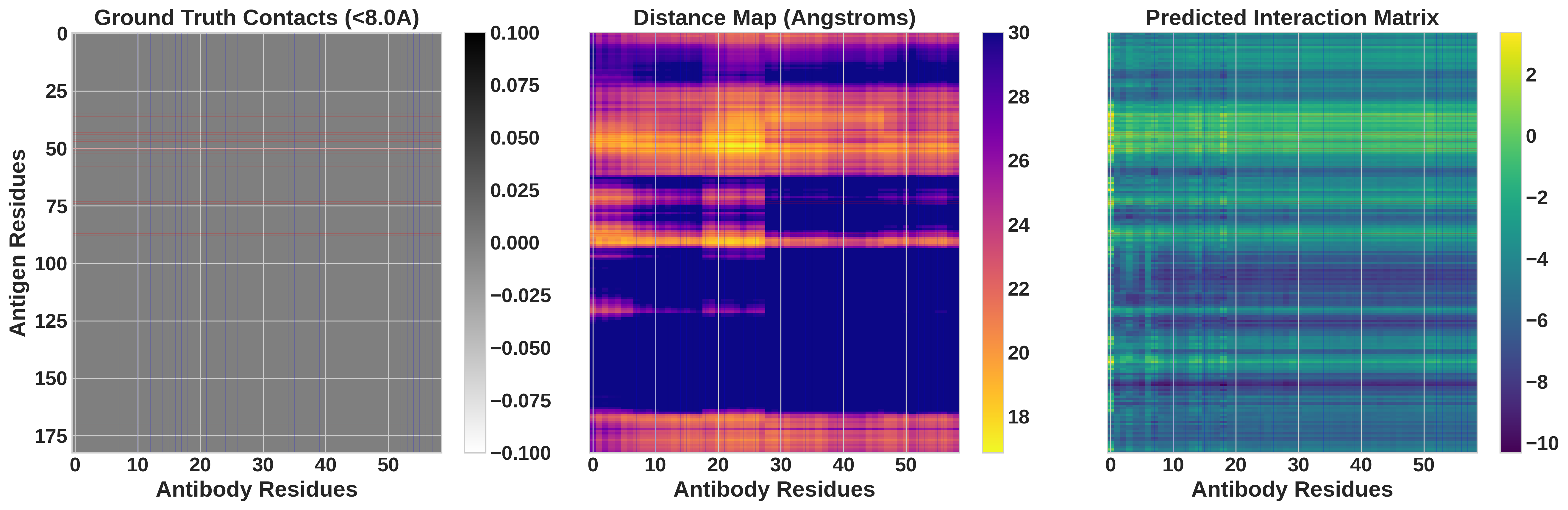} %
    \includegraphics[width=0.32\linewidth]{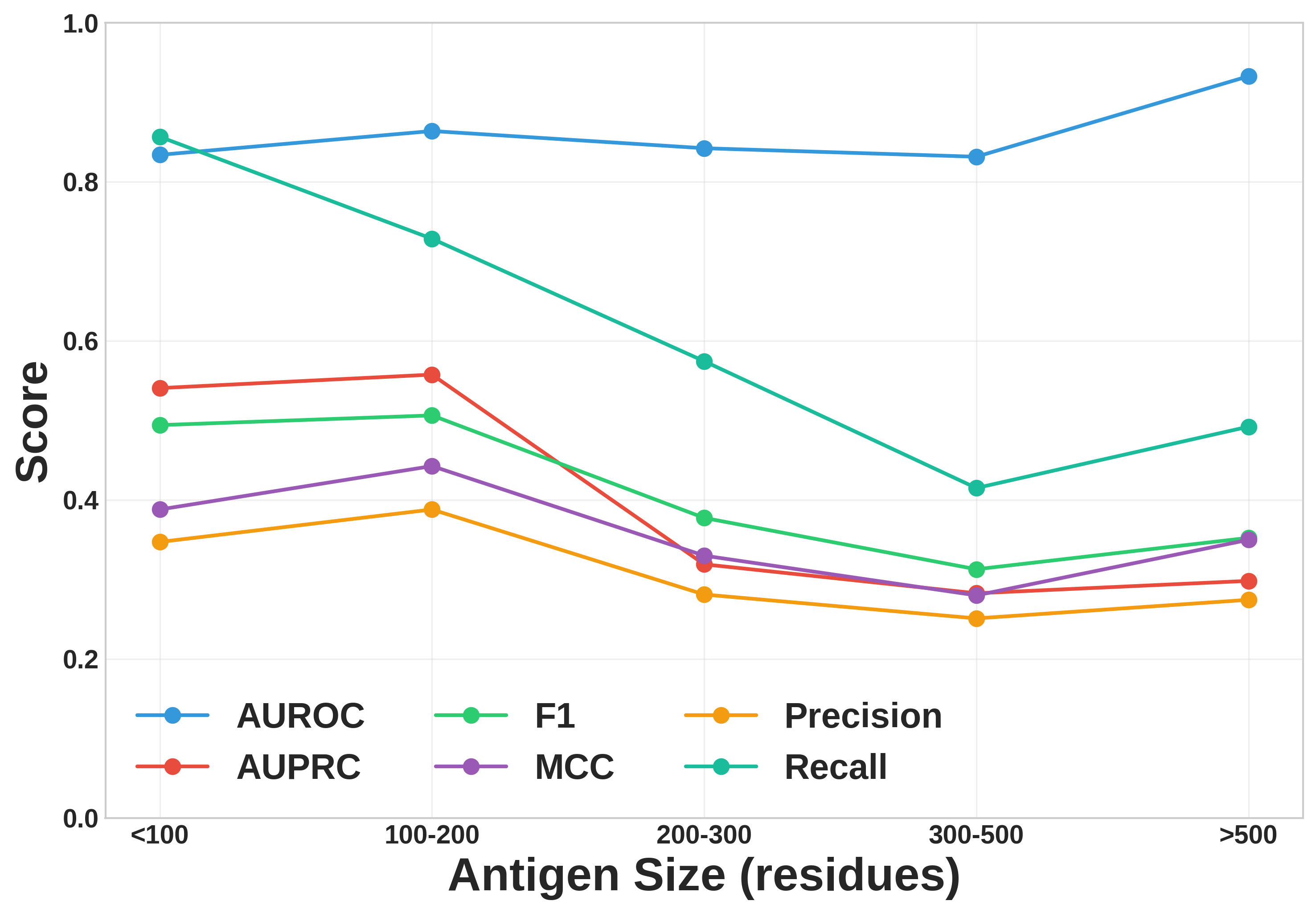}
    \caption{(a) Ground-truth contact map, pairwise distance map, and predicted interaction matrix for complex \textsc{8df5\_3P}. (b) Performance across antigen size bins.}
    \label{fig:attn+antigen-size-shift}
\end{figure}

% This indicates that the model captures the genuine structure of the antigen-antibody interaction rather than a diffuse surface prior.

\textbf{Performance by Antigen Size.}
Figure~\ref{fig:attn+antigen-size-shift}(b) stratifies performance by the antigen size. Ranking metrics (AUC, MCC, precision) remain stable across all size bins, but threshold-dependent metrics (AUPRC, F1, recall) decline for the largest antigens ($>$500 residues). This is expected, since as antigens grow the epitope makes up a smaller fraction of residues, which makes the classification task harder.

\textbf{Asymmetry in Learned Cross-Attention.}
Figure~\ref{fig:learnedalphas} shows a consistent asymmetry in the learned gates. Across every encoder block and every initialization value, the antibody stream draws more on the antigen than the antigen draws on the antibody. This direction of information flow is consistent with the asymmetric roles of the two chains, since the antibody CDR loops are the variable part that conforms to a given epitope~\citep{cagiada2025uncovering}. 

% The asymmetry emerges from the prediction objective alone, as the gate values are never directly supervised.

\begin{figure}[h!]
    \centering
    \includegraphics[width=0.95\linewidth]{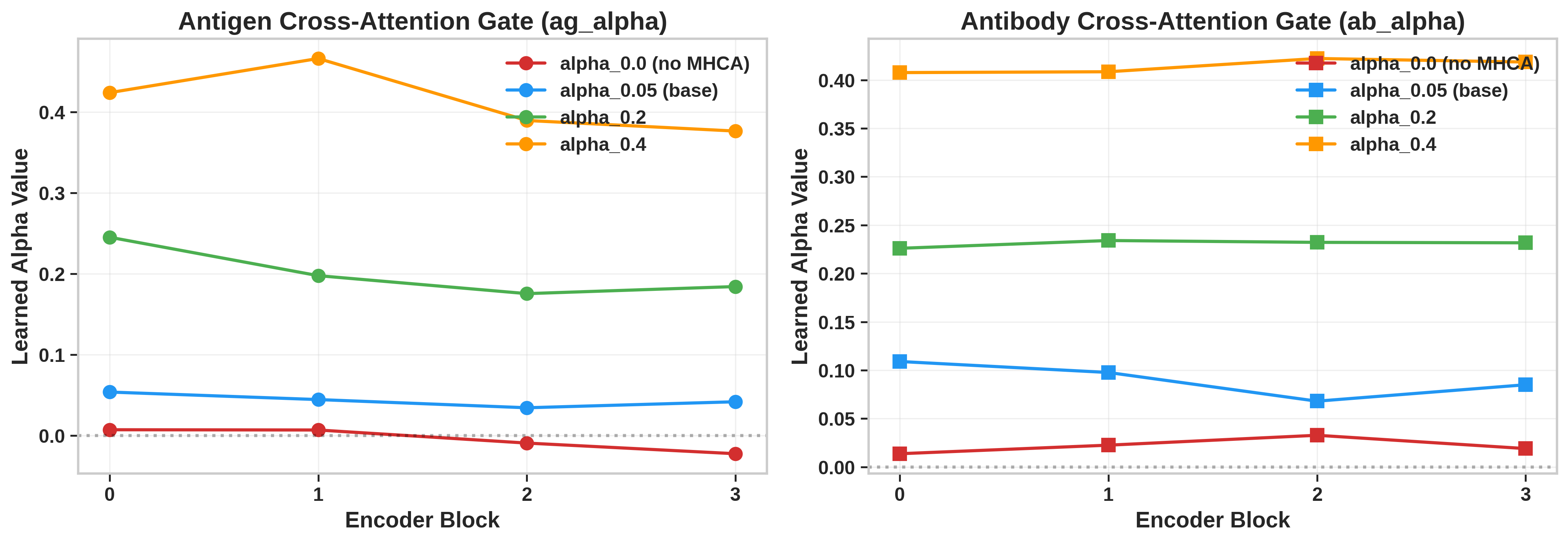}
    \caption{Learned cross-attention gate values ($\alpha$) across encoder blocks for four initialization values ($\alpha_{\text{init}} \in \{0.0, 0.05, 0.2, 0.4\}$). (a) Antigen gate. (b) Antibody gate. The antibody gate consistently exceeds the antigen gate, reflecting the asymmetric roles of the two chains.}
    \label{fig:learnedalphas}
\end{figure}

\section{Conclusion}\label{sc:conclusion}

We presented \emph{EpiFormer}, a general encoder-decoder framework for antibody-aware epitope prediction that addresses late fusion, class imbalance, and data scarcity jointly, and achieves state-of-the-art performance. Our key finding is that interleaved bidirectional cross-attention within GNN encoding layers provides consistent gains that are backbone-agnostic. We further showed that sparsity-aware training objectives must be co-designed with the fusion strategy, as the same losses degrade late-fusion models while improving early-fusion architectures. 

% \section*{Reproducibility Statement} 
% We will make the code publicly available on GitHub and provide installation scripts to address the complex dependency issue of the libraries/packages. We hope that this will support and accelerate future research and development.

% In the unusual situation where you want a paper to appear in the
% references without citing it in the main text, use \nocite
% \nocite{langley00}

\newpage

\bibliography{bibliography}
\bibliographystyle{plainnat}

%%%%%%%%%%%%%%%%%%%%%%%%%%%%%%%%%%%%%%%%%%%%%%%%%%%%%%%%%%%%%%%%%%%%%%%%%%%%%%%
%%%%%%%%%%%%%%%%%%%%%%%%%%%%%%%%%%%%%%%%%%%%%%%%%%%%%%%%%%%%%%%%%%%%%%%%%%%%%%%
% APPENDIX
%%%%%%%%%%%%%%%%%%%%%%%%%%%%%%%%%%%%%%%%%%%%%%%%%%%%%%%%%%%%%%%%%%%%%%%%%%%%%%%
%%%%%%%%%%%%%%%%%%%%%%%%%%%%%%%%%%%%%%%%%%%%%%%%%%%%%%%%%%%%%%%%%%%%%%%%%%%%%%%
\newpage
\appendix
\onecolumn

% You can have as much text here as you want. The main body must be at most $8$
% pages long. For the final version, one more page can be added. If you want, you
% can use an appendix like this one.

% The $\mathtt{\backslash onecolumn}$ command above can be kept in place if you
% prefer a one-column appendix, or can be removed if you prefer a two-column
% appendix.  Apart from this possible change, the style (font size, spacing,
% margins, page numbering, etc.) should be kept the same as the main body.
%%%%%%%%%%%%%%%%%%%%%%%%%%%%%%%%%%%%%%%%%%%%%%%%%%%%%%%%%%%%%%%%%%%%%%%%%%%%%%%
%%%%%%%%%%%%%%%%%%%%%%%%%%%%%%%%%%%%%%%%%%%%%%%%%%%%%%%%%%%%%%%%%%%%%%%%%%%%%%%

\section{Appendix}
% You may include other additional sections here.

\subsection{Dataset Statistics and Preprocessing}

Figure~\ref{fig:density-plots} summarizes the dataset. Antigen surface graphs average 288 residues (range 20--1500+), antibody CDR graphs average 61 residues, and the mean epitope size is 19 residues with a mean of 44 bipartite edges per complex.

\begin{figure}[h!]
    \centering
    \includegraphics[width=0.8\textwidth]{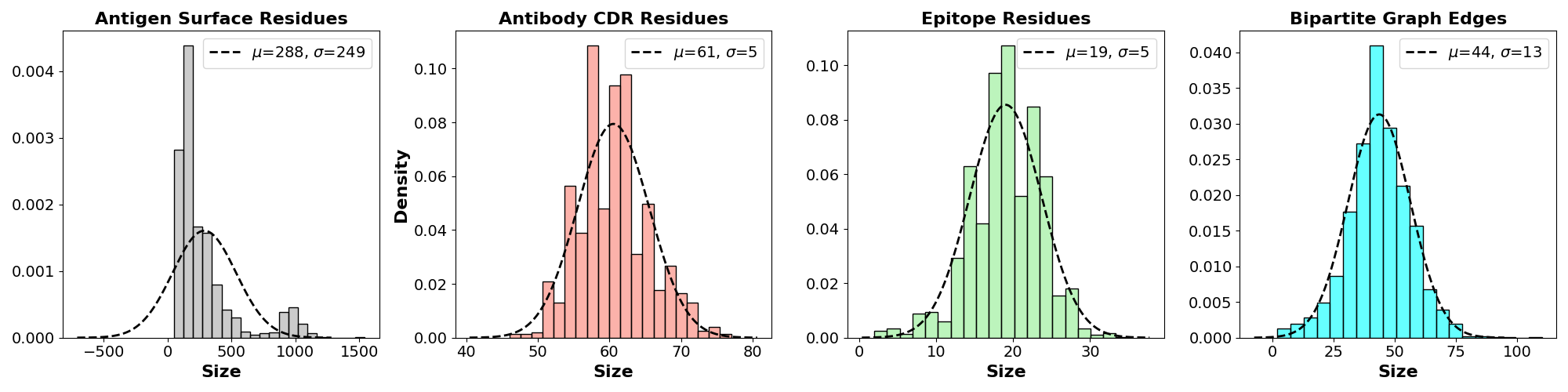}
    \caption{Size distributions of antigen surface residues, antibody CDR residues, epitope residues, and bipartite edges in the AsEP dataset.}
    \label{fig:density-plots}
\end{figure}

\subsubsection{Preprocessing}\label{appendix:preprocessing}

For each complex, we first separated the paired antigen and antibody chains into individual structure files. We then performed sequence-structure alignment using Clustal Omega~\citep{sievers2011clustal} to establish correspondence between SEQRES (complete sequence) and ATOMSEQ (resolved atoms) records. This alignment generated binary masks that enable reliable mapping of sequences to structural residues while preserving the native crystallographic ordering.

For antibody chains, we applied the alignment masks to reindex heavy (H) and light (L) chains by removing insertion codes to enforce consecutive 1-based residue numbering required for graph construction. Antigen chains underwent similar processing to maintain parity between sequences and structures. This step ensures that each residue in the protein sequence corresponds exactly to its structural counterpart during the graph representation.
Then, we applied solvent-accessibility filters to retain only antigen surface residues, using the original AsEP seqres2surf masks to define the node set for antigen residue graphs. The binary epitope labels were projected onto the surface ATOMSEQ via alignment masks, while paratope labels were preserved for antibody residue nodes. This surface filtering step prevents non-surface residues from confounding epitope supervision while maintaining all necessary information for cross-chain interaction modeling.

We excluded two complexes (5nj6\_0P and 5ies\_0P) from the AsEP dataset due to sequence alignment inconsistencies and unresolved residues, with the final dataset containing 1,721 complexes. 
The contact distribution between residues in the bipartite graph had a mean of 43.7 contacts with a standard deviation of 12.8. Additionally, the dataset includes 641 unique antigens and 973 epitope groups, highlighting the diversity and complexity of the antibody-antigen interactions captured in the AsEP dataset.

Finally, we used these preprocessed structures to generate HeteroData objects for the multi-relational graphs using PyTorch Geometric~\citep{fey2019pyg}. Each residue node is attributed a geometric feature vector encoding amino acid type, backbone geometry, distances, and local coordinate frames (detailed in Section~\ref{appendix:graph-construction}).

\subsection{Baseline Comparison }~\label{appendix:baselines}

Table~\ref{table:benchmark_features_comparison} compares the architectural components and loss functions used by each baseline against \emph{EpiFormer}.

\begin{table}[h!]
\centering
\scriptsize
\caption{Architectural and loss component comparison across baselines and EpiFormer. Column abbreviations are defined in the header rows.}
\label{table:benchmark_features_comparison}
\resizebox{\textwidth}{!}{
\begin{tabular}{l|cccccccccc|ccccc}
\toprule
& \multicolumn{10}{c|}{\textbf{Architecture}} & \multicolumn{5}{c}{\textbf{Loss Components}} \\
\textbf{Method} & \textbf{Ab} & \textbf{St} & \textbf{PLM} & \textbf{Graph} & \textbf{Geom} & \textbf{M-rel} & \textbf{E(3)} & \textbf{X-Att} & \textbf{Inter} & \textbf{Gate} & \textbf{BCE} & \textbf{Dice} & \textbf{Contr} & \textbf{Count} & \textbf{Edge} \\
\midrule
\multicolumn{16}{l}{\textit{\textbf{Epitope and paratope prediction}}} \\
EpiGraph & $\times$ & \checkmark & \checkmark & \checkmark & $\times$ & $\times$ & $\times$ & $\times$ & $\times$ & $\times$ & \checkmark & $\times$ & $\times$ & $\times$ & $\times$ \\
EpiScan & \checkmark & $\times$ & $\times$ & $\times$ & $\times$ & $\times$ & $\times$ & $\times$ & $\times$ & $\times$ & \checkmark & \checkmark & $\times$ & $\times$ & $\times$ \\
MIPE & \checkmark & \checkmark & \checkmark & \checkmark & \checkmark & \checkmark & $\times$ & \checkmark & $\times$ & $\times$ & \checkmark & $\times$ & \checkmark & $\times$ & $\times$ \\
WALLE & \checkmark & \checkmark & \checkmark & \checkmark & $\times$ & $\times$ & $\times$ & $\times$ & $\times$ & $\times$ & \checkmark & $\times$ & $\times$ & \checkmark & \checkmark \\
DiscoTope3 & $\times$ & \checkmark & \checkmark & $\times$ & \checkmark & $\times$ & $\times$ & $\times$ & $\times$ & $\times$ & \checkmark & $\times$ & $\times$ & $\times$ & $\times$ \\
GraphBepi & $\times$ & \checkmark & \checkmark & \checkmark & \checkmark & $\times$ & $\times$ & $\times$ & $\times$ & $\times$ & \checkmark & $\times$ & $\times$ & $\times$ & $\times$ \\
PECAN & \checkmark & \checkmark & $\times$ & \checkmark & \checkmark & $\times$ & $\times$ & \checkmark & $\times$ & $\times$ & \checkmark & $\times$ & $\times$ & $\times$ & $\times$ \\
\midrule
\multicolumn{16}{l}{\textit{\textbf{Protein binding-site prediction}}} \\
EquiPocket & $\times$ & \checkmark & $\times$ & \checkmark & \checkmark & $\times$ & \checkmark & $\times$ & $\times$ & $\times$ & \checkmark & \checkmark & $\times$ & $\times$ & $\times$ \\
AtomSurf & $\times$ & \checkmark & \checkmark & \checkmark & \checkmark & $\times$ & $\times$ & $\times$ & $\times$ & $\times$ & \checkmark & $\times$ & $\times$ & $\times$ & $\times$ \\
ESMBind & $\times$ & $\times$ & \checkmark & $\times$ & $\times$ & $\times$ & $\times$ & $\times$ & $\times$ & $\times$ & \checkmark & $\times$ & $\times$ & $\times$ & $\times$ \\
\midrule
\multicolumn{16}{l}{\textit{\textbf{Protein interaction and docking}}} \\
ATProt & \checkmark & \checkmark & $\times$ & \checkmark & \checkmark & $\times$ & $\times$ & \checkmark & $\times$ & $\times$ & \checkmark & $\times$ & $\times$ & $\times$ & $\times$ \\
DiffDock-PP & \checkmark & \checkmark & \checkmark & \checkmark & \checkmark & $\times$ & \checkmark & \checkmark & $\times$ & $\times$ & \checkmark & $\times$ & $\times$ & $\times$ & $\times$ \\
DiffDock & $\times$ & \checkmark & $\times$ & \checkmark & \checkmark & $\times$ & \checkmark & $\times$ & $\times$ & $\times$ & $\times$ & $\times$ & $\times$ & $\times$ & $\times$ \\
\midrule
\multicolumn{16}{l}{\textit{\textbf{Structure and affinity prediction}}} \\
Boltz-1 & $\times$ & \checkmark & \checkmark & $\times$ & \checkmark & $\times$ & $\times$ & $\times$ & $\times$ & $\times$ & \checkmark & $\times$ & $\times$ & $\times$ & $\times$ \\
Boltz-2 & \checkmark & \checkmark & \checkmark & $\times$ & \checkmark & $\times$ & $\times$ & \checkmark & $\times$ & $\times$ & \checkmark & $\times$ & $\times$ & $\times$ & $\times$ \\
AlphaFold3 & \checkmark & \checkmark & $\times$ & $\times$ & \checkmark & $\times$ & \checkmark & $\times$ & $\times$ & $\times$ & $\times$ & $\times$ & $\times$ & $\times$ & $\times$ \\
CheapNet & \checkmark & \checkmark & $\times$ & \checkmark & \checkmark & $\times$ & $\times$ & \checkmark & $\times$ & $\times$ & \checkmark & $\times$ & $\times$ & $\times$ & $\times$ \\
GearBind & $\times$ & \checkmark & $\times$ & \checkmark & \checkmark & \checkmark & $\times$ & $\times$ & $\times$ & $\times$ & \checkmark & $\times$ & $\times$ & $\times$ & $\times$ \\
\midrule
\multicolumn{16}{l}{\textit{\textbf{Molecular property prediction}}} \\
EquiformerV2 & $\times$ & \checkmark & \checkmark & \checkmark & \checkmark & $\times$ & \checkmark & $\times$ & $\times$ & $\times$ & \checkmark & $\times$ & $\times$ & $\times$ & $\times$ \\
\midrule
\textbf{\emph{EpiFormer}} & \textbf{\checkmark} & \textbf{\checkmark} & $\times$ & \textbf{\checkmark} & \textbf{\checkmark} & \textbf{\checkmark} & \textbf{\checkmark} & \textbf{\checkmark} & \textbf{\checkmark} & \textbf{\checkmark} & \textbf{\checkmark} & \textbf{\checkmark} & $\times$ & \textbf{\checkmark} & \textbf{\checkmark} \\
\bottomrule
\end{tabular}
}
\vspace{0.5em}

\raggedright
\footnotesize
Architecture columns: Ab = antibody-aware; St = uses 3D structure; PLM = protein language model features; Graph = graph neural network; Geom = explicit geometric or surface features; M-rel = multi-relational edges; E(3) = E(3)/SE(3)-equivariant; X-Att = cross-attention; Inter = interleaved (per-layer) cross-attention; Gate = gated cross-attention. \\

\end{table}

All baselines were evaluated on the 1,721-complex AsEP benchmark under identical settings: official splits (epitope-ratio and epitope-group), threshold 0.5, and the same six metrics (AUROC, AUPRC, F1, MCC, Precision, Recall). Below we describe each baseline's architecture and its adaptation to epitope prediction.

\subsubsection{Protein Binding-Site Prediction Methods}

\paragraph{EquiPocket~\citep{zhang2023equipocket}} operates at atom-level with 6D features, computing molecular surfaces via MSMS with 7D local geometric descriptors. The Surface-EGNN module performs E(3)-equivariant multi-channel convolution with DenseNet-style connections. We adapted to epitope prediction by adding residue aggregation (CA atom selection or mean pooling) to convert per-atom outputs to per-residue predictions.

\paragraph{AtomSurf~\citep{mallet2023atomsurf}} jointly encodes molecular surfaces using DiffusionNet (spectral convolution with 22D geometric features including HKS) and residue graphs using GCN with ProNet features, connected through bidirectional KNN-based message passing. We preserve the joint surface-graph encoding for antigen-only processing with mean-pooled surface features aggregated to the residue level.

\paragraph{ESMBind~\citep{schreiber2023esmbind}} adapts the ESM-2 protein language model (35M parameters by default) with LoRA (Low-Rank Adaptation) for parameter-efficient fine-tuning, training only \~0.5\% of parameters. The model processes antigen amino acid sequences through ESM-2 with LoRA adapters (rank $r=4$, scaling $\alpha=8$, dropout 0.1) applied to attention layers, followed by a per-residue classification head for epitope prediction. We fine-tune ESMBind on the epitope sequences using class-weighted BCE loss to handle label imbalance, and support multiple ESM-2 variants (8M to 650M parameters).

% \paragraph{MaSIF-site~\citep{gainza2020deciphering}} applies geodesic convolution on molecular surfaces with polar coordinate patches (nthetas=4, nrhos=3, maxdistance=9.0\textup{\AA}). We substitute original chemical features with 22 geometric features (curvatures, HKS, normals) and ESM-2 embeddings (1280D), using Euclidean KNN approximation for geodesic patches and vertex-to-residue aggregation for per-residue output.

\subsubsection{Protein Interaction and Docking Methods}

\paragraph{ATProt~\citep{gao2024towards}} uses SEGCN (Structure-Enhanced Graph Convolutional Network) for general protein-protein interaction site prediction. We ported the architecture from DGL to PyG, added cross-attention between antibody and antigen chains for antibody awareness, and trained from scratch on AsEP with the same evaluation protocol. ATProt applies cross-attention only once after all encoding layers (late fusion), in contrast to EpiFormer's interleaved design.

\paragraph{DiffDock-PP~\citep{ketata2023diffdockpp}} uses SE(3)-equivariant tensor product convolutions for rigid-body protein-protein docking. We adapted the architecture for residue-level epitope classification by replacing the docking score head with per-residue binary classification, adding cross-attention between antibody and antigen representations. Both antibody and antigen are encoded using the same TensorProductConvLayer backbone with ESM-2 embeddings.

\paragraph{DiffDock~\citep{corso2022diffdock}} generates ligand poses through diffusion over SE(3) using E(3)-equivariant score networks. We use inference-only mode with DiffDock-L to dock antibody CDR regions onto antigen structures, labeling antigen residues within contact distance of predicted CDR poses as epitopes, weighted by confidence scores.

\subsubsection{Structure and Affinity Prediction Methods}

\paragraph{Boltz-1~\citep{wohlwend2025boltz1}} uses a Pairformer trunk with triangle multiplication (\( z_{ij} += \sum_k a_{ik} \cdot b_{jk} \)), triangle attention, and attention with pair bias for structure-aware representations. We adapt by projecting ESM-2 embeddings to single (384D) and pairwise (128D) representations, reducing to 8 layers, and adding an epitope prediction head on concatenated single and mean-pooled pairwise features.

\paragraph{AlphaFold3~\citep{abramson2024alphafold3}} uses a diffusion-based architecture with Pairformer trunk processing single and pairwise representations through adaptive LayerNorm, gated linear units, and triangle operations. We use inference-only mode with pretrained weights~\footnote{The AF3 model weights were obtained from DeepMind. We used AF3 without MSA due to computational resource constraints.}, converting inter-chain contact predictions from predicted complexes to epitope labels using distance thresholds.
Due to imprecise docking without evolutionary information, we used a relaxed 15Å contact threshold.
AlphaFold3 without MSA achieved AUC=0.56 with the relaxed 15Å threshold, only marginally better than random, confirming that structure prediction alone without co-evolutionary signals cannot reliably identify epitopes.

% \paragraph{ESMFold~\citep{lin2023esm2}} performs single-sequence structure prediction using ESM-2 embeddings without MSA. We use inference-only mode to predict antibody-antigen complex structures, deriving epitope labels from inter-chain contacts in predicted structures with Clustal Omega for sequence alignment between predicted and reference structures.

\paragraph{CheapNet~\citep{limcheapnet}} uses Geometry-Informed Graph Neural Network blocks for local structure encoding with intra/inter-molecular edges and cross-attention between entity representations. We adapted from graph-level affinity regression to node-level classification by replacing ligand-protein atom pairs with antibody-antigen residue pairs, projecting 35D residue features from RAAD~\citep{wu2025raad}/ESM-2, and adding a per-residue MLP classifier with BCE loss.

\paragraph{GearBind~\citep{cai2024gearbind}} is a pretrainable geometric GNN based on GearNet, pretrained on CATH using contrastive learning and fine-tuned on SKEMPI for \(\Delta\Delta G_{\text{bind}}\) prediction. We extract encoder representations for antigen residues, replace the affinity regression head with a binary classifier, and initialize from CATH-pretrained weights.

\paragraph{Boltz-2~\citep{passaro2025boltz2}} extends Boltz-1 with cross-attention between binding partners. We additionally process antibody embeddings (AntiBERTy, 512D) alongside antigen (ESM-2, 1280D), using 2-layer CrossPairAttention for antibody-aware epitope prediction with the same triangle-based Pairformer backbone.

\subsubsection{Molecular Property Prediction Methods}

\paragraph{EquiformerV2~\citep{liao2023equiformerv2}} is an improved equivariant transformer using SO(2) convolutions for efficiency, processing 3D graphs with type-\(L\) features and attention in irreducible representation space. We adapt from energy/force regression to residue-level classification by constructing antigen graphs with CA coordinates, combining RAAD features with ESM-2 embeddings, and replacing the output head with per-residue binary classification.

\subsection{Epitope-Group Split Results}\label{appendix:epigroup}

Table~\ref{tab:baseline_epigroup} reports results on the more challenging epitope-group split, where test epitopes are completely unseen during training. EpiFormer achieves 0.826 AUC and 0.305 F1, outperforming all baselines.

\begin{table}[h!]
\centering
\scriptsize
\caption{Performance comparison on the AsEP epitope-group split. Best in \textbf{bold}, second-best \underline{underlined}.}
\label{tab:baseline_epigroup}
\begin{tabular}{lcccccc}
\toprule
\footnotesize
\textbf{Method} & \textbf{AUC}$\uparrow$ & \textbf{AUPRC}$\uparrow$ & \textbf{F1}$\uparrow$ & \textbf{MCC}$\uparrow$ & \textbf{Prec.}$\uparrow$ & \textbf{Rec.}$\uparrow$ \\
\midrule
\multicolumn{7}{l}{\textit{\textbf{Epitope and paratope prediction}}} \\
EpiGraph~\citep{choi2024b} & 0.779 & 0.194 & 0.056 & 0.096 & \underline{0.401} & 0.030 \\
EpiScan~\citep{wang2024episcan} & 0.443 & 0.080 & 0.127 & 0.048 & 0.089 & 0.221 \\
MIPE~\citep{wang2024improving} & 0.740 & \underline{0.228} & 0.172 & \underline{0.206} & \textbf{0.495} & 0.104 \\
WALLE~\citep{liu2024asep} & 0.713 & 0.137 & 0.170 & 0.143 & 0.095 & \textbf{0.830} \\
DiscoTope3~\citep{hoie2024discotope} & 0.763 & 0.208 & 0.210 & 0.189 & 0.123 & 0.729 \\
GraphBepi~\citep{zeng2023graphbepi} & \underline{0.781} & \underline{0.220} & 0.144 & 0.161 & 0.386 & 0.088 \\
PECAN~\citep{pittala2020learning} & 0.729 & 0.166 & \underline{0.254} & 0.195 & 0.169 & 0.513 \\
\midrule
\multicolumn{7}{l}{\textit{\textbf{Protein binding-site prediction}}} \\
EquiPocket~\citep{zhang2023equipocket} & 0.711 & 0.132 & 0.194 & 0.159 & 0.113 & 0.676 \\
AtomSurf~\citep{mallet2023atomsurf} & 0.699 & 0.144 & 0.202 & 0.151 & 0.127 & 0.500 \\
ESMBind~\citep{schreiber2023esmbind} & 0.685 & 0.130 & 0.174 & 0.126 & 0.100 & 0.663 \\
\midrule
\multicolumn{7}{l}{\textit{\textbf{Protein interaction and docking}}} \\
ATProt~\citep{gao2024towards} & 0.710 & 0.155 & 0.200 & 0.160 & 0.120 & 0.600 \\
DiffDock-PP~\citep{ketata2023diffdockpp} & 0.650 & 0.110 & 0.165 & 0.120 & 0.095 & 0.680 \\
DiffDock~\citep{corso2022diffdock} & 0.443 & 0.095 & 0.146 & 0.081 & 0.131 & 0.164 \\
\midrule
\multicolumn{7}{l}{\textit{\textbf{Structure and affinity prediction}}} \\
Boltz-1~\citep{wohlwend2025boltz1} & 0.702 & 0.174 & 0.229 & 0.169 & 0.146 & 0.536 \\
Boltz-2~\citep{passaro2025boltz2} & 0.702 & 0.168 & 0.220 & 0.156 & 0.140 & 0.518 \\
AlphaFold3~\citep{abramson2024alphafold3} & 0.570 & 0.076 & 0.155 & 0.079 & 0.098 & 0.375 \\
CheapNet~\citep{limcheapnet} & 0.641 & 0.095 & 0.152 & 0.100 & 0.084 & \underline{0.776} \\
GearBind~\citep{cai2024gearbind} & 0.734 & 0.154 & 0.206 & 0.169 & 0.123 & 0.627 \\
\midrule
\multicolumn{7}{l}{\textit{\textbf{Molecular property prediction}}} \\
EquiformerV2~\citep{liao2023equiformerv2} & 0.729 & 0.160 & 0.212 & 0.169 & 0.131 & 0.561 \\
\midrule
% \textbf{\emph{EpiFormer} (ours)} & \textbf{0.826\scriptsize{$\pm$.070}} & 0.290\scriptsize{$\pm$.100} & \textbf{0.305\scriptsize{$\pm$.066}} & \textbf{0.290\scriptsize{$\pm$.085}} & 0.196\scriptsize{$\pm$.047} & 0.698\scriptsize{$\pm$.089} \\
\textbf{\emph{EpiFormer} (ours)} & \textbf{0.826} & \textbf{0.290} & \textbf{0.305} & \textbf{0.290} & 0.196 & 0.698 \\
\bottomrule
\end{tabular}
\end{table}

\subsection{E(3)-equivariance of the EGNN-R layer}\label{thm:egnn_equivariance}

\begin{theorem}
Consider the EGNN-R layer in \S\ref{sec:egnnr} with updates

\begin{align}
m_{ij}^{\rho}      &= \phi_m^{\rho} \bigl(\mathbf{h}_i^{\ell},  
\label{equation19}
\mathbf{h}_j^{\ell}, \gamma(d_{ij}), \mathbf{f}_{ij}\bigr), \\
s_{ij}^{\rho}      &= \phi_x^{\rho} \bigl(m_{ij}^{\rho}\bigr), \\[2pt]
\mathbf{h}_i^{(\ell+1)} &= \mathbf{h}_i^{\ell} + \phi_h \Bigl(\mathbf{h}_i^{\ell}, \sum_{j\in\mathcal{N}(i)}\;\sum_{\rho\in\mathbf{r}_{ij}} m_{ij}^{\rho}\Bigr), \\
\mathbf{x}_i^{(\ell+1)} &= \mathbf{x}_i^{\ell} + \sum_{j\in\mathcal{N}(i)}\;\sum_{\rho\in\mathbf{r}_{ij}}
            \frac{\boldsymbol{\delta}_{ij}}{\sqrt{d_{ij}+\varepsilon}}\;s_{ij}^{\rho},
\end{align}

where $\boldsymbol{\delta}_{ij}=\mathbf{x}_i^{\ell}-\mathbf{x}_j^{\ell}$, $d_{ij}=\|\boldsymbol{\delta}_{ij}\|_2^2$, and $\varepsilon>0$. Assume:
(i) node features $\mathbf{h}_i^{\ell}\in\mathbb{R}^{d_h}$ are scalar channels,
(ii) $\mathbf{h}_{ij}$ and $\mathbf{r}_{ij}$ are categorical and independent of coordinates,
(iii) $\gamma$ is any scalar function of $d_{ij}$,
(iv) each $\phi_{\{m,x\}}^{\rho}$ is an MLP from scalars to scalars.
Let the $E(3)$ action be $g=(R,t)$ with $R\in O(3)$ and $t\in\mathbb{R}^3$, acting as
$\mathbf{x}_i^{\ell}\mapsto R\mathbf{x}_i^{\ell}+t$ and $\mathbf{h}_i^{\ell}\mapsto \mathbf{h}_i^{\ell}$.
Then the layer is $E(3)$-equivariant:
\[
\bigl\{\,\mathbf{x}_i^{\ell},\mathbf{h}_i^{\ell}\,\bigr\}_{i=1}^n \mapsto
\bigl\{\,R\mathbf{x}_i^{\ell}+t,\ \mathbf{h}_i^{\ell}\,\bigr\}_{i=1}^n
\ \Longrightarrow\
\bigl\{\,\mathbf{x}_i^{(\ell+1)},\mathbf{h}_i^{(\ell+1)}\,\bigr\}_{i=1}^n \mapsto
\bigl\{\,R\mathbf{x}_i^{(\ell+1)}+t,\ \mathbf{h}_i^{(\ell+1)}\,\bigr\}_{i=1}^n.
\]
Consequently, any stack of such layers is $E(3)$-equivariant by composition.
\end{theorem}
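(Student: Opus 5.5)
The argument follows the standard EGNN equivariance proof of \citet{satorras2021egnn}, adapted to the per-relation sums. Fix $g=(R,t)$ and write primed quantities for those computed from the transformed input $\{R\mathbf{x}_i^{\ell}+t,\mathbf{h}_i^{\ell}\}$. First we check how the geometric primitives transform. The displacement satisfies $\boldsymbol{\delta}_{ij}'=(R\mathbf{x}_i^{\ell}+t)-(R\mathbf{x}_j^{\ell}+t)=R\boldsymbol{\delta}_{ij}$, so the translation cancels. Because $R^\top R=I$ for $R\in O(3)$, we also get $d_{ij}'=\boldsymbol{\delta}_{ij}^\top R^\top R\boldsymbol{\delta}_{ij}=d_{ij}$. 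Hence $\gamma(d_{ij}')=\gamma(d_{ij})$ by assumption (iii).

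Next comes the invariance of every scalar quantity. By assumption (ii), the edge attributes $\mathbf{f}_{ij}$ and relation sets $\mathbf{r}_{ij}$ do not depend on the coordinates and are therefore unchanged. The neighbourhoods $\mathcal{N}(i)$ are also unchanged, either because they are fixed with the graph or because they are built from pairwise distances (KNN, radius $8$\,\AA), which are invariant. Every argument of $\phi_m^{\rho}$ is then invariant, so $m_{ij}^{\rho\prime}=m_{ij}^{\rho}$. By assumption (iv) this gives $s_{ij}^{\rho\prime}=s_{ij}^{\rho}$. The double sum $\sum_{j}\sum_{\rho\in\mathbf{r}_{ij}}m_{ij}^{\rho}$ runs over an identical index set with identical summands, so $\mathbf{h}_i^{(\ell+1)\prime}=\mathbf{h}_i^{(\ell+1)}$.

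For the coordinate update, substituting gives
\[
\mathbf{x}_i^{(\ell+1)\prime}=R\mathbf{x}_i^{\ell}+t+\sum_{j\in\mathcal{N}(i)}\sum_{\rho\in\mathbf{r}_{ij}}\frac{R\boldsymbol{\delta}_{ij}}{\sqrt{d_{ij}+\varepsilon}}\,s_{ij}^{\rho}.
\]
Pulling $R$ out of the sum by linearity turns this into $R\mathbf{x}_i^{(\ell+1)}+t$. The translation survives only through the $\mathbf{x}_i^{\ell}$ term, and it appears there exactly once because the update is expressed through differences.

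The only step needing care is that $s_{ij}^{\rho}$ must be a scalar for the increment to lie along $\boldsymbol{\delta}_{ij}$. The main text writes $\phi_x^{\rho}:\mathbb{R}^{d_q}\to\mathbb{R}^3$. If the output were a genuine $3$-vector multiplying $\boldsymbol{\delta}_{ij}$ componentwise, equivariance would fail. We therefore rely on assumption (iv) and read $s_{ij}^{\rho}$ as a scalar, or as a combination of per-channel scalars multiplying the same $\boldsymbol{\delta}_{ij}$. We also interpret the $\mathbf{h}_{ij}$ in assumption (ii) as $\mathbf{f}_{ij}$.

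The layer map then commutes with the action of $g$ on $(\mathbf{x},\mathbf{h})$. For a stack $F=F_L\circ\cdots\circ F_1$ we get $F(g\cdot z)=g\cdot F(z)$ by induction on $L$. The cross-attention and FFN blocks interleaved between layers act only on the invariant $\mathbf{h}$, so they preserve this property as well.
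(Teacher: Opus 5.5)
Your proof is correct and follows the same route as the paper's. Both show $\boldsymbol{\delta}_{ij}\mapsto R\boldsymbol{\delta}_{ij}$ with $d_{ij}$ invariant, deduce that $m_{ij}^{\rho}$, $s_{ij}^{\rho}$ and $\mathbf{h}_i^{(\ell+1)}$ are invariant, pull $R$ out of the coordinate increment by linearity, and conclude by composition. Your extra remarks are valid points that the paper leaves implicit: that $\mathcal{N}(i)$ must also be unchanged, and that the main text's $\phi_x^{\rho}:\mathbb{R}^{d_q}\to\mathbb{R}^3$ would break equivariance unless $s_{ij}^{\rho}$ is read as a scalar, as assumption (iv) requires.
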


\begin{proof}
Let $g=(R,t)\in E(3)$ act as stated. Edge data $\mathbf{f}_{ij}$ and $\mathbf{r}_{ij}$ are unchanged.

\emph{Invariants.}
Relative displacement and distance transform as
\begin{align}
\boldsymbol{\delta}_{ij}\mapsto R\boldsymbol{\delta}_{ij},
\qquad
d_{ij}=\|\boldsymbol{\delta}_{ij}\|^2 \mapsto \|R\boldsymbol{\delta}_{ij}\|^2=d_{ij}.
\end{align}

Hence $d_{ij}$, $\gamma(d_{ij})$, and $(d_{ij}+\varepsilon)^{-1/2}$ are invariant scalars.

\emph{Scalar messages and coefficients.}
Each message $m_{ij}^{\rho}=\phi_m^{\rho}(\mathbf{h}_i^{\ell},\mathbf{h}_j^{\ell},\gamma(d_{ij}),\mathbf{f}_{ij})$ depends only on scalars that are invariant under $g$, so $m_{ij}^{\rho}$ is invariant. Then $s_{ij}^{\rho}=\phi_x^{\rho}(m_{ij}^{\rho})$ is also invariant.

\emph{Feature update.}
The update
\begin{align}
\mathbf{h}_i^{(\ell+1)} &= \mathbf{h}_i^{\ell} + \phi_h \Bigl(\mathbf{h}_i^{\ell}, \sum_{j\in\mathcal{N}(i)}\;\sum_{\rho\in\mathbf{r}_{ij}} m_{ij}^{\rho}\Bigr)
\end{align}
uses only invariant scalars, so $\mathbf{h}_i^{(\ell+1)}$ is invariant. This matches the scalar action on features.

\emph{Coordinate update.}
The increment
\begin{align}
\Delta\mathbf{x}_i=\sum_{j\in\mathcal{N}(i)}\;\sum_{\rho\in\mathbf{r}_{ij}}\frac{\boldsymbol{\delta}_{ij}}{\sqrt{d_{ij}+\varepsilon}}\,s_{ij}^{\rho}
\end{align}
is a sum of relative vectors scaled by invariant scalars. Under $g$ each term becomes
\begin{align}
\frac{\boldsymbol{\delta}_{ij}}{\sqrt{d_{ij}+\varepsilon}}\,s_{ij}^{\rho}
\ \mapsto\
\frac{R\boldsymbol{\delta}_{ij}}{\sqrt{d_{ij}+\varepsilon}}\,s_{ij}^{\rho}
= R\Bigl(\frac{\boldsymbol{\delta}_{ij}}{\sqrt{d_{ij}+\varepsilon}}\,s_{ij}^{\rho}\Bigr),
\end{align}
so $\Delta\mathbf{x}_i\mapsto R\,\Delta\mathbf{x}_i$. Therefore
\begin{align}
\mathbf{x}_i^{(\ell+1)}=\mathbf{x}_i^{\ell}+\Delta\mathbf{x}_i
\ \mapsto\
R\mathbf{x}_i^{\ell}+t+R\Delta\mathbf{x}_i
=R\bigl(\mathbf{x}_i^{\ell}+\Delta\mathbf{x}_i\bigr)+t
=R\mathbf{x}_i^{(\ell+1)}+t.
\end{align}

\emph{Composition.}
The composition of equivariant maps is equivariant. Hence, any stack of EGNN-R layers is $E(3)$-equivariant.
\end{proof}

\subsection{Multi-head cross-attention with feed-forward network (MHCA)}\label{sec:mhca}
The bidirectional multi-head cross-attention mechanism enables information exchange between antigen and antibody chains. Let $n_{\mathrm{head}}$ be the number of heads with per-head width $d_a=d_h/n_{\mathrm{head}}$. For layer $\ell$, independent linear projections produce queries, keys, and values:
\begin{align}
    \mathbf{Q}^{\ell}_{\mathrm{ag}} &= \mathbf{H}^{(\ell-1)}_{\mathrm{ag}}\mathbf{W}^{Q(\ell)}_{\mathrm{ag}},\\
    \mathbf{K}^{\ell}_{\mathrm{ab}} &= \mathbf{H}^{(\ell-1)}_{\mathrm{ab}}\mathbf{W}^{K(\ell)}_{\mathrm{ab}},\\
    \mathbf{V}^{\ell}_{\mathrm{ab}} &= \mathbf{H}^{(\ell-1)}_{\mathrm{ab}}\mathbf{W}^{V(\ell)}_{\mathrm{ab}},
\end{align}
with analogous expressions for the reverse direction. After reshaping to $n_{\mathrm{head}}$ heads of width $d_a$, scaled dot-product attention computes the affinity matrices:
\begin{align}
    \mathbf{A}^{\ell}_{\mathrm{ag}\leftarrow\mathrm{ab}} &= \operatorname{softmax} \Bigl(\tfrac{1}{\sqrt{d_h}}\,\mathbf{Q}^{\ell}_{\mathrm{ag}}{\mathbf{K}^{\ell}_{\mathrm{ab}}}^{ \top} + \mathbf{M}\Bigr),
\end{align}
where $\mathbf{M}$ is a batch mask (applied only in decoder) that assigns $-\infty$ to residue pairs from different complexes. The resulting context vectors are:
\begin{align}
    \widetilde{\mathbf{H}}^{\ell}_{\mathrm{ag}} &= \bigl[\mathbf{A}^{\ell}_{\mathrm{ag}\leftarrow\mathrm{ab}}\mathbf{V}^{\ell}_{\mathrm{ab}}\bigr]\mathbf{W}^{\ell}_{O,\mathrm{ag}},\\
    \widetilde{\mathbf{H}}^{\ell}_{\mathrm{ab}} &= \bigl[\mathbf{A}^{\ell}_{\mathrm{ab}\leftarrow\mathrm{ag}}\mathbf{V}^{\ell}_{\mathrm{ag}}\bigr]\mathbf{W}^{\ell}_{O,\mathrm{ab}}.
\end{align}
Each direction then applies a feed-forward network $\operatorname{FFN}(\mathbf{x})=\mathbf{W}_2\,\sigma(\mathbf{W}_1\mathbf{x}+\mathbf{b}_1)+\mathbf{b}_2$ with dropout, residual connections, and layer normalization.

\begin{algorithm}[h!]
\scriptsize
\SetAlgoLined
\KwIn{%
  Antigen graph $\mathcal{G}_{\mathrm{ag}}$ and antibody graph $\mathcal{G}_{\mathrm{ab}}$ with coordinates $\mathbf{X}$, geometric features $\mathbf{h}^{\text{geo}}$}
\KwOut{Bipartite interaction matrix $\hat{\mathcal{E}}_{\text{bg}} \in [0,1]^{n \times m}$}

\BlankLine
\tcp{\textbf{Feature Initialization}}
\ForEach{chain $\in \{\text{ag}, \text{ab}\}$}{
  Project geometric features to hidden dimension\;
  $\mathbf{h}_i^{0} \leftarrow \mathbf{W}_{\text{geo}}\,\mathbf{h}^{\text{geo}}_i$ for each residue $i$\;
}

\BlankLine
\tcp{\textbf{Encoder: Parallel Processing}}
\For{layer $\ell = 1$ \KwTo $L$}{
  \tcp{Intra-chain geometric message passing}
  $(\mathbf{H}_{\mathrm{ag}}^{\mathrm{intra}}, \mathbf{X}^{\mathrm{ag}}) \leftarrow \mathrm{EGNN\text{-}R}(\mathcal{G}_{\mathrm{ag}}, \mathbf{H}_{\mathrm{ag}}^{(\ell-1)}, \mathbf{X}^{\mathrm{ag}})$\;
  $(\mathbf{H}_{\mathrm{ab}}^{\mathrm{intra}}, \mathbf{X}^{\mathrm{ab}}) \leftarrow \mathrm{EGNN\text{-}R}(\mathcal{G}_{\mathrm{ab}}, \mathbf{H}_{\mathrm{ab}}^{(\ell-1)}, \mathbf{X}^{\mathrm{ab}})$\;
  
  \tcp{Inter-chain cross-attention}
  $\widetilde{\mathbf{H}}_{\mathrm{ag}} \leftarrow \mathrm{MHCA}(\mathbf{H}_{\mathrm{ag}}^{\mathrm{intra}}, \mathbf{H}_{\mathrm{ab}}^{\mathrm{intra}}, \mathbf{H}_{\mathrm{ab}}^{\mathrm{intra}})$\;
  $\widetilde{\mathbf{H}}_{\mathrm{ab}} \leftarrow \mathrm{MHCA}(\mathbf{H}_{\mathrm{ab}}^{\mathrm{intra}}, \mathbf{H}_{\mathrm{ag}}^{\mathrm{intra}}, \mathbf{H}_{\mathrm{ag}}^{\mathrm{intra}})$\;
  
  \tcp{Combine intra-chain and cross-chain information}
  $\mathbf{H}_{\mathrm{ag}}^{\ell} \leftarrow \mathbf{H}_{\mathrm{ag}}^{(\ell-1)} + \mathbf{H}_{\mathrm{ag}}^{\mathrm{intra}} + \alpha_{\mathrm{ag}} \operatorname{FFN}(\widetilde{\mathbf{H}}_{\mathrm{ag}})$\;
  $\mathbf{H}_{\mathrm{ab}}^{\ell} \leftarrow \mathbf{H}_{\mathrm{ab}}^{(\ell-1)} + \mathbf{H}_{\mathrm{ab}}^{\mathrm{intra}} + \alpha_{\mathrm{ab}} \operatorname{FFN}(\widetilde{\mathbf{H}}_{\mathrm{ab}})$\;
}

\BlankLine
\tcp{\textbf{Decoder: Cross-Attention Refinement}}
Initialize decoder embeddings: $\mathbf{H}_{\mathrm{ag}}^{\mathrm{dec}} \leftarrow \mathbf{H}_{\mathrm{ag}}^{L}$, $\mathbf{H}_{\mathrm{ab}}^{\mathrm{dec}} \leftarrow \mathbf{H}_{\mathrm{ab}}^{L}$\;
\For{layer $\ell = 1$ \KwTo $L$}{

  \tcp{Inter-chain cross-attention}
  $\widetilde{\mathbf{H}}_{\mathrm{ag}}^{\mathrm{dec}} \leftarrow \mathrm{MHCA}(\mathbf{H}_{\mathrm{ag}}^{\mathrm{dec}}, \mathbf{H}_{\mathrm{ab}}^{\mathrm{dec}}, \mathbf{H}_{\mathrm{ab}}^{\mathrm{dec}})$\;
  $\widetilde{\mathbf{H}}_{\mathrm{ab}}^{\mathrm{dec}} \leftarrow \mathrm{MHCA}(\mathbf{H}_{\mathrm{ab}}^{\mathrm{dec}}, \mathbf{H}_{\mathrm{ag}}^{\mathrm{dec}}, \mathbf{H}_{\mathrm{ag}}^{\mathrm{dec}})$\;
  
  \tcp{Combine intra-chain and cross-chain information}
  $\mathbf{H}_{\mathrm{ag}}^{{\mathrm{dec}}(\ell)} \leftarrow \mathbf{H}_{\mathrm{ag}}^{{\mathrm{dec}}(\ell-1)} +  \operatorname{FFN}(\widetilde{\mathbf{H}}_{\mathrm{ag}}^{\mathrm{dec}})$\;
  $\mathbf{H}_{\mathrm{ab}}^{{\mathrm{dec}}(\ell)} \leftarrow \mathbf{H}_{\mathrm{ab}}^{{\mathrm{dec}}(\ell-1)} +  \operatorname{FFN}(\widetilde{\mathbf{H}}_{\mathrm{ab}}^{\mathrm{dec}})$\;
  
  % Apply bidirectional cross-attention between antigen and antibody\;
  % Apply position-wise feed-forward networks\;
  % Update $\mathbf{H}_{\mathrm{ag}}^{\mathrm{dec}}$ and $\mathbf{H}_{\mathrm{ab}}^{\mathrm{dec}}$ with residual connections\;
}

\BlankLine
\tcp{\textbf{Bipartite Interaction Prediction}}
Compute bidirectional attention scores:\;
$\mathbf{S}_{\mathrm{ag}\to\mathrm{ab}} \leftarrow \frac{(\mathbf{H}_{\mathrm{ag}}^{\mathrm{dec}}\mathbf{W}^{\text{out}}_{Q})(\mathbf{H}_{\mathrm{ab}}^{\mathrm{dec}}\mathbf{W}^{\text{out}}_{K})^{\top}}{\sqrt{d_k}}$\;
$\mathbf{S}_{\mathrm{ab}\to\mathrm{ag}} \leftarrow \frac{(\mathbf{H}_{\mathrm{ab}}^{\mathrm{dec}}\mathbf{W}^{\prime\,\text{out}}_{Q})(\mathbf{H}_{\mathrm{ag}}^{\mathrm{dec}}\mathbf{W}^{\prime\,\text{out}}_{K})^{\top}}{\sqrt{d_k}}$\;

Fuse scores and apply sigmoid:\;
$\mathbf{Z} \leftarrow \mathbf{w}^{\top}[\,\mathbf{S}_{\mathrm{ag}\to\mathrm{ab}}\ (\mathbf{S}_{\mathrm{ab}\to\mathrm{ag}})^{\top}\,] + b$\;
$\hat{\mathcal{E}}_{\text{bg}} \leftarrow \sigma(\mathbf{Z})$\;

\BlankLine
\tcp{\textbf{Epitope Extraction }}
Extract per-residue epitope probabilities via mean pooling:\;
$(\hat{y}_{\text{ag}})_{i} = \frac{1}{m}\sum_{j=1}^{m}(\hat{\mathcal{E}}_{\text{bg}})_{ij}$\;

\SetKwFunction{MHCA}{MHCA}
\SetKwProg{Fn}{Function}{:}{end}
\Fn{\MHCA{$Q, K, V$}}{
    $Q_h \gets QW_Q^h$, $K_h \gets KW_K^h$, $V_h \gets VW_V^h$ \tcp*{Project per head $h$}
    $\alpha_{ij}^h \gets \text{Softmax}_j\left(\frac{Q_{h,i} \cdot K_{h,j}^\top}{\sqrt{d_h}}\right)$ \tcp*{Attention scores}
    $C_i^h \gets \sum_j \alpha_{ij}^h V_{h,j}$ \tcp*{Context vector}
    \KwResult{$\text{Concat}(C^1,\dots,C^H)W_O$} \tcp*{Combine heads}
}
\SetKwFunction{FFN}{FFN}
\Fn{\FFN{$X$}}{
    $\hat{\mathcal{E}}_{\text{bg}} \gets \mathrm{SiLU}(XW_1 + b_1)W_2 + b_2$ \tcp*{$W_1 \in \mathbb{R}^{d \times d_{ff}}$, $W_2 \in \mathbb{R}^{d_{ff} \times d}$}
    \KwResult{$\hat{\mathcal{E}}_{\text{bg}}$}
}
\caption{\emph{EpiFormer}: High-Level Architecture}
\label{alg:epiformer-pseudocode}
\end{algorithm}

\subsection{Graph Construction} \label{appendix:graph-construction}

\subsubsection{Node Features}
Each residue node in our protein graph incorporates two complementary information sources that together provide a rich representation of both local structural properties and evolutionary context:

% Each residue node stores two information sources \textcolor{purple}{H: This is very wordy. Can we reduce it. Refer to previous paper which introduced it and add to Appendix?}:

% \textcolor{red}{mansoor: add references to other papers such as alphafold that use/define such descriptors}

\paragraph{Local geometry \& physicochemistry.}

Each residue $v_i \in \mathcal{V}$ is annotated with a 105-dimensional geometric and biochemical feature vector $\mathbf{h}_i^{\text{geo}} \in \mathbb{R}^{d_{\text{geo}}}$ that encodes the type, position, distance, direction, angle, and orientation of each residue. 
Such residue-level descriptors are widely employed in diverse protein-related studies in structural bioinformatics~\citep{wu2025raad,jing2020gvp_gnn,jumper2021alphafold}.
This vector is constructed as follows:
\begin{equation}\label{equ:node-feat-res-graph}
\mathbf{h}_i^{\text{geo}} = \Big[ E_{\text{type}}(v_i), \ E_{\text{pos}}(i), \ \sin(\eta_i), \ \cos(\eta_i), \ \operatorname{RBF}(\|\mathbf{x}_{i,C_\alpha} - \mathbf{x}_{i,\xi}\|), \ Q_i^\top \frac{\mathbf{x}_{i,\xi} - \mathbf{x}_{i,C_\alpha}}{\|\mathbf{x}_{i,\xi} - \mathbf{x}_{i,C_\alpha}\|} \Big],
\end{equation}
\[
% \mathbf{h}_i^{\text{geo}} \in \mathbb{R}^{d_{\text{geo}}}, \quad d_n = \underbrace{d_{\text{type}}}_{\text{20}} + \underbrace{d_{\text{pos}}}_{\text{16}} + \underbrace{12}_{\sin/\cos \eta_i} + \underbrace{3\,d_{\text{rbf}}}_{\text{48}} + \underbrace{9}_{Q_i^{\top}\mathbf{u}_i},
\]
where:
\begin{itemize}
    \item \textbf{$E_{\text{type}}$}: Embedding for amino acid residue type (e.g., arginine, glycine).
    \item \textbf{$E_{\text{pos}}$}: Positional encoding of residue index in the sequence, enabling the model to distinguish between identical amino acids based on their sequence context. This positional information is crucial for understanding long-range dependencies and structural motifs, as amino acids at different sequence positions (N-terminus vs. C-terminus, loop regions vs. secondary structures) often play different functional roles even if they are the same amino acid type.
    \item \textbf{$\eta_i$}: Local backbone geometry encoded through six fundamental angles that determine how the protein chain folds at each residue $v_i$ and are encoded by their sine and cosine (12 scalars). 
    Bond angles ($\alpha_i$, $\beta_i$, $\gamma_i$) describe the geometric constraints of covalent bonds, while dihedral angles ($\psi_i$, $\phi_i$, $\omega_i$) capture the rotational freedom that gives rise to secondary structures like helices and sheets.

    \item \textbf{$\operatorname{RBF}(\cdot)$}: Radial basis function encoding distances between $\text{C}_\alpha$ and other backbone atoms ($\xi \in \{\text{C}_\beta, \text{N}, \text{O}\}$), with each distance represented by 16 Gaussian basis functions.
    % \item \textbf{$Q_i^{\top}\mathbf{u}_i \in \mathbb{R}^{3 \times 3}$}: Local coordinate frame derived from $\text{C}_\alpha$, $\text{C}_\beta$, and $\text{N}$ atoms, ensuring E(3)-equivariance. \textcolor{teal}{Huirong: Issue: No definition of $\mathbf{u}_i$ . }
    
    \item \textbf{$Q_i^{\top}\mathbf{u}_i$}: Here, $Q_i \in \mathbb{R}^{3 \times 3}$ is the orthonormal rotation matrix defining the local coordinate system constructed from the $\text{C}_\alpha$, $\text{C}_\beta$, and $\text{N}$ atoms of residue $i$, and $\mathbf{u}_i = [\mathbf{u}_i^{1}, \mathbf{u}_i^{2}, \mathbf{u}_i^{3}] \in \mathbb{R}^{3 \times 3}$ contains the normalized direction vectors between these atoms (e.g.,
    $\mathbf{u}_i^{1} = \frac{\mathbf{x}_{i,\text{C}_\beta} - \mathbf{x}_{i,\text{C}_\alpha}}{\|\mathbf{x}_{i,\text{C}_\beta} - \mathbf{x}_{i,\text{C}_\alpha}\|}$).
    The matrix product $Q_i^{\top}\mathbf{u}_i$ transforms these direction vectors into the local coordinate frame and is flattened to yield a 9-dimensional feature vector. Note that the oxygen atom is stored in the coordinate matrix for other calculations (like the RBF distance features), but isn't used for the local coordinate frame construction.
    
    \item The coordinates are held in a $3 \times 4$ matrix, which is used in the calculation of node and edge features.  
    % \textcolor{teal}{Huirong: Issue: I thought we want to use $\mathbf{x}_i$ for $\mathbf{x}_{i,C_\alpha}$? Because it might be an issue that these two are the same but we use both notations.}         % \todo{bold x for the node features}
        \[
        \mathbf{X}_i =
        \begin{bmatrix}
        \mathbf{x}_{i,{\text{N}}} & \mathbf{x}_{i,\text{C}_\alpha} & \mathbf{x}_{i,\text{C}_{\beta}} & \mathbf{x}_{i,\text{O}} 
        \end{bmatrix} \quad \in \mathbb{R}^{3 \times 4}, \quad \text{where} \quad
        \mathbf{x}_{i,\xi} \in \mathbb{R}^{3}
        \]
\end{itemize}

\subsubsection{Edge features}

% \textcolor{purple}{H: Quite wordy, but not easy to follow} 
We compute a 100-dimensional edge feature vector $\mathbf{f}_{i,j} \in \mathbb{R}^{d_f}$ that describes the spatial and sequential relationship between two residues $v_i$ and $v_j$.
This vector integrates multiple complementary descriptors to provide a rich representation of inter-residue interactions~\citep{jing2020gvp_gnn} and is defined as follows: 
% \textcolor{teal}{Huirong: Issue: same notation issue $\mathbf{x}_i$ for $\mathbf{x}_{i,C_\alpha}$?} \murray{I think using both interchangeably is okay, whichever is more convenient --- I left a note about this convenience in the statement of $\mathbf{x}_i$ being short for $\mathbf{x}_{i,C_\alpha}$ above}
\begin{equation}
\hspace{-1.5em}
\begin{small}
\begin{aligned}
    \mathbf{f}_{i,j} = \Big\{E_{\text{type}}&(e_{i,j}), E_{\text{pos}}(i-j), \operatorname{RBF}(\|\mathbf{x}_{i,\text{C}_\alpha} - \mathbf{x}_{j,\xi}\|), Q_i^{\top}\frac{\mathbf{x}_{j,\xi} - \mathbf{x}_{i,\text{C}_\alpha}}{\left\|\mathbf{x}_{j,\xi}-\mathbf{x}_{i,\text{C}_\alpha}\right\|}, q\left(Q_i^{\top} Q_j\right) \ \big| \ \xi\Big\},
\end{aligned}
\end{small}
\end{equation}
% \[
% \mathbf{e}_{ij} \in \mathbb{R}^{d_e}, \quad d_e = d_{\text{type}}^{\text{edge}} + d_{\text{pos}} + 4\,d_{\text{rbf}} + 12 + 4,
% \]

% \todo{arrows b/w atoms -- bold X instead of }

where $E_{\text{type}}(e_{i,j})$ is the one-hot encoding of relations $\mathbf{r}_{i,j}$ of length 4 between two residues, and the positional encoding $E_{\text{pos}}(i-j)$ encodes the relative sequential position sinusoidally to 16 scalars. The third and fourth terms are distance and direction encodings of four backbone atoms $\xi$ in residue $v_j$ in the local coordinate frame $Q_i$. These four inter-residue distances $\{d(\text{C}_\alpha,\text{C}_\beta)$, $d(\text{C}_\alpha,\text{N})$, $d(\text{C}_\alpha,\text{O})$, $d(\text{C}_\alpha,\text{C}_\alpha)\}$) are each represented by 16 Gaussian basis functions.
% \murray{now I am thinking that $\rightarrow$ is not good because it might indicate chemical reaction, e.g., $CH_4 + 2O_2 \rightarrow CO_2 + 2H_2O$.  the dash is no good either, because it might indicate a single bond, e.g., $H_2O = H-O-H$.  Maybe just $d(C_\alpha,C_\beta)$?}. 
The last term $q\left(Q_i^{\top} Q_j\right)$ is the quaternion representation $q(\cdot)$ of $Q_i^{\top} Q_j$. 
By integrating sequence position, local geometry, and orientation, the model understands the residue identity from global pose and enables robust generalization across structures. 
These node and edge features are visualized in Figure~\ref{fig:raad-features}(a).

\subsubsection{Edge Relations}
Since spatial proximity between residues alone cannot capture hydrogen bonding’s directional specificity or electrostatic complementarity’s charge-based selectivity, we use multi-relational edges to capture distinct interaction types~\citep{zhang2022gearnet}. By treating each relation separately, the model learns complex interaction patterns within the protein. 
Hence, to expand the contexts of these interactions, we divide the edges into four different types of relations $\mathcal{R}=\{\rho_1, \rho_2, \rho_3,\rho_4\}$, including \textit{(i)} \textbf{sequential relations} $\rho_1$ and $\rho_2$ between two residues with relative sequential distance equal to 1 (peptide bond) and 2 (short-range torsion coupling); \textit{(ii)} \textbf{ spatial relations} between residues that are from the same component and spatially connected due to $K$-nearest neighbors (relation $\rho_3$ that captures local packing shell) or with a Euclidean distance less than 8\AA\ (relation $\rho_4$) capturing medium-range contact between residues within the protein structure~\citep{wu2025raad}.

To illustrate the importance of edge relations, consider a discontinuous epitope that spans two antigen loops. The sequential edges ($\rho_1,\rho_2$) maintain the structural integrity of each loop, while the spatial edges ($\rho_3,\rho_4$) capture the three-dimensional proximity between residues from different loops, allowing the model to represent how distant sequence regions come together to form a cohesive binding interface.
We provide a schematic of edge relations in Fig.~\ref{fig:raad-features} (b), where each edge $e_{i,j}\in\mathcal{E}$ is associated with a set of relations $\mathbf{r}_{i,j}\in\mathcal{R}$. Besides, two relations $\rho_1$ (with sequence distance equal to 1) can derive a relation $\rho_2$ (with sequence distance equal to 2), while an edge may connect two nodes (residues) due to both relations $\rho_3$ and $\rho_4$.

% To illustrate the importance of edge relations, consider a discontinuous epitope spanning two antigen loops and sequential edges (\murray{$r_1,r_2$}) maintain loop integrity, while hydrogen bonds ($r_3$) and spatial edges (\murray{$r_4$}) stabilize interactions with the antibody's paratope. 

\begin{figure}[h!]
    \centering
    \includegraphics[width=0.45\linewidth]{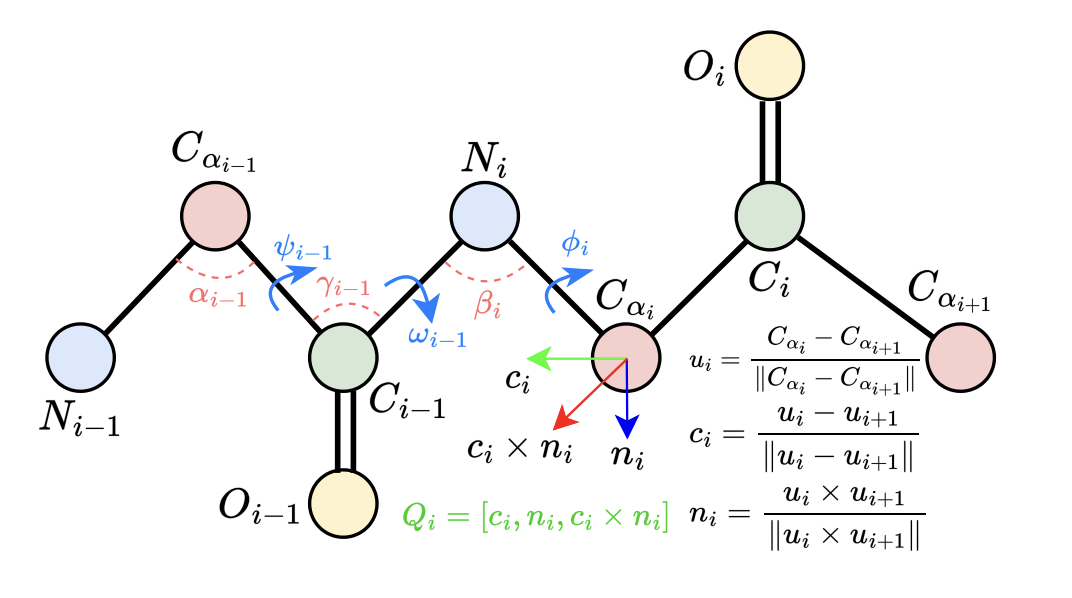}%
    \includegraphics[width=0.45\linewidth]{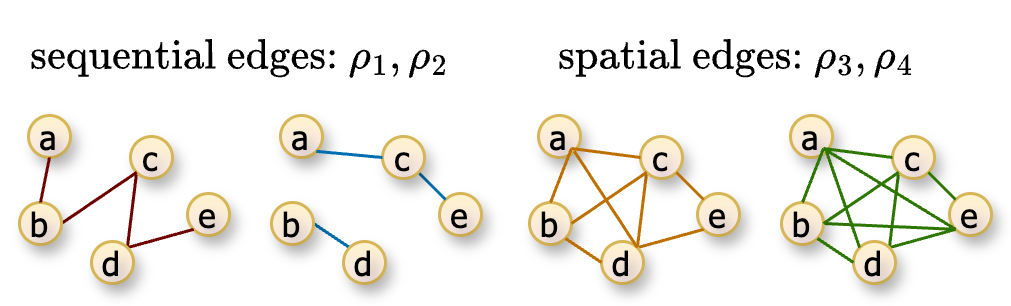}
    \caption{(a) Node and edge features encoding position, distance, direction, angle, and orientation (Figure credit: ~\citep{wu2025raad}). (b) Four edge relations (sequential $\rho_1$, $\rho_2$; spatial $\rho_3$, $\rho_4$ ). To avoid complexity, we visualize only some edges. }
    \label{fig:raad-features}
\end{figure}

% \textcolor{red}{mansoor: add numbering to the nodes to show the multi-relational edges}

\subsection{Auxiliary Distance Classification Loss}\label{appendix:geo_loss}

Let $\mathcal{M}=\{(i,j)\,:\,d_{ij}\le D_{\max}\}$ be the set of antigen-antibody residue pairs within the maximum distance cutoff, where $d_{ij}$ is the Euclidean distance between residues $i$ and $j$. The bins are defined by distances $\{d_0,d_1,d_2,d_3,d_4\}=\{0,4,8,16,32\}\,\text{\AA}$, creating $B=4$ bins:
\begin{align}
b(i,j) = \arg\max_{b\in\{1,\dots,4\}}\mathbf{1}[\,d_{b-1}\le d_{ij}<d_b\,].
\end{align}
The network predicts per-pair distance logits $\boldsymbol{\Delta}_{ij}\in\mathbb{R}^{5}$, but only the first $B=4$ components $\widehat{\boldsymbol{\Delta}}_{ij}\in\mathbb{R}^{4}$ are used for pairs in $\mathcal{M}$, ignoring the ``far'' class beyond $D_{\max}=32$\,\AA. The class probabilities are:
\begin{align}
p_{ijb} = \frac{\exp(\widehat{\Delta}_{ijb})}{\sum_{b'=1}^{4}\exp(\widehat{\Delta}_{ijb'})}.
\end{align}
The loss combines class balancing with distance weighting:
\begin{align}
\mathcal{L}_{\mathrm{geo}} = -\,\frac{1}{|\mathcal{M}|}\sum_{(i,j)\in\mathcal{M}} w_{ij}\, \sum_{b=1}^{4}\alpha_b\,\mathbf{1}[\,b(i,j)=b\,]\, \log p_{ijb},
\end{align}
where $\alpha_b > 0$ are class-balance weights computed from empirical bin frequencies within $\mathcal{M}$ and $w_{ij} > 0$ are distance weights inversely proportional to $d_{ij}$, normalized to unit mean over $\mathcal{M}$.

\subsection{PLM Sensitivity Analysis}\label{appendix:plm_embeddings}
\label{appendix:plm_analysis}

We compared our geometric feature representation against protein language model (PLM) embeddings across EpiFormer and four baselines. PLM embeddings $\mathbf{z}^{\text{plm}}_i \in \mathbb{R}^{d_c}$ are extracted from pre-trained models (ESM-2~\citep{lin2023esm2} for antigens, AntiBERTy for antibodies) and projected via $\mathbf{h}_i^{\text{plm}} = \mathbf{W}_{\text{plm}} \mathbf{z}_i^{\text{plm}}$. Table~\ref{tab:plm_sensitivity} shows that EpiFormer achieves higher performance with geometric features than with PLMs, while all baselines degrade when PLMs are removed. This is consistent with the observation that epitope residues are not evolutionarily conserved~\citep{ponomarenko2007conformational}, making the conservation signals encoded by PLMs uninformative for binding prediction. Additionally, combining high-dimensional PLM features (1,280D) with lower-dimensional geometric features (105D) creates a modality dominance problem~\citep{wu2022characterizing} where the model overfits to the PLM pathway.

% These embeddings are then projected to a low-dimensional subspace through the linear transformation:
% \[
% \mathbf{h}^{\text{plm}}_i = \mathbf{W}_{\text{plm}}\mathbf{z}^{\text{plm}}_i, \; \quad \text{where} \quad \mathbf{W}_{\text{plm}}\in\mathbb{R}^{{d_c}\times d_{\text{plm}}}
% \]

% We compute one vector $\mathbf{z}^{\text{PLM}}_i \in \mathbb{R}^{d_{\text{plm}}}$ for each residue $i$, extracted with a protein-language model (e.g., ESM-2~\citep{lin2023esm2}). 
% The only trainable part touching PLM features is the small bottleneck $\mathbf{W}_{\text{plm}}$ that compresses them.
% \murray{what is $\mathbf{z}_i$, it is never explained.  Also is $\mathbf{h}_i = \mathbf{h}_i^{geo} || \mathbf{h}_i^{plm}$?  i.e., it has the geometric part and PLM part ? it is unclear}

EpiFormer is the only method that improves without PLMs, while methods that rely on PLMs for their primary representations degrade substantially. This suggests that the interleaved cross-attention captures the relational binding signal directly from geometric features, whereas other architectures depend on PLMs to compensate for weaker structural reasoning.

\begin{table}[h!]
\centering
\scriptsize
\caption{PLM sensitivity analysis on the epitope-ratio test set. PLM embeddings are replaced with geometric features. }
\label{tab:plm_sensitivity}
\begin{tabular}{llcccc}
\toprule
\textbf{Method} & \textbf{Input Features} & \textbf{AUC} & \textbf{F1} & \textbf{MCC} & \textbf{$\Delta$ AUC} \\
\midrule
\textbf{EpiFormer} & ESM2-650M + AntiBERTy & 0.889 & 0.433 & 0.404 & --- \\
\textbf{EpiFormer} & \textbf{Geometric} & \textbf{0.924} & \textbf{0.482} & \textbf{0.464} & \textbf{+0.035} \\
\midrule
EquiformerV2 & ESM-2 (default) & 0.815 & 0.241 & 0.243 & --- \\
EquiformerV2 & Geometric & 0.798 & 0.241 & 0.230 & $-0.017$ \\
\midrule
DiscoTope3 & ESM-IF1 (default) & 0.817 & 0.249 & 0.243 & --- \\
DiscoTope3 & Geometric & 0.684 & 0.166 & 0.123 & $-0.133$ \\
\midrule
WALLE & ESM-2 + AntiBERTy & 0.809 & 0.202 & 0.206 & --- \\
WALLE & Geometric & 0.687 & 0.145 & 0.109 & $-0.122$ \\
\midrule
MIPE & ESM + AbLang & 0.827 & 0.337 & 0.356 & --- \\
MIPE & Geometric & 0.739 & 0.148 & 0.184 & $-0.088$ \\
\bottomrule
\end{tabular}
\end{table}

\subsection{Implementation details}\label{appendix:implementation}

The model is trained with an Adam optimizer and a ReduceLROnPlateau learning-rate schedule with decoupled weight decay. The learning rate is selected from the sweep-defined range and fixed at approximately $6.5{e}{-5}$ in the best configuration. A ReduceLROnPlateau scheduler monitors validation performance and decays the learning rate on stagnation, while an early stopping with patience of 10 epochs prevents overfitting and reduces variance in final selection. 
We used SiLU activation functions~\citep{elfwing2018silu} throughout the model because they provide stable gradients via their smooth, non-monotonic curve, which are crucial for training deep graph networks. 
% Experiments with ReLU and GELU resulted in less stable training and were consistent with prior work showing SiLU's advantages in GNN architectures.
The hyperparameter tuning was performed via a Bayesian optimization sweep in Weights \& Biases to minimize the validation loss, and the best hyperparameters were chosen within a predefined search space using bounded uniform and log-uniform distributions. We ran the hyperparameter sweeping experiments separately for the epitope ratio and epitope group split settings.

\begin{itemize}
    % \item The model weight decay was sampled log-uniformly over $[1{e}{-6},\,1e{-5}]$ to prevent overfitting by penalizing large weights.

    % \item The model dropout was sampled log-uniformly over $[0.05,0.5]$ to improve the generalizability of the model, and the best performing configuration used a dropout of 0.132.

    \item The model weight decay was sampled log-uniformly over $[1{e}{-6},\,1{e}{-4}]$ to prevent overfitting by penalizing large weights, with the best configuration using approximately $9.9{e}{-5}$.

    \item The model dropout was sampled log-uniformly over $[0.05,0.5]$ to improve the generalizability of the model, and the best performing configuration for the epitope ratio model used a dropout of 0.132, while the epitope group model used a dropout of 0.053.

    \item The number of layers in the encoder module is treated as a hyperparameter and was chosen from the set $[3,4,5]$ while for the decoder, the number of layers was chosen from the $[2,3,4]$. We experimented with different encoder hidden dimensions and the best configuration of 128 was picked from $[64, 128, 256, 512]$ across different runs. 

    \item We also experimented with different number of attention heads for the encoder and decoder MHCA (2,4,8,16) and picked the best model with 8 attention heads.

    \item A batch size of 8 was chosen from [4,8,16,32] across different runs.

    \item $\alpha_{\text{ag}}$ and $\alpha_{\text{ab}}$ are initialised to $0.05$

\end{itemize}

For the loss coefficients, the best configuration for the epitope-ratio split uses $\lambda_{\mathrm{edge}}=1.0$, $\lambda_{\mathrm{node}}=0.4816$, $\lambda_{\mathrm{geo}}=0.0514$, $\beta_{\mathrm{BCE}}=9.3249$, $\beta_{\mathrm{Dice}}=2.2966$, $\beta_{\mathrm{sparsity}}=0.3068$, $\pi_{\mathrm{epi}}=15.2856$, $\pi_{\mathrm{edge}}=58.7077$, label smoothing $\epsilon=0.1$, and a distance cutoff of 32\,\AA\ for $\mathcal{L}_{\mathrm{geo}}$. 
The best run for the model trained on the epitope group split uses $\lambda_{\mathrm{node}}=0.143$, $\lambda_{\mathrm{edge}}=1.0$,  $\lambda_{\mathrm{geo}}=0.158$, $\beta_{\mathrm{BCE}}=9.16$, $\beta_{\mathrm{Dice}}=1.83$, $\beta_{\mathrm{sparsity}}=0.64$, $\pi_{\mathrm{epi}}=53.18$, $\pi_{\mathrm{edge}}=44.11$, label smoothing $\epsilon=0.1$, and a distance cutoff of 32\,\AA\ for $\mathcal{L}_{\mathrm{geo}}$.

% hparams from previous playful-sweep######

\begin{itemize}

    \item The bipartite edge positive-class weight $\pi_{\mathrm{edge}}$ for the BCE-with-logits interaction loss was sampled log-uniformly over $[30,\,150]$, accommodating variation in pairwise sparsity across complexes.
    
    \item The node objective weight $\lambda_{\mathrm{node}}$ was sampled uniformly over $[0.05,\,0.5]$, exploring the trade-off between residue supervision and the other objectives.

    \item The binary cross-entropy multiplier within the node objective $\beta_{\mathrm{BCE}}$ was drawn uniformly over $[2,\,10]$, spanning weak to strong emphasis on classification error.

    \item The Dice multiplier $\beta_{\mathrm{Dice}}$ was drawn uniformly over $[0.1,\,3.0]$, reflecting its role as a secondary calibrator under class imbalance.

    \item The epitope positive-class weight $\pi_{\mathrm{epi}}$ was sampled log-uniformly over $[10,\,60]$, covering roughly an order of magnitude in imbalance without biasing toward either extreme.

    \item The per-graph epitope count-regularizer weight $\beta_{\mathrm{sparsity}}$ was sampled uniformly over $[0.05,\,1.0]$, enabling calibration of predicted positive counts at the complex level.

    \item The auxiliary distance-classification weight $\lambda_{\mathrm{geo}}$ was sampled uniformly over $[0.05,\,0.3]$, with class balancing across distance bins and distance-aware pair weighting kept enabled and the maximum distance fixed at $32\,\text{\r{A}}$ for all trials.

\end{itemize}

The experiments were performed on an NVIDIA H100 GPU and it took around 35-60 minutes for a single hyperparameter sweeping experiment of around 50 epochs. To ensure full reproducibility of our experiments, we implement random seed management across all computational components, including NumPy (\texttt{numpy.random}), Python (\texttt{random}), PyTorch (\texttt{torch.manual\_seed}), and CUDA operations (\texttt{torch.cuda.manual\_seed\_all}), while additionally controlling worker initialization in data loaders and disabling non-deterministic algorithms (\texttt{torch.backends.cudnn.deterministic=True}).

\subsection{Extended Ablation Studies}\label{appendix:ablations}

We ablate the GNN backbone, pooling strategy, and loss configuration.

\subsubsection{GNNs}

To assess the impact of geometric message passing, we replaced the EGNN-R layers with alternative GNN architectures while keeping all other components fixed (Table~\ref{tab:gnn_backbone}). Even simple backbones like GCN (0.447 F1) outperform all 19 baselines, confirming that the EpiFormer framework, namely its interleaved cross-attention and joint losses, is the primary performance driver. EGNN-R provides an additional $+$0.028 F1 over the best non-equivariant variant (RGCN, .454) through equivariant coordinate updates and multi-relational encoding. We include vanilla EGNN and EGNN-R with frozen coordinates to decompose the contributions, which are $+$0.018 F1 from equivariance (EGNN vs GAT, .470 vs .452), $+$0.007 from multi-relational edges (RGCN vs GCN, .454 vs .447), and $+$0.017 from coordinate updates (EGNN-R vs EGNN-R frozen, .482 vs .465).

% While traditional GNNs like GCN, GIN, and GAT perform competitively but below EGNN-R, which highlights the critical importance of incorporating geometric equivariance for accurate modeling of three-dimensional protein binding interfaces. Figure~\ref{fig:tsne_silhouette} visualizes the learned representations: (a) t-SNE projections show clear separation between epitope and non-epitope residues, and (b) silhouette scores demonstrate progressive improvement in class separability from early encoder layers through the decoder.

\subsubsection{Representation Quality}

Figure~\ref{fig:tsne} visualizes the learned antigen representations via t-SNE, showing that epitope residues form distinct clusters separated from non-epitope residues, with some overlap reflecting the inherent difficulty of the classification task. Figure~\ref{fig:epiformer-model}(a) quantifies this separation, showing that silhouette scores increase progressively from the early encoder layers through the decoder, which confirms that each layer improves class separability. Figure~\ref{fig:epiformer-model}(b) shows the learned gating weights for the default configuration ($\alpha_{\text{init}}{=}0.05$); the antibody gate consistently exceeds the antigen gate across all encoder blocks, consistent with the asymmetry discussed in Section~4.5.

\begin{figure}[h!]
    \centering
    \includegraphics[width=0.48\linewidth]{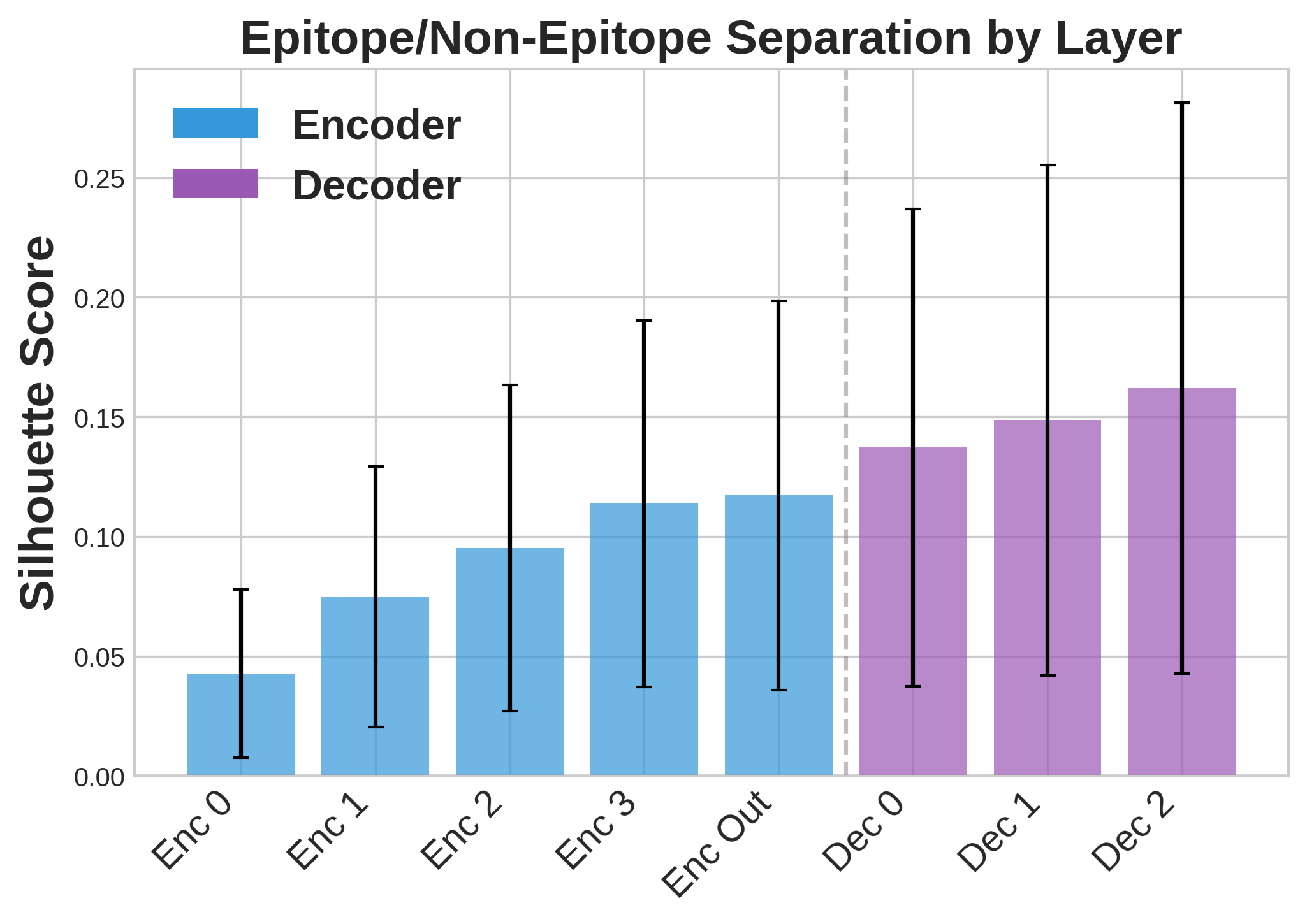} %
    \includegraphics[width=0.48\linewidth]{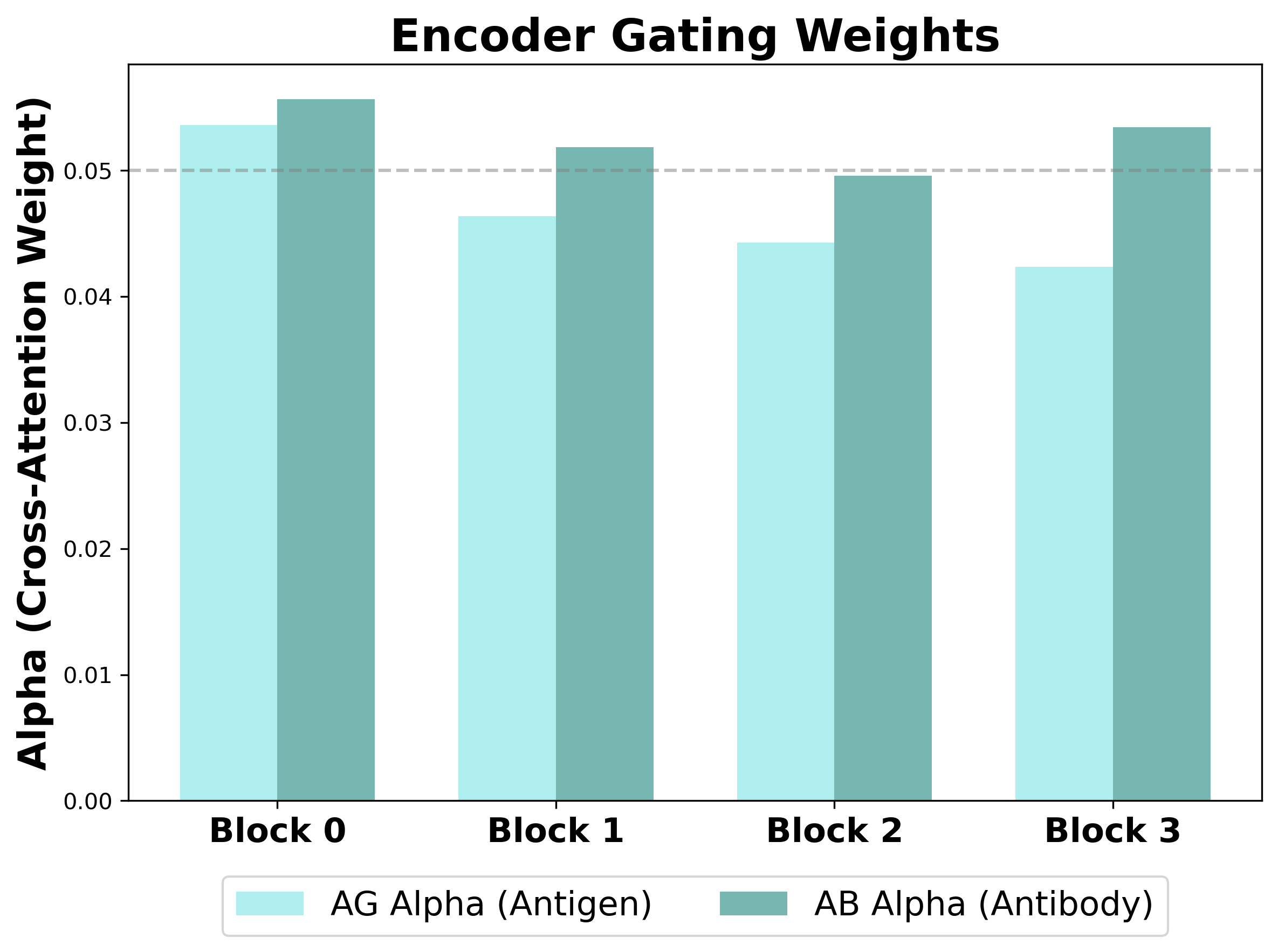}
    \caption{(a) Layer-wise silhouette scores showing progressive class separability from encoder to decoder. (b) Learned gating weights ($\alpha_{\text{ag}}$, $\alpha_{\text{ab}}$) controlling the cross-attention contribution per encoder block (default initialization $\alpha{=}0.05$).}
    \label{fig:epiformer-model}
\end{figure}

We evaluated different strategies for mapping the bipartite interaction matrix ${\hat{\mathcal{E}}}_{\text{bg}}$ to per-residue epitope probabilities (Table~\ref{tab:pooling_ablation}). All pooling strategies converge to a narrow F1 range (0.471--0.483), with mean pooling achieving the best F1 while being the simplest. This suggests that the interleaved encoder produces a well-calibrated interaction matrix largely independent of the downstream aggregation choice.

\begin{figure}[h!]
    \centering
    \includegraphics[width=0.48\linewidth] {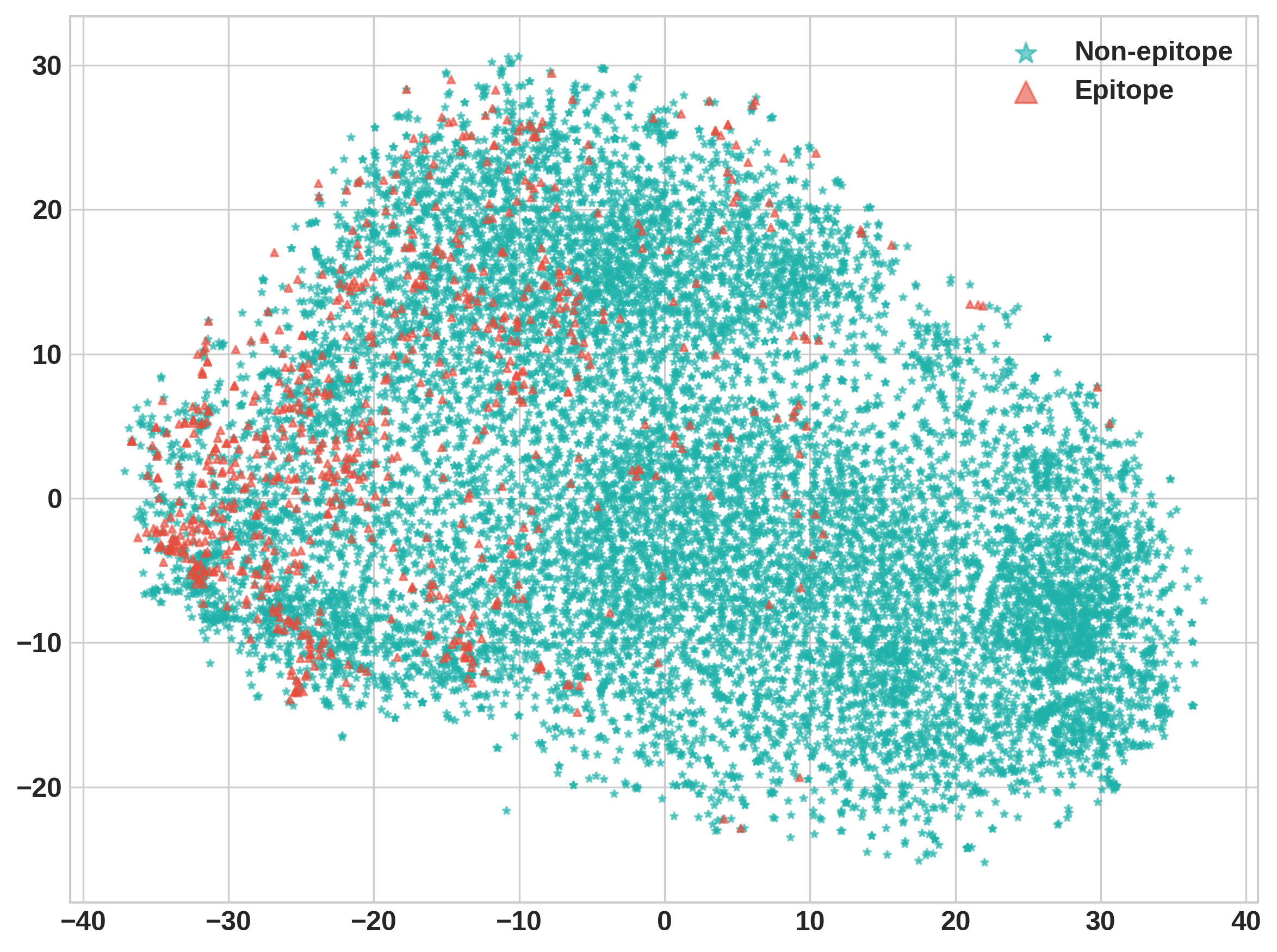} 
    \caption{t-SNE visualization of antigen residue embeddings from the final encoder layer, colored by epitope (red) and non-epitope (teal) labels.}
    \label{fig:tsne}
\end{figure}

\begin{table}[h!]
\centering
\scriptsize
\caption{Pooling strategy comparison on the epitope-ratio split.}
\label{tab:pooling_ablation}
\begin{tabular}{lcccc}
\toprule
\textbf{Pooling} & \textbf{AUC} & \textbf{F1} & \textbf{MCC} \\
\midrule
Max & 0.917 & 0.473 & 0.448 \\
\textbf{Mean} & \textbf{0.921} & \textbf{0.483} & \textbf{0.454} \\
Top-2 & 0.920 & 0.471 & 0.454 \\
Top-3 & 0.923 & 0.481 & 0.458 \\
Softmax Attn. & 0.921 & 0.480 & 0.452 \\
\bottomrule
\end{tabular}
\end{table}

To verify that \emph{EpiFormer}'s gains are not explained by pooling choice, we applied top-$k$ pooling to antibody-aware baselines with explicit interaction matrices (Table~\ref{tab:topk_fairness}). Top-$k$ degrades all four baselines, most severely for MIPE ($-$0.331 F1), while \emph{EpiFormer}'s F1 varies by only 0.011 across strategies.

\begin{table}[h!]
\centering
\small
\caption{Top-$k$ pooling applied to antibody-aware baselines. Original AUC/F1/MCC are the published configurations reported in Table~\ref{tab:baseline_comparison}; $\Delta$ F1 is computed relative to each method's original pooling on the same test split.}
\label{tab:topk_fairness}
\scriptsize
\begin{tabular}{llcccc}
\toprule
\textbf{Method} & \textbf{Pooling} & \textbf{AUC} & \textbf{F1} & \textbf{MCC} & \textbf{$\Delta$ F1} \\
\midrule
WALLE & Original & 0.808 & 0.203 & 0.208 & --- \\
WALLE & + Top-2 & 0.809 & 0.159 & 0.145 & $-0.044$ \\
\midrule
PECAN & Original & 0.740 & 0.229 & 0.182 & --- \\
PECAN & + Top-3 + edge loss & 0.641 & 0.198 & 0.118 & $-0.031$ \\
\midrule
MIPE & Original & 0.827 & 0.337 & 0.356 & --- \\
MIPE & + Top-2 & 0.651 & 0.006 & 0.022 & $-0.331$ \\
\midrule
CheapNet & Original & 0.719 & 0.184 & 0.149 & --- \\
CheapNet & + Top-2 + edge loss & 0.724 & 0.175 & 0.141 & $-0.009$ \\
\midrule
\textbf{EpiFormer} & Mean pooling & 0.924 & 0.482 & 0.464 & --- \\
\textbf{EpiFormer} & Top-2 pooling & 0.920 & 0.471 & 0.454 & $-0.011$ \\
\textbf{EpiFormer} & Top-3 pooling & 0.923 & 0.481 & 0.458 & $-0.001$ \\
\bottomrule
\end{tabular}
\end{table}

\subsubsection{Loss}

We performed ablations to evaluate the contribution of each loss component (primary, auxiliary, and regularizers) to epitope prediction, as shown in Table~\ref{tab:loss_ablation}.

% \textcolor{red}{also add the definition of infonce loss that we used in the ablations and mention why it doesn't help much with reference to the regcl paper of conflict in the optimization with bce}

\paragraph{Contrastive Learning Loss ($\mathcal{L}_{\text{InfoNCE}}$).}
We also performed contrastive learning with the SimCLR InfoNCE (Information Noise Contrastive Estimation) loss~\citep{chen2020infonce} to learn discriminative representations by contrasting positive and negative residue pairs within and across protein chains. The contrastive loss combines intra-chain and inter-chain objectives:
\begin{equation}
    \mathcal{L}_{\text{contrastive}} = \lambda_{\text{intra}} \mathcal{L}_{\text{intra}} + \lambda_{\text{inter}} \mathcal{L}_{\text{inter}},
\end{equation}
where $\lambda_{\text{intra}}$ and $\lambda_{\text{inter}}$ balance the relative importance of within-chain and cross-chain contrastive learning.

\subparagraph{Intra-Chain Contrastive Loss ($\mathcal{L}_{\text{intra}}$).}
The intra-chain loss encourages similar representations for residues with the same label (epitope/non-epitope or paratope/non-paratope) within each protein chain:
\begin{equation}
    \mathcal{L}_{\text{intra}} = \mathcal{L}_{\text{intra}}^{\text{ag}} + \mathcal{L}_{\text{intra}}^{\text{ab}}.
\end{equation}
For each chain (antigen or antibody), the loss is computed as:
\begin{equation}
    \mathcal{L}_{\text{intra}}^{\text{chain}} = -\frac{1}{|\mathcal{P}|} \sum_{i \in \mathcal{P}} \log \frac{\sum_{j \in \mathcal{P}_{i^+}} \exp(\mathbf{h}_i^T \mathbf{h}_j / \tau)}{\sum_{k \in \mathcal{N}_i} \exp(\mathbf{h}_i^T \mathbf{h}_k / \tau)},
\end{equation}
where $\mathcal{P} = \{i : y_i = 1\}$ is the set of positive (binding) residues, $\mathcal{P}_{i^+} = \{j \in \mathcal{P} : j \neq i\}$ are other positive residues sharing the same label as anchor $i$, $\mathcal{N}_i$ includes all negative residues for anchor $i$, $\mathbf{h}_i, \mathbf{h}_j$ are $L_2$-normalized residue embeddings, and $\tau$ is the temperature parameter controlling concentration.

\subparagraph{Inter-Chain Contrastive Loss ($\mathcal{L}_{\text{inter}}$).}
The inter-chain loss promotes alignment between epitope and paratope representations across antigen-antibody pairs:
\begin{equation}
    \mathcal{L}_{\text{inter}} = \mathcal{L}_{\text{ag} \to \text{ab}} + \mathcal{L}_{\text{ab} \to \text{ag}}.
\end{equation}
The bidirectional formulation ensures symmetric learning:
\begin{equation}
    \mathcal{L}_{\text{ag} \to \text{ab}} = -\frac{1}{|\mathcal{P}_{\text{ag}}|} \sum_{i \in \mathcal{P}_{\text{ag}}} \log \frac{\sum_{j \in \mathcal{P}_{\text{ab}}} \exp(\mathbf{h}_i^{\text{ag}T} \mathbf{h}_j^{\text{ab}} / \tau)}{\sum_{k \in \mathcal{N}_{\text{cross}}} \exp(\mathbf{h}_i^{\text{ag}T} \mathbf{h}_k / \tau)},
\end{equation}
where $\mathcal{P}_{\text{ag}}, \mathcal{P}_{\text{ab}}$ are epitope and paratope residue sets, $\mathcal{N}_{\text{cross}}$ includes negative residues from both chains, and the loss pulls epitope embeddings closer to paratope embeddings while pushing them away from non-binding residues. Our experiments show that contrastive learning didn't contribute to improving the classification performance. We attribute this to conflicting optimization objectives between BCE loss and standard InfoNCE loss, a phenomenon demonstrated in a recent work~\citep{ji2024regcl}.

\begin{table}[h!]
\centering
\caption{Loss function ablation on the epitope-ratio split. Each row adds the listed term to the configuration above it; best in \textbf{bold}.}
\label{tab:loss_ablation}
\small
\begin{tabular}{lcccc}
\toprule
\textbf{Loss Config} & \textbf{AUC} & \textbf{F1} & \textbf{MCC} & \textbf{$\Delta$ F1} \\
\midrule
$\mathcal{L}_{\mathrm{BCE}}$ & .916 & .336 & .366 & $-.160$ \\
$+\,\mathcal{L}_{\mathrm{edge}}$ & .916 & .323 & .356 & $-.173$ \\
$+\,\mathcal{L}_{\mathrm{Dice}}$ & .916 & .329 & .361 & $-.167$ \\
$+\,\mathcal{L}_{\mathrm{sparsity}}$ & .917 & .461 & .440 & $-.035$ \\
$\textbf{+}\,\mathcal{L}_{\mathrm{geo}}$ \textbf{(full)} & \textbf{.926} & \textbf{.496} & \textbf{.475} & --- \\
$+\,\mathcal{L}_{\mathrm{InfoNCE}}$ & .920 & .467 & .443 & $-.029$ \\
\bottomrule
\end{tabular}
\end{table}

\subsection{Cross-Dataset Evaluation}\label{appendix:crossdataset}

We evaluated EpiFormer on three external benchmarks with no overlap with AsEP training data, SAbDab~\citep{dunbar2014sabdab} (494 diverse antibody-antigen complexes from the Structural Antibody Database, post-2023), CoV-AbDab~\citep{raybould2021cov} (170 SARS-CoV-2 neutralizing antibody complexes), and ANABAG~\citep{grandguillaume2025anabag} (499 curated complexes). We compare EpiFormer against MIPE, which is the second-best method on AsEP (Table~\ref{tab:baseline_comparison}) and the strongest antibody-aware baseline. Table~\ref{tab:crossdataset_full} reports both zero-shot (no retraining) and leave-one-dataset-out (LODO) fine-tuning settings.

\begin{table}[h!]
\centering
\small
\caption{Cross-dataset generalization on SAbDab~\citep{dunbar2014sabdab}, CoV-AbDab~\citep{raybould2021cov}, and ANABAG. Zero-shot: models trained on AsEP evaluated without retraining. LODO (leave-one-dataset-out): models fine-tuned on the two external datasets not used for testing. Per-dataset best in \textbf{bold}.}
\label{tab:crossdataset_full}
\begin{tabular}{llcccc}
\toprule
& & \multicolumn{2}{c}{\textbf{Zero-shot}} & \multicolumn{2}{c}{\textbf{LODO}} \\
\cmidrule(lr){3-4} \cmidrule(lr){5-6}
\textbf{Dataset} & \textbf{Method} & \textbf{AUC} & \textbf{F1} & \textbf{AUC} & \textbf{F1} \\
\midrule
\multirow{2}{*}{SAbDab (494)} & \textbf{EpiFormer} & \textbf{0.782} & \textbf{0.297} & \textbf{0.858} & \textbf{0.364} \\
& MIPE & 0.614 & 0.140 & 0.746 & 0.190 \\
\midrule
\multirow{2}{*}{CoV-AbDab (170)} & \textbf{EpiFormer} & \textbf{0.819} & \textbf{0.358} & \textbf{0.890} & \textbf{0.359} \\
& MIPE & 0.730 & 0.175 & 0.813 & 0.212 \\
\midrule
\multirow{2}{*}{ANABAG (499)} & \textbf{EpiFormer} & \textbf{0.758} & \textbf{0.301} & 0.837 & 0.367 \\
& MIPE & 0.719 & 0.220 & \textbf{0.772} & \textbf{0.217} \\
\bottomrule
\end{tabular}
\end{table}

% \subsection{Hyperparameter Sensitivity}\label{appendix:hparam_sensitivity}

% \begin{figure}[h!]
%     \centering
%     \includegraphics[width=1\linewidth]{figures/v7_hyperparam_sensitivity_dual.png}
%     \caption{Sensitivity of F1 score to seven hyperparameters of the joint objective for epitope-ratio and epitope-group sweeps. Most parameters ($\lambda_{\text{geo}}$, $\beta_{\text{Dice}}$, $\beta_{\text{sparsity}}$, $\pi_{\text{epi}}$) show flat trends, indicating robustness to their exact values.}
%     \label{fig:hparam-sensitivity}
% \end{figure}

\subsection{Computational Cost}\label{appendix:compute}

Table~\ref{tab:compute} compares computational cost across baselines. Despite its cross-attention decoder, EpiFormer requires only 0.174~GB peak VRAM, which is lower than EquiformerV2, CheapNet, and EquiPocket, because it operates on residue-level rather than atom-level representations. Table~\ref{tab:scaling} shows that EpiFormer scales sub-linearly with antigen size ($1.61\times$ from small to large), compared to MIPE's near-quadratic scaling ($9.98\times$).

\begin{table}[h!]
\centering
\scriptsize
\caption{Computational cost on the epitope-ratio test set (H100 GPU).}
\label{tab:compute}
\begin{tabular}{llccccc}
\toprule
\textbf{Method} & \textbf{Category} & \textbf{Params (M)} & \textbf{Train (s/ep)} & \textbf{Infer.\ (ms)} & \textbf{VRAM (GB)} & \textbf{F1} \\
\midrule
DiffDock-PP & AB-aware Eq. & 0.04 & 25.6 & 15.90 & 0.041 & 0.186 \\
ATProt & AB-aware GNN & 0.06 & 50.9 & 26.62 & 0.050 & 0.246 \\
DiscoTope3 & AG-only MLP & 0.18 & 2.0 & 0.94 & 0.036 & 0.249 \\
WALLE & AB-aware & 0.25 & 22.1 & 20.96 & 0.039 & 0.202 \\
EquiformerV2 & Equivariant & 0.59 & 63.5 & 45.74 & 0.418 & 0.255 \\
CheapNet & AB-aware & 0.84 & 48.4 & 31.92 & 0.545 & 0.178 \\
MIPE & AB-aware & 0.96 & 719.4 & 239.31 & 0.073 & 0.337 \\
EquiPocket & AG-only Eq. & 1.26 & 87.7 & 89.27 & 0.473 & 0.202 \\
EpiGraph & AG-only GNN & 5.78 & 1482.4 & 15.63 & 0.192 & 0.078 \\
\textbf{EpiFormer} & \textbf{AB-aware Eq.} & \textbf{5.54} & \textbf{150.0} & \textbf{118.65} & \textbf{0.174} & \textbf{0.482} \\
\bottomrule
\end{tabular}
\end{table}

\begin{table}[h!]
\centering
\scriptsize
\caption{Inference latency scaling by antigen size (ms/sample, batch size 1).}
\label{tab:scaling}
\begin{tabular}{lcccc}
\toprule
\textbf{Antigen Size} & \textbf{\# Samples} & \textbf{WALLE (ms)} & \textbf{MIPE (ms)} & \textbf{EpiFormer (ms)} \\
\midrule
Small ($<$200 res.) & $\sim$100 & 19.16 & 108.93 & 100.56 \\
Medium (200--500 res.) & $\sim$55 & 22.94 & 221.79 & 123.00 \\
Large ($>$500 res.) & $\sim$15 & 24.10 & 1087.26 & 161.62 \\
\midrule
\textbf{Ratio (Large/Small)} & & \textbf{1.26$\times$} & \textbf{9.98$\times$} & \textbf{1.61$\times$} \\
\bottomrule
\end{tabular}
\end{table}

\subsection{MHCA Alpha Initialization Ablation}\label{appendix:mhca_ablation}

Table~\ref{tab:mhca} shows the effect of the cross-attention gate initialization $\alpha$ on performance. Removing encoder cross-attention entirely ($\alpha{=}0.0$) causes a notable drop in F1. Even when initialized at zero, the model learns non-zero gating values, confirming that it discovers cross-attention utility without explicit supervision. The default initialization ($\alpha{=}0.05$) achieves the best results. Figure~\ref{fig:perfcomp} visualizes these trends across all four metrics.

Figure~\ref{fig:attnmaps} shows encoder cross-attention heatmaps for a representative complex (6xq0\_1P) across the four $\alpha$ variants and all encoder blocks. When $\alpha{=}0.0$ (no cross-attention), attention maps remain diffuse with no structure. With cross-attention enabled, attention progressively concentrates on ground-truth epitope and paratope residues across deeper encoder blocks. Figure~\ref{fig:decoderattn} extends this analysis to the decoder, comparing decoder cross-attention maps and predicted interaction matrices against ground-truth contacts. All cross-attention variants produce focused decoder attention, but the $\alpha{=}0.05$ configuration yields the sharpest predicted interaction matrix.

\begin{table}[h!]
\centering
\small
\caption{MHCA gate initialization ablation (epitope-ratio split).}
\label{tab:mhca}
\begin{tabular}{lcccccc}
\toprule
\textbf{Alpha Init} & \textbf{AUC} & \textbf{AUPRC} & \textbf{F1} & \textbf{MCC} & \textbf{Prec.} & \textbf{$\Delta$ F1} \\
\midrule
0.0 (no MHCA) & 0.906 & 0.464 & 0.454 & 0.424 & 0.354 & $-0.042$ \\
\textbf{0.05 (base)} & \textbf{0.926} & \textbf{0.509} & \textbf{0.496} & \textbf{0.475} & \textbf{0.383} & --- \\
0.2 & 0.918 & 0.485 & 0.469 & 0.449 & 0.351 & $-0.027$ \\
0.4 & 0.919 & 0.490 & 0.484 & 0.457 & 0.383 & $-0.012$ \\
\bottomrule
\end{tabular}
\end{table}

\begin{figure}[h!]
\centering
\includegraphics[width=0.95\textwidth]{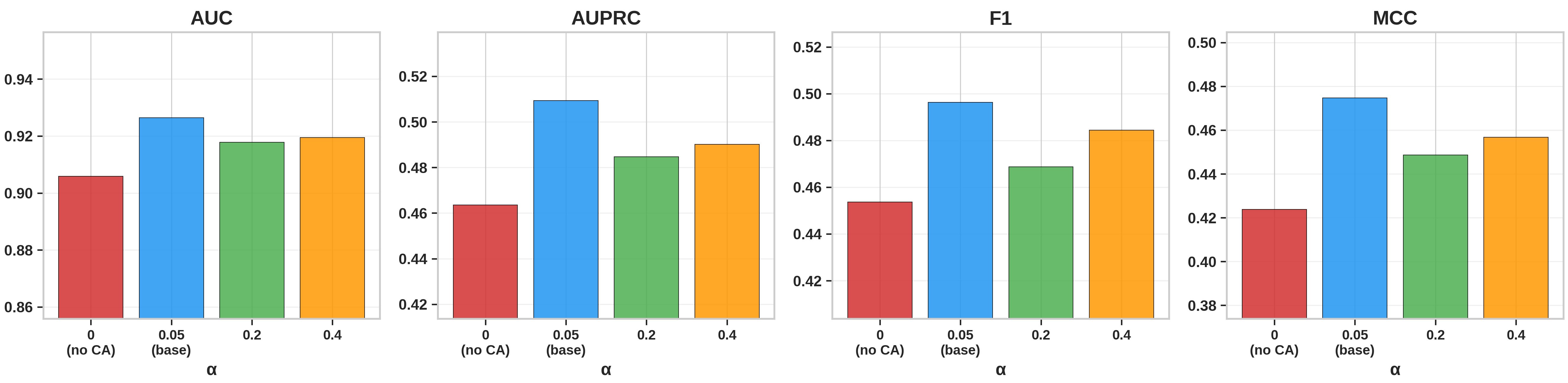}
\caption{Test set performance across four interleaved-MHCA gate initialization values ($\alpha \in \{0.0, 0.05, 0.2, 0.4\}$).}
\label{fig:perfcomp}
\end{figure}

\begin{figure}[h!]
\centering
\includegraphics[width=0.95\textwidth]{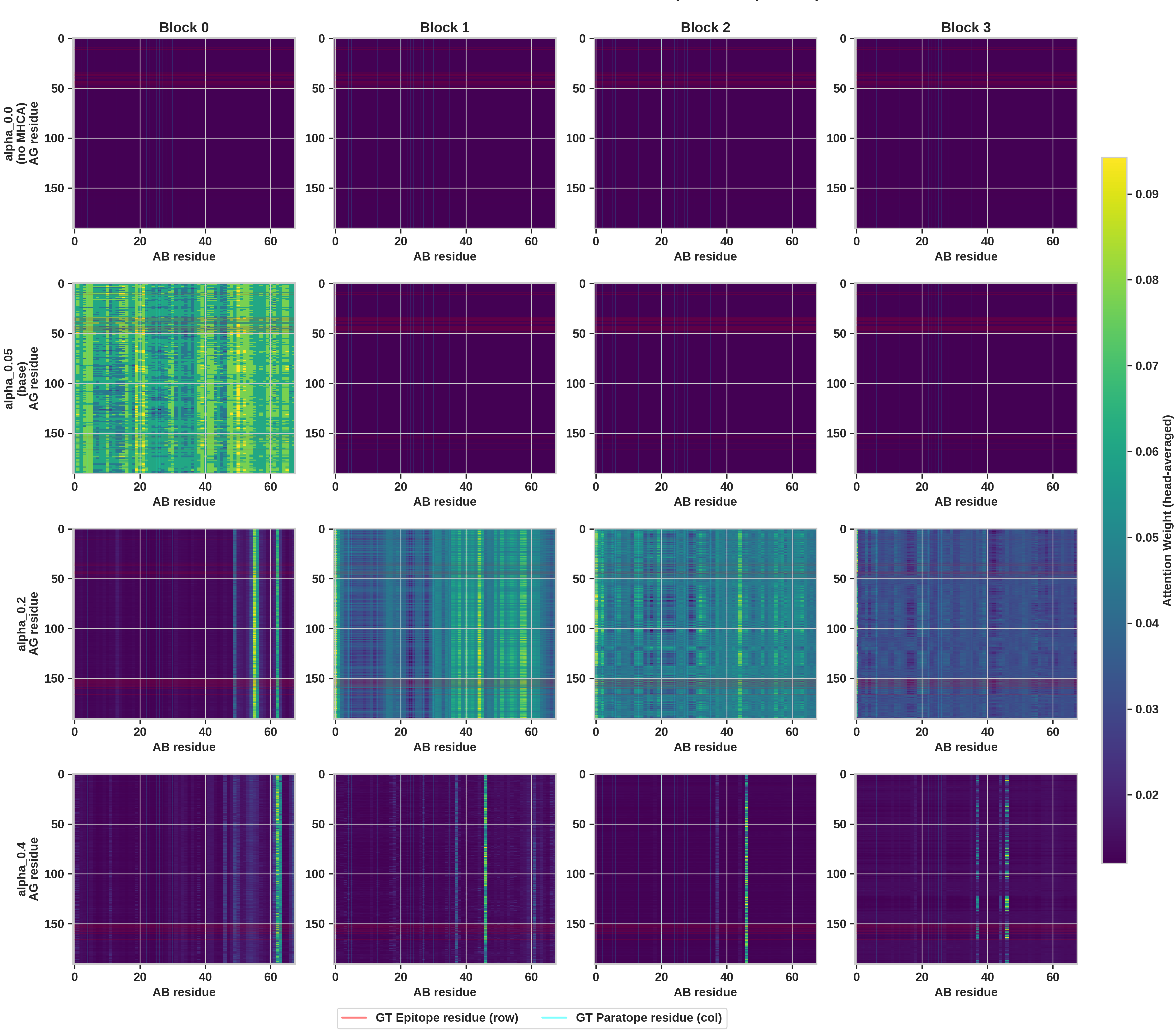}
\caption{Encoder AG$\to$AB cross-attention heatmaps for complex 6xq0\_1P. Rows: $\alpha_{\text{init}} \in \{0.0, 0.05, 0.2, 0.4\}$; columns: encoder blocks 0--3. Ground-truth epitope (red) and paratope (cyan) residues are overlaid.}
\label{fig:attnmaps}
\end{figure}

\begin{figure}[h!]
\centering
\includegraphics[width=0.95\textwidth]{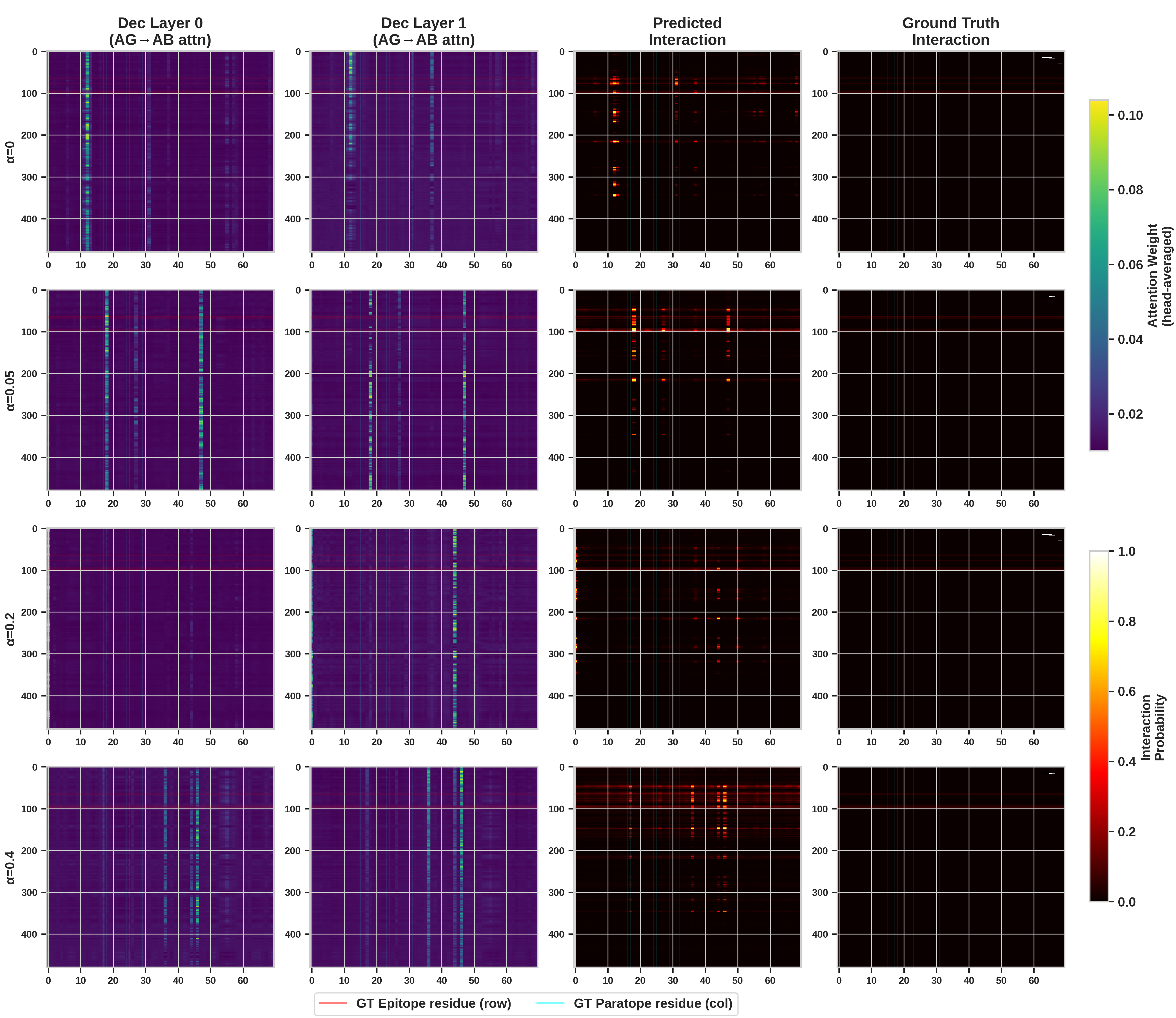}
\caption{Decoder cross-attention and predicted interaction maps vs.\ ground-truth contacts for complex 6xq0\_1P across four $\alpha_{\text{init}}$ variants. Columns: decoder layers 0--1, predicted interaction, ground truth.}
\label{fig:decoderattn}
\end{figure}

\subsection{Coordinate Refinement Analysis}\label{appendix:coord_refinement}

Figure~\ref{fig:paratope} shows per-block coordinate displacement magnitude across the EGNN-R encoder blocks. Displacements decrease across blocks, confirming convergent geometric refinement. Epitope residues undergo consistently larger positional adjustments than non-epitope residues, and paratope residues show a similar pattern on the antibody side. The difference is statistically significant at Block~3 ($p = 1.80 \times 10^{-7}$, Wilcoxon rank-sum test).

\begin{figure}[h!]
\centering
\includegraphics[width=0.95\textwidth]{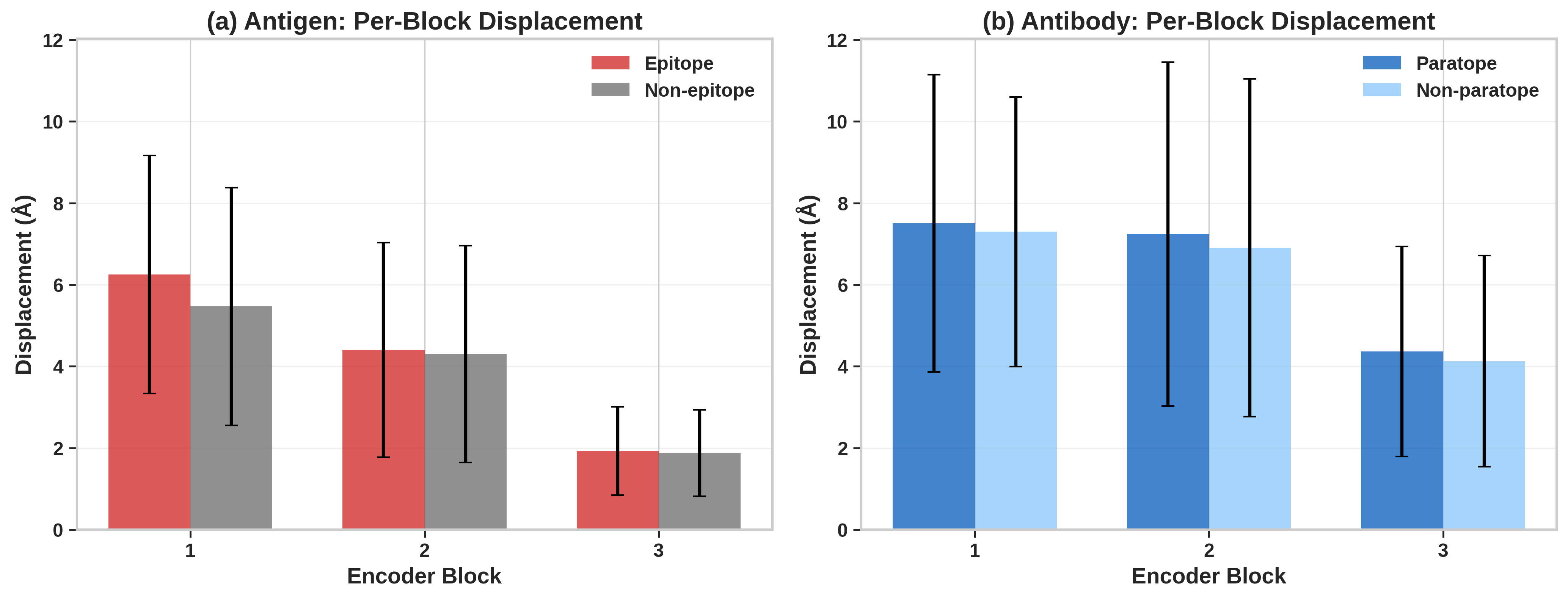}
\caption{Per-block coordinate displacement magnitude across EGNN-R encoder blocks 1--3 (block 0 has no prior coordinates to compare against). (a) Antigen: epitope residues undergo larger displacements than non-epitope residues. (b) Antibody: paratope residues show a similar pattern. Displacements decrease across blocks, indicating convergent refinement.}
\label{fig:paratope}
\end{figure}

\subsection{Resolution Stratification}\label{appendix:resolution}

Table~\ref{tab:resolution} stratifies performance by X-ray crystallographic resolution. EpiFormer maintains strong AUC and F1 across all resolution bins with no catastrophic degradation at low resolution ($>3.0$\,\AA). Figure~\ref{fig:resolution} shows the full per-complex distributions; EpiFormer's advantage over MIPE and WALLE is consistent across all resolution bins and metrics.

\begin{table}[h!]
\centering
\scriptsize
\caption{Performance stratified by X-ray resolution on the epitope-ratio test set (per-complex mean $\pm$ std).}
\label{tab:resolution}
\begin{tabular}{llccccc}
\toprule
\textbf{Resolution} & \textbf{N} & \textbf{Method} & \textbf{AUC} & \textbf{F1} & \textbf{MCC} & \textbf{Rec.} \\
\midrule
\multirow{3}{*}{$<2.0$\,\AA} & 17 & \textbf{EpiFormer} & \textbf{.908$\pm$.071} & \textbf{.622$\pm$.238} & \textbf{.553$\pm$.265} & .721$\pm$.269 \\
& 3 & MIPE & .515$\pm$.212 & .119$\pm$.205 & .011$\pm$.203 & .107$\pm$.185 \\
& 17 & WALLE & .734$\pm$.204 & .372$\pm$.146 & .207$\pm$.209 & \textbf{.964$\pm$.090} \\
\midrule
\multirow{3}{*}{2.0--3.0\,\AA} & 75 & \textbf{EpiFormer} & \textbf{.898$\pm$.080} & \textbf{.484$\pm$.235} & \textbf{.434$\pm$.243} & .553$\pm$.291 \\
& 6 & MIPE & .674$\pm$.221 & .163$\pm$.398 & .145$\pm$.406 & .167$\pm$.408 \\
& 75 & WALLE & .719$\pm$.170 & .238$\pm$.109 & .137$\pm$.119 & \textbf{.915$\pm$.146} \\
\midrule
\multirow{3}{*}{$>3.0$\,\AA} & 78 & \textbf{EpiFormer} & \textbf{.916$\pm$.070} & \textbf{.464$\pm$.206} & \textbf{.437$\pm$.211} & .565$\pm$.271 \\
& 3 & MIPE & .803$\pm$.128 & .000$\pm$.000 & .010$\pm$.011 & .000$\pm$.000 \\
& 78 & WALLE & .748$\pm$.172 & .202$\pm$.096 & .162$\pm$.121 & \textbf{.932$\pm$.167} \\
\bottomrule
\end{tabular}
\end{table}

\begin{figure}[h!]
\centering
\includegraphics[width=0.95\textwidth]{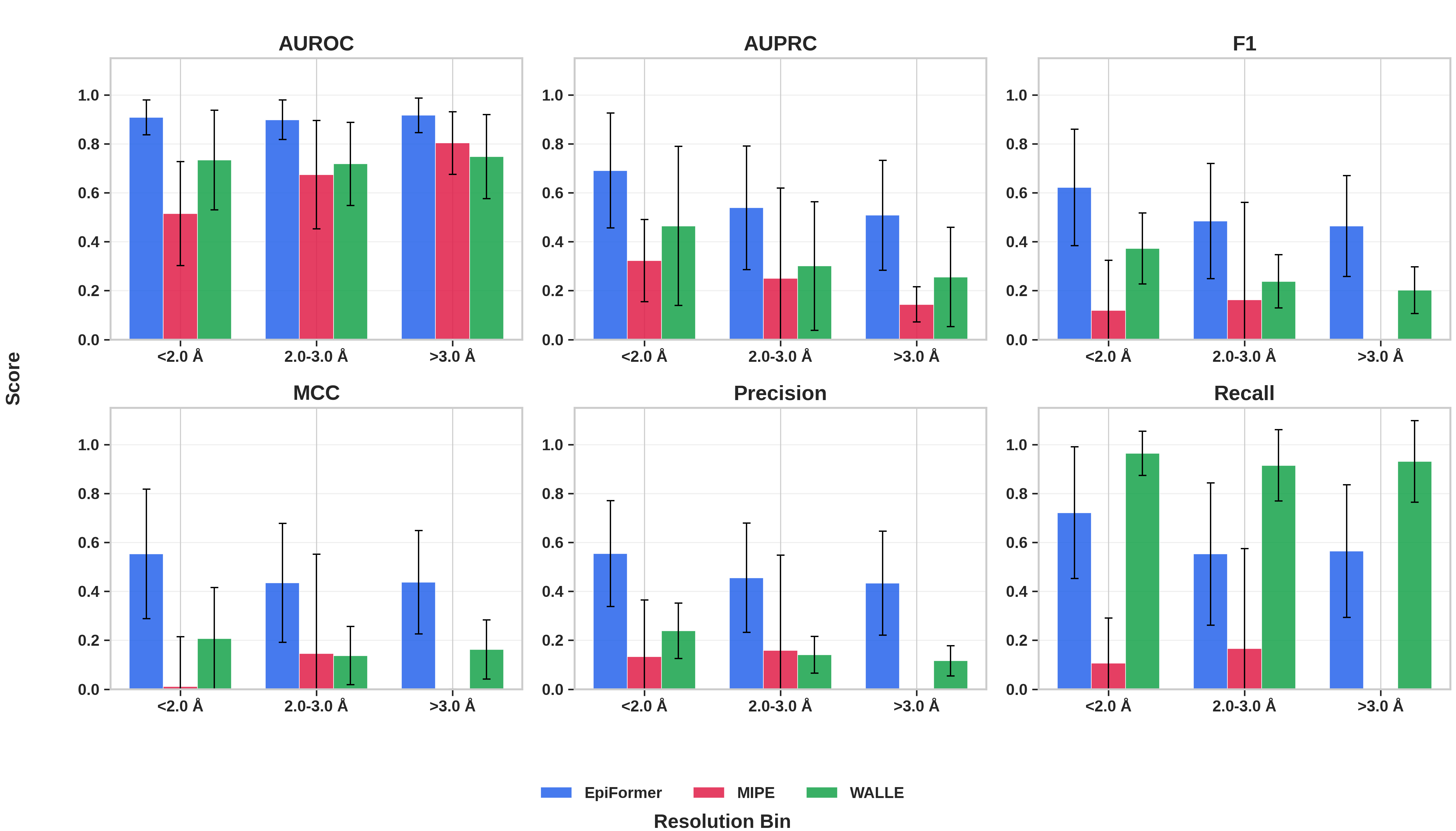}
\caption{Per-complex metric distributions (AUROC, AUPRC, F1, MCC, Precision, Recall) stratified by X-ray resolution for EpiFormer, MIPE, and WALLE.}
\label{fig:resolution}
\end{figure}

\subsection{Antigen Size Stratification}\label{appendix:sizestrat}

Table~\ref{tab:sizestrat} stratifies performance by antigen surface residue count. EpiFormer's AUC improves with antigen size, while F1 declines due to increasing class imbalance (the mean epitope ratio drops from 13.9\% in small antigens to 2.2\% in large ones). MIPE collapses entirely for medium and large antigens. Figure~\ref{fig:sizestrat} shows the full per-complex distributions across all six metrics.

\begin{table}[h!]
\centering
\scriptsize
\caption{Performance stratified by antigen surface residue count on the epitope-ratio test set (per-complex mean $\pm$ std).}
\label{tab:sizestrat}
\begin{tabular}{llcccc}
\toprule
\textbf{Size Bin} & \textbf{N} & \textbf{Epi.\ Ratio} & \textbf{Method} & \textbf{AUC} & \textbf{F1} \\
\midrule
\multirow{3}{*}{$<200$ (small)} & \multirow{3}{*}{90} & \multirow{3}{*}{13.9\%} & \textbf{EpiFormer} & \textbf{.901$\pm$.081} & \textbf{.568$\pm$.202} \\
& & & WALLE & .707$\pm$.180 & .293$\pm$.114 \\
& & & MIPE & .630$\pm$.241 & .222$\pm$.396 \\
\midrule
\multirow{3}{*}{200--500 (med)} & \multirow{3}{*}{62} & \multirow{3}{*}{6.6\%} & \textbf{EpiFormer} & \textbf{.905$\pm$.068} & \textbf{.412$\pm$.206} \\
& & & WALLE & .730$\pm$.156 & .167$\pm$.078 \\
& & & MIPE & .561$\pm$.145 & .000$\pm$.000 \\
\midrule
\multirow{3}{*}{$>500$ (large)} & \multirow{3}{*}{18} & \multirow{3}{*}{2.2\%} & \textbf{EpiFormer} & \textbf{.953$\pm$.047} & \textbf{.356$\pm$.260} \\
& & & WALLE & .876$\pm$.136 & .175$\pm$.090 \\
& & & MIPE & .845$\pm$.109 & .000$\pm$.000 \\
\bottomrule
\end{tabular}
\end{table}

\begin{figure}[h!]
\centering
\includegraphics[width=0.95\textwidth]{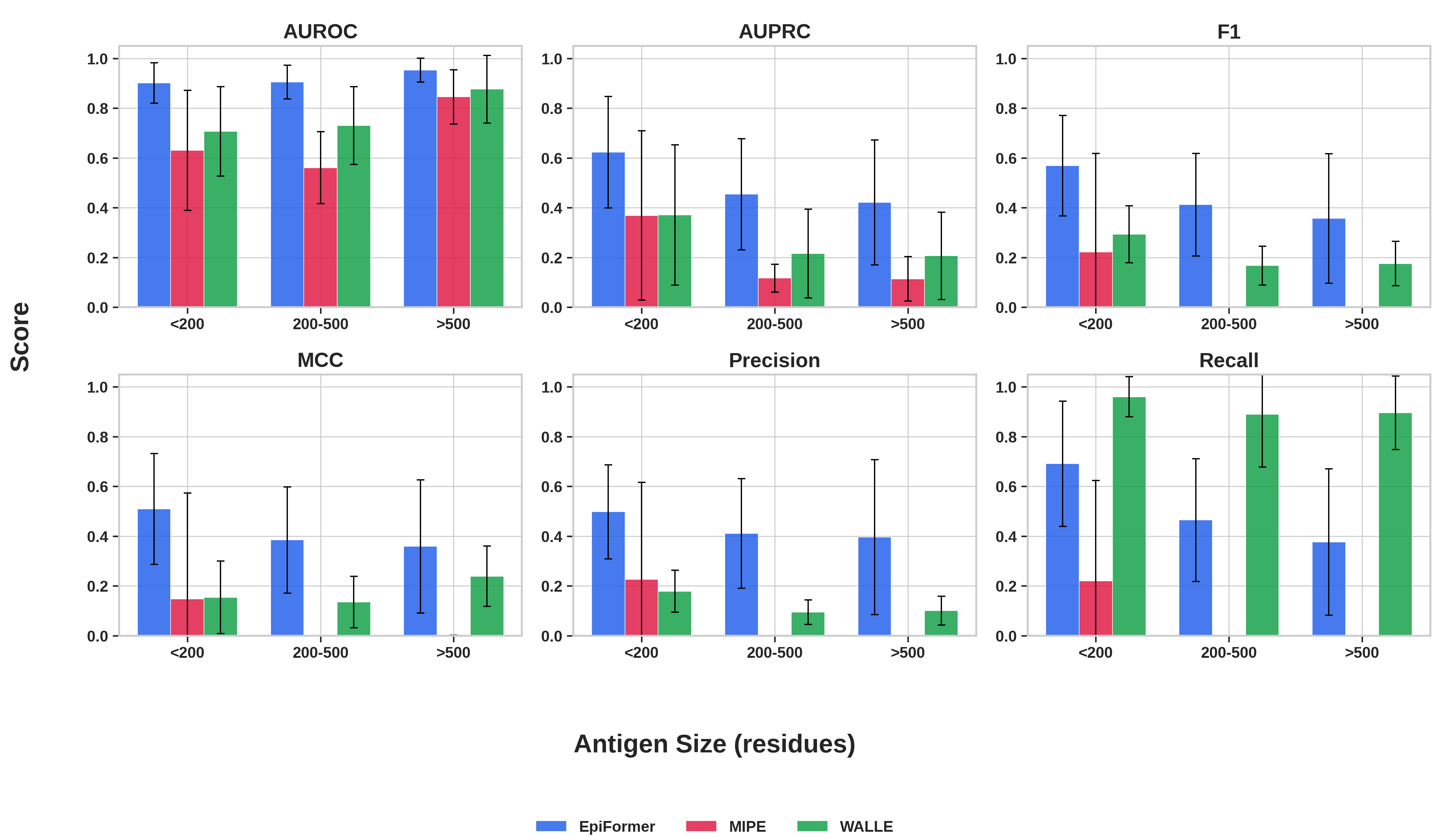}
\caption{Per-complex metric distributions (AUROC, AUPRC, F1, MCC, Precision, Recall) stratified by antigen size for EpiFormer, MIPE, and WALLE.}
\label{fig:sizestrat}
\end{figure}

\subsection{Per-Complex Metric Distributions}\label{appendix:persample}

Figure~\ref{fig:persample} shows per-complex metric distributions on the epitope-ratio test set. EpiFormer achieves higher medians and tighter interquartile ranges than MIPE and WALLE across all six metrics, indicating more consistent predictions. WALLE exhibits high recall but low precision, confirming its tendency to over-predict epitope residues.

\begin{figure}[h!]
\centering
\includegraphics[width=0.95\textwidth]{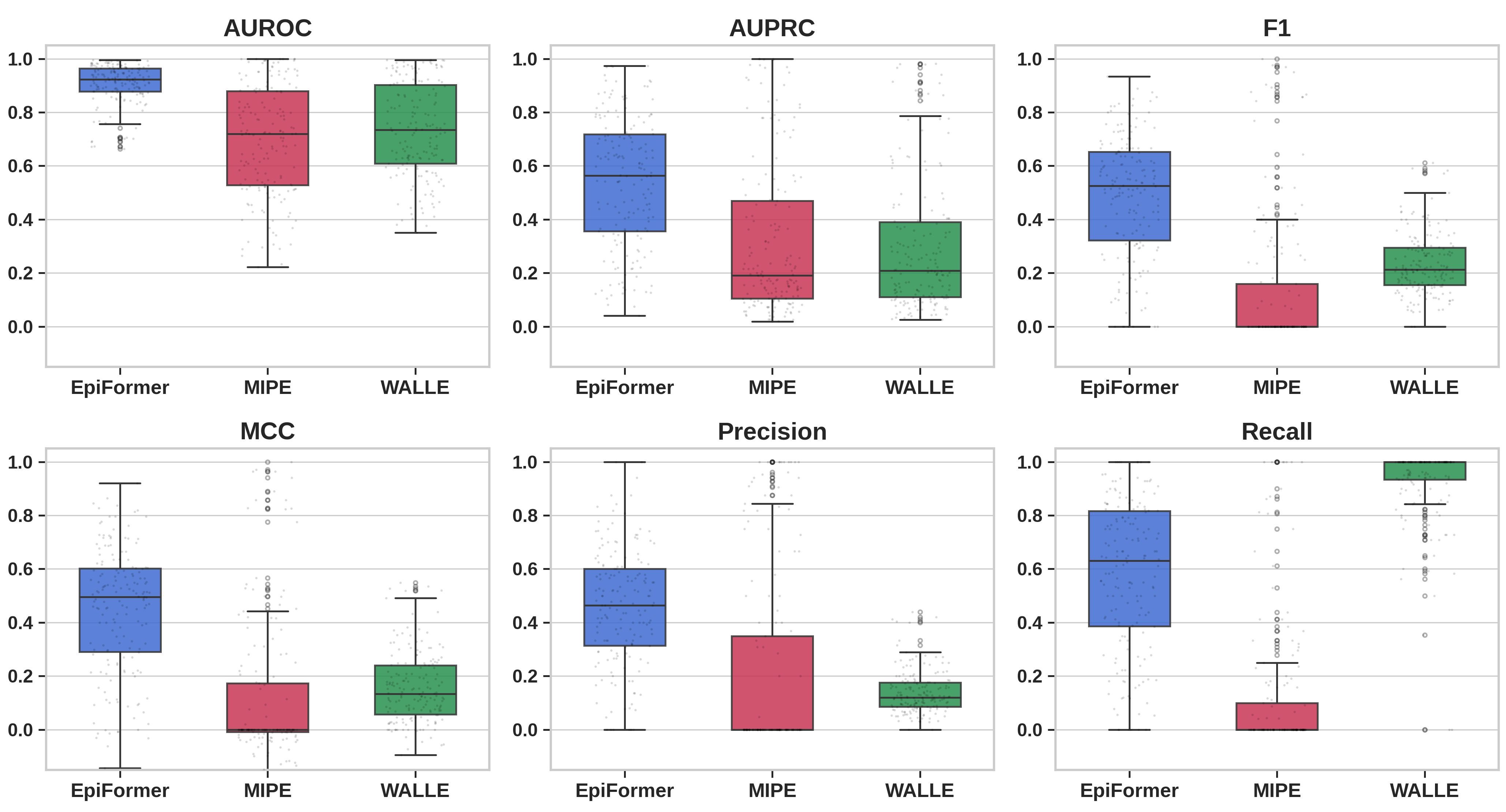}
\caption{Per-complex metric distributions on the epitope-ratio test set (170 complexes) for EpiFormer, MIPE, and WALLE.}
\label{fig:persample}
\end{figure}

\subsection{Distribution Shift Analysis}\label{appendix:distshift}

\begin{figure}[h!]
\centering
\includegraphics[width=0.75\textwidth]{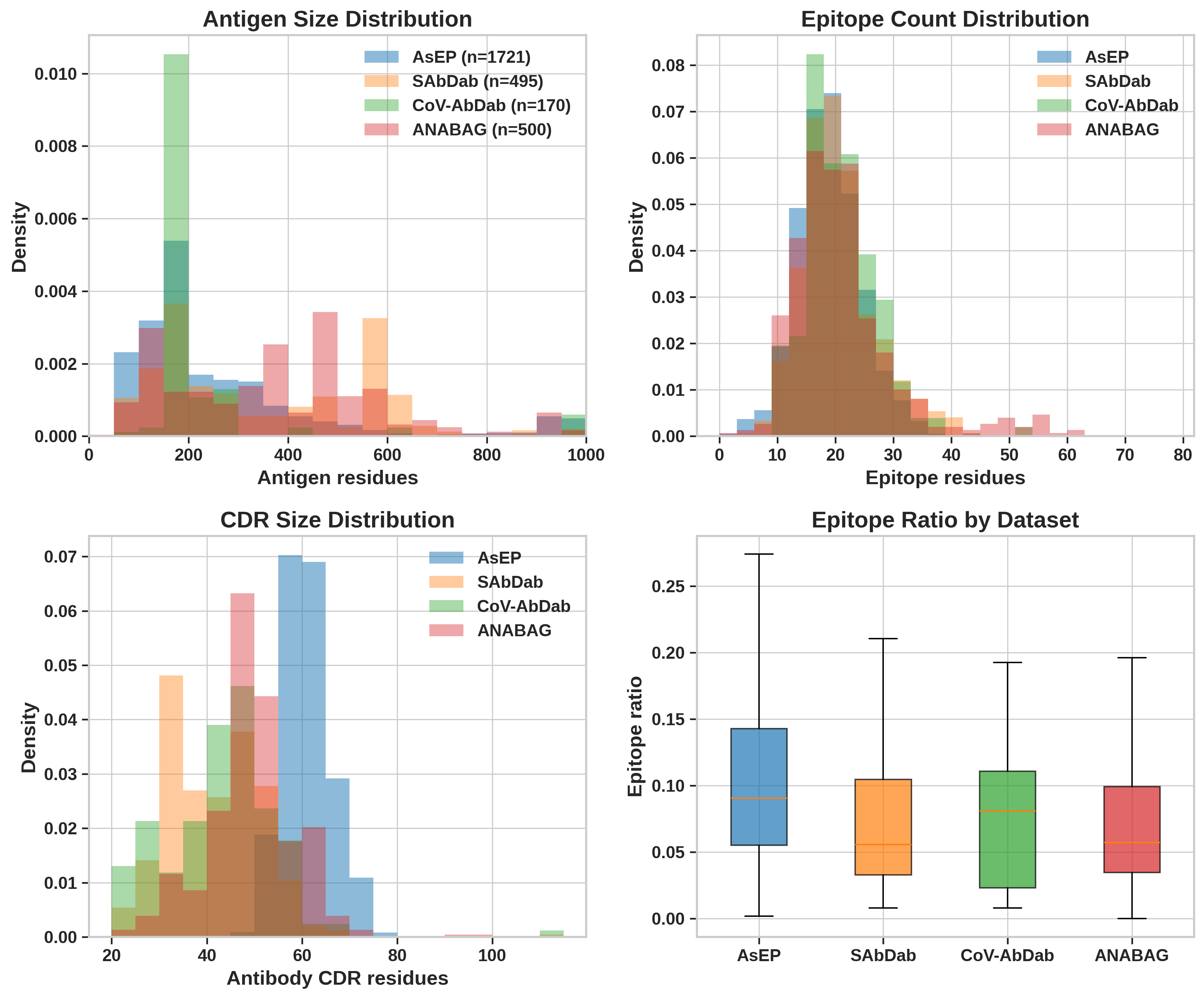}
\caption{Distribution shift between AsEP and three external benchmarks across (a) antigen size, (b) epitope count, (c) CDR size, and (d) epitope ratio. These differences explain the zero-shot calibration gap recovered by LODO fine-tuning.}
\label{fig:distshift}
\end{figure}

\subsection{Limitations}\label{appendix:limitations}

\textbf{Limitations.} While \emph{EpiFormer} maintains strong AUROC across all antigen sizes, metrics like AUPRC show more variation for very large antigens ($>$500 residues), where class imbalance is most severe.
Future work could explore SE(3)-equivariant alternatives~\citep{fuchs2020se3}, self-supervised pretraining~\citep{zhang2022gearnet}, and sequence-only variants using predicted structures.

\newpage

\section*{NeurIPS Paper Checklist}

\begin{enumerate}

\item {\bf Claims}
    \item[] Question: Do the main claims made in the abstract and introduction accurately reflect the paper's contributions and scope?
    \item[] Answer: \answerYes{}
    \item[] Justification: The abstract and introduction state failure modes.
    \item[] Guidelines:
    \begin{itemize}
        \item The answer \answerNA{} means that the abstract and introduction do not include the claims made in the paper.
        \item The abstract and/or introduction should clearly state the claims made, including the contributions made in the paper and important assumptions and limitations. A \answerNo{} or \answerNA{} answer to this question will not be perceived well by the reviewers. 
        \item The claims made should match theoretical and experimental results, and reflect how much the results can be expected to generalize to other settings. 
        \item It is fine to include aspirational goals as motivation as long as it is clear that these goals are not attained by the paper. 
    \end{itemize}

\item {\bf Limitations}
    \item[] Question: Does the paper discuss the limitations of the work performed by the authors?
    \item[] Answer: \answerYes{}
    \item[] Justification: Limitations are discussed in the Appendix.
    \item[] Guidelines:
    \begin{itemize}
        \item The answer \answerNA{} means that the paper has no limitation while the answer \answerNo{} means that the paper has limitations, but those are not discussed in the paper. 
        \item The authors are encouraged to create a separate ``Limitations'' section in their paper.
        \item The paper should point out any strong assumptions and how robust the results are to violations of these assumptions (e.g., independence assumptions, noiseless settings, model well-specification, asymptotic approximations only holding locally). The authors should reflect on how these assumptions might be violated in practice and what the implications would be.
        \item The authors should reflect on the scope of the claims made, e.g., if the approach was only tested on a few datasets or with a few runs. In general, empirical results often depend on implicit assumptions, which should be articulated.
        \item The authors should reflect on the factors that influence the performance of the approach. For example, a facial recognition algorithm may perform poorly when image resolution is low or images are taken in low lighting. Or a speech-to-text system might not be used reliably to provide closed captions for online lectures because it fails to handle technical jargon.
        \item The authors should discuss the computational efficiency of the proposed algorithms and how they scale with dataset size.
        \item If applicable, the authors should discuss possible limitations of their approach to address problems of privacy and fairness.
        \item While the authors might fear that complete honesty about limitations might be used by reviewers as grounds for rejection, a worse outcome might be that reviewers discover limitations that aren't acknowledged in the paper. The authors should use their best judgment and recognize that individual actions in favor of transparency play an important role in developing norms that preserve the integrity of the community. Reviewers will be specifically instructed to not penalize honesty concerning limitations.
    \end{itemize}

\item {\bf Theory assumptions and proofs}
    \item[] Question: For each theoretical result, does the paper provide the full set of assumptions and a complete (and correct) proof?
    \item[] Answer: \answerYes{}
    \item[] Justification: Full proof the equivariance is provided in the Appendix.
    \item[] Guidelines:
    \begin{itemize}
        \item The answer \answerNA{} means that the paper does not include theoretical results. 
        \item All the theorems, formulas, and proofs in the paper should be numbered and cross-referenced.
        \item All assumptions should be clearly stated or referenced in the statement of any theorems.
        \item The proofs can either appear in the main paper or the supplemental material, but if they appear in the supplemental material, the authors are encouraged to provide a short proof sketch to provide intuition. 
        \item Inversely, any informal proof provided in the core of the paper should be complemented by formal proofs provided in appendix or supplemental material.
        \item Theorems and Lemmas that the proof relies upon should be properly referenced. 
    \end{itemize}

    \item {\bf Experimental result reproducibility}
    \item[] Question: Does the paper fully disclose all the information needed to reproduce the main experimental results of the paper to the extent that it affects the main claims and/or conclusions of the paper (regardless of whether the code and data are provided or not)?
    \item[] Answer: \answerYes{}
    \item[] Justification: Experiments section describes the dataset, splits, and evaluation protocol. Appendix provides full architectural details and the algorithm. 
    \item[] Guidelines:
    \begin{itemize}
        \item The answer \answerNA{} means that the paper does not include experiments.
        \item If the paper includes experiments, a \answerNo{} answer to this question will not be perceived well by the reviewers: Making the paper reproducible is important, regardless of whether the code and data are provided or not.
        \item If the contribution is a dataset and\slash or model, the authors should describe the steps taken to make their results reproducible or verifiable. 
        \item Depending on the contribution, reproducibility can be accomplished in various ways. For example, if the contribution is a novel architecture, describing the architecture fully might suffice, or if the contribution is a specific model and empirical evaluation, it may be necessary to either make it possible for others to replicate the model with the same dataset, or provide access to the model. In general. releasing code and data is often one good way to accomplish this, but reproducibility can also be provided via detailed instructions for how to replicate the results, access to a hosted model (e.g., in the case of a large language model), releasing of a model checkpoint, or other means that are appropriate to the research performed.
        \item While NeurIPS does not require releasing code, the conference does require all submissions to provide some reasonable avenue for reproducibility, which may depend on the nature of the contribution. For example
        \begin{enumerate}
            \item If the contribution is primarily a new algorithm, the paper should make it clear how to reproduce that algorithm.
            \item If the contribution is primarily a new model architecture, the paper should describe the architecture clearly and fully.
            \item If the contribution is a new model (e.g., a large language model), then there should either be a way to access this model for reproducing the results or a way to reproduce the model (e.g., with an open-source dataset or instructions for how to construct the dataset).
            \item We recognize that reproducibility may be tricky in some cases, in which case authors are welcome to describe the particular way they provide for reproducibility. In the case of closed-source models, it may be that access to the model is limited in some way (e.g., to registered users), but it should be possible for other researchers to have some path to reproducing or verifying the results.
        \end{enumerate}
    \end{itemize}

\item {\bf Open access to data and code}
    \item[] Question: Does the paper provide open access to the data and code, with sufficient instructions to faithfully reproduce the main experimental results, as described in supplemental material?
    \item[] Answer: \answerYes{}
    \item[] Justification: The source code is provided as supplementary material during the review period for reproducibility. Code and data will be released publicly upon acceptance. 
    \item[] Guidelines:
    \begin{itemize}
        \item The answer \answerNA{} means that paper does not include experiments requiring code.
        \item Please see the NeurIPS code and data submission guidelines (\url{https://neurips.cc/public/guides/CodeSubmissionPolicy}) for more details.
        \item While we encourage the release of code and data, we understand that this might not be possible, so \answerNo{} is an acceptable answer. Papers cannot be rejected simply for not including code, unless this is central to the contribution (e.g., for a new open-source benchmark).
        \item The instructions should contain the exact command and environment needed to run to reproduce the results. See the NeurIPS code and data submission guidelines (\url{https://neurips.cc/public/guides/CodeSubmissionPolicy}) for more details.
        \item The authors should provide instructions on data access and preparation, including how to access the raw data, preprocessed data, intermediate data, and generated data, etc.
        \item The authors should provide scripts to reproduce all experimental results for the new proposed method and baselines. If only a subset of experiments are reproducible, they should state which ones are omitted from the script and why.
        \item At submission time, to preserve anonymity, the authors should release anonymized versions (if applicable).
        \item Providing as much information as possible in supplemental material (appended to the paper) is recommended, but including URLs to data and code is permitted.
    \end{itemize}

\item {\bf Experimental setting/details}
    \item[] Question: Does the paper specify all the training and test details (e.g., data splits, hyperparameters, how they were chosen, type of optimizer) necessary to understand the results?
    \item[] Answer: \answerYes{}
    \item[] Justification: Experiments section describes the benchmark, data splits, and evaluation metrics. 
    \item[] Guidelines:
    \begin{itemize}
        \item The answer \answerNA{} means that the paper does not include experiments.
        \item The experimental setting should be presented in the core of the paper to a level of detail that is necessary to appreciate the results and make sense of them.
        \item The full details can be provided either with the code, in appendix, or as supplemental material.
    \end{itemize}

\item {\bf Experiment statistical significance}
    \item[] Question: Does the paper report error bars suitably and correctly defined or other appropriate information about the statistical significance of the experiments?
    \item[] Answer: \answerYes{}
    \item[] Justification: The main results tables report mean $\pm$ standard deviation across test complexes for all metrics for random seeds. 
    \item[] Guidelines:
    \begin{itemize}
        \item The answer \answerNA{} means that the paper does not include experiments.
        \item The authors should answer \answerYes{} if the results are accompanied by error bars, confidence intervals, or statistical significance tests, at least for the experiments that support the main claims of the paper.
        \item The factors of variability that the error bars are capturing should be clearly stated (for example, train/test split, initialization, random drawing of some parameter, or overall run with given experimental conditions).
        \item The method for calculating the error bars should be explained (closed form formula, call to a library function, bootstrap, etc.)
        \item The assumptions made should be given (e.g., Normally distributed errors).
        \item It should be clear whether the error bar is the standard deviation or the standard error of the mean.
        \item It is OK to report 1-sigma error bars, but one should state it. The authors should preferably report a 2-sigma error bar than state that they have a 96\% CI, if the hypothesis of Normality of errors is not verified.
        \item For asymmetric distributions, the authors should be careful not to show in tables or figures symmetric error bars that would yield results that are out of range (e.g., negative error rates).
        \item If error bars are reported in tables or plots, the authors should explain in the text how they were calculated and reference the corresponding figures or tables in the text.
    \end{itemize}

\item {\bf Experiments compute resources}
    \item[] Question: For each experiment, does the paper provide sufficient information on the computer resources (type of compute workers, memory, time of execution) needed to reproduce the experiments?
    \item[] Answer: \answerYes{}
    \item[] Justification: Appendix reports the GPU type (single NVIDIA H100) and training details. 
    \item[] Guidelines:
    \begin{itemize}
        \item The answer \answerNA{} means that the paper does not include experiments.
        \item The paper should indicate the type of compute workers CPU or GPU, internal cluster, or cloud provider, including relevant memory and storage.
        \item The paper should provide the amount of compute required for each of the individual experimental runs as well as estimate the total compute. 
        \item The paper should disclose whether the full research project required more compute than the experiments reported in the paper (e.g., preliminary or failed experiments that didn't make it into the paper). 
    \end{itemize}
    
\item {\bf Code of ethics}
    \item[] Question: Does the research conducted in the paper conform, in every respect, with the NeurIPS Code of Ethics \url{https://neurips.cc/public/EthicsGuidelines}?
    \item[] Answer: \answerYes{}
    \item[] Justification: The research uses publicly available structural data from the Protein Data Bank and SAbDab, involves no human subjects, and conforms to the NeurIPS Code of Ethics.
    \item[] Guidelines:
    \begin{itemize}
        \item The answer \answerNA{} means that the authors have not reviewed the NeurIPS Code of Ethics.
        \item If the authors answer \answerNo, they should explain the special circumstances that require a deviation from the Code of Ethics.
        \item The authors should make sure to preserve anonymity (e.g., if there is a special consideration due to laws or regulations in their jurisdiction).
    \end{itemize}

\item {\bf Broader impacts}
    \item[] Question: Does the paper discuss both potential positive societal impacts and negative societal impacts of the work performed?
    \item[] Answer: \answerYes{}
    \item[] Justification: The paper acknowledges the positive impact of improved computational antibody design for therapeutic development.
    \item[] Guidelines:
    \begin{itemize}
        \item The answer \answerNA{} means that there is no societal impact of the work performed.
        \item If the authors answer \answerNA{} or \answerNo, they should explain why their work has no societal impact or why the paper does not address societal impact.
        \item Examples of negative societal impacts include potential malicious or unintended uses (e.g., disinformation, generating fake profiles, surveillance), fairness considerations (e.g., deployment of technologies that could make decisions that unfairly impact specific groups), privacy considerations, and security considerations.
        \item The conference expects that many papers will be foundational research and not tied to particular applications, let alone deployments. However, if there is a direct path to any negative applications, the authors should point it out. For example, it is legitimate to point out that an improvement in the quality of generative models could be used to generate Deepfakes for disinformation. On the other hand, it is not needed to point out that a generic algorithm for optimizing neural networks could enable people to train models that generate Deepfakes faster.
        \item The authors should consider possible harms that could arise when the technology is being used as intended and functioning correctly, harms that could arise when the technology is being used as intended but gives incorrect results, and harms following from (intentional or unintentional) misuse of the technology.
        \item If there are negative societal impacts, the authors could also discuss possible mitigation strategies (e.g., gated release of models, providing defenses in addition to attacks, mechanisms for monitoring misuse, mechanisms to monitor how a system learns from feedback over time, improving the efficiency and accessibility of ML).
    \end{itemize}
    
\item {\bf Safeguards}
    \item[] Question: Does the paper describe safeguards that have been put in place for responsible release of data or models that have a high risk for misuse (e.g., pre-trained language models, image generators, or scraped datasets)?
    \item[] Answer: \answerNA{}
    \item[] Justification: The model provides predictions about antigen binding sites that require extensive wet-lab validation before any practical use. 
    \item[] Guidelines:
    \begin{itemize}
        \item The answer \answerNA{} means that the paper poses no such risks.
        \item Released models that have a high risk for misuse or dual-use should be released with necessary safeguards to allow for controlled use of the model, for example by requiring that users adhere to usage guidelines or restrictions to access the model or implementing safety filters. 
        \item Datasets that have been scraped from the Internet could pose safety risks. The authors should describe how they avoided releasing unsafe images.
        \item We recognize that providing effective safeguards is challenging, and many papers do not require this, but we encourage authors to take this into account and make a best faith effort.
    \end{itemize}

\item {\bf Licenses for existing assets}
    \item[] Question: Are the creators or original owners of assets (e.g., code, data, models), used in the paper, properly credited and are the license and terms of use explicitly mentioned and properly respected?
    \item[] Answer: \answerYes{}
    \item[] Justification: All baseline methods and datasets are cited with their original publications. SAbDab, the Protein Data Bank, AsEP dataset, and all baseline implementations are cited.
    \item[] Guidelines:
    \begin{itemize}
        \item The answer \answerNA{} means that the paper does not use existing assets.
        \item The authors should cite the original paper that produced the code package or dataset.
        \item The authors should state which version of the asset is used and, if possible, include a URL.
        \item The name of the license (e.g., CC-BY 4.0) should be included for each asset.
        \item For scraped data from a particular source (e.g., website), the copyright and terms of service of that source should be provided.
        \item If assets are released, the license, copyright information, and terms of use in the package should be provided. For popular datasets, \url{paperswithcode.com/datasets} has curated licenses for some datasets. Their licensing guide can help determine the license of a dataset.
        \item For existing datasets that are re-packaged, both the original license and the license of the derived asset (if it has changed) should be provided.
        \item If this information is not available online, the authors are encouraged to reach out to the asset's creators.
    \end{itemize}

\item {\bf New assets}
    \item[] Question: Are new assets introduced in the paper well documented and is the documentation provided alongside the assets?
    \item[] Answer: \answerNA{}
    \item[] Justification: The paper introduces a new model but does not release new datasets or pre-trained model assets at submission time. Code and model weights will be released upon acceptance.
    \item[] Guidelines:
    \begin{itemize}
        \item The answer \answerNA{} means that the paper does not release new assets.
        \item Researchers should communicate the details of the dataset\slash code\slash model as part of their submissions via structured templates. This includes details about training, license, limitations, etc. 
        \item The paper should discuss whether and how consent was obtained from people whose asset is used.
        \item At submission time, remember to anonymize your assets (if applicable). You can either create an anonymized URL or include an anonymized zip file.
    \end{itemize}

\item {\bf Crowdsourcing and research with human subjects}
    \item[] Question: For crowdsourcing experiments and research with human subjects, does the paper include the full text of instructions given to participants and screenshots, if applicable, as well as details about compensation (if any)?
    \item[] Answer: \answerNA{}
    \item[] Justification: This work does not involve crowdsourcing or research with human subjects.
    \item[] Guidelines:
    \begin{itemize}
        \item The answer \answerNA{} means that the paper does not involve crowdsourcing nor research with human subjects.
        \item Including this information in the supplemental material is fine, but if the main contribution of the paper involves human subjects, then as much detail as possible should be included in the main paper. 
        \item According to the NeurIPS Code of Ethics, workers involved in data collection, curation, or other labor should be paid at least the minimum wage in the country of the data collector. 
    \end{itemize}

\item {\bf Institutional review board (IRB) approvals or equivalent for research with human subjects}
    \item[] Question: Does the paper describe potential risks incurred by study participants, whether such risks were disclosed to the subjects, and whether Institutional Review Board (IRB) approvals (or an equivalent approval/review based on the requirements of your country or institution) were obtained?
    \item[] Answer: \answerNA{}
    \item[] Justification: This work does not involve human subjects.
    \item[] Guidelines:
    \begin{itemize}
        \item The answer \answerNA{} means that the paper does not involve crowdsourcing nor research with human subjects.
        \item Depending on the country in which research is conducted, IRB approval (or equivalent) may be required for any human subjects research. If you obtained IRB approval, you should clearly state this in the paper. 
        \item We recognize that the procedures for this may vary significantly between institutions and locations, and we expect authors to adhere to the NeurIPS Code of Ethics and the guidelines for their institution. 
        \item For initial submissions, do not include any information that would break anonymity (if applicable), such as the institution conducting the review.
    \end{itemize}

\item {\bf Declaration of LLM usage}
    \item[] Question: Does the paper describe the usage of LLMs if it is an important, original, or non-standard component of the core methods in this research? Note that if the LLM is used only for writing, editing, or formatting purposes and does \emph{not} impact the core methodology, scientific rigor, or originality of the research, declaration is not required.
    %this research?
    \item[] Answer: \answerNA{}
    \item[] Justification: No LLMs are used as a core methodological contribution.
    \item[] Guidelines:
    \begin{itemize}
        \item The answer \answerNA{} means that the core method development in this research does not involve LLMs as any important, original, or non-standard components.
        \item Please refer to our LLM policy in the NeurIPS handbook for what should or should not be described.
    \end{itemize}

\end{enumerate}

\end{document}